\newcommand*\patchAmsMathEnvironmentForLineno[1]{%
  \expandafter\let\csname old#1\expandafter\endcsname\csname #1\endcsname
  \expandafter\let\csname oldend#1\expandafter\endcsname\csname end#1\endcsname
  \renewenvironment{#1}%
     {\linenomath\csname old#1\endcsname}%
     {\csname oldend#1\endcsname\endlinenomath}}%
\newcommand*\patchBothAmsMathEnvironmentsForLineno[1]{%
  \patchAmsMathEnvironmentForLineno{#1}%
  \patchAmsMathEnvironmentForLineno{#1*}}%
\newtheorem{theorem} {Theorem}
\newtheorem{lemma}[theorem] {Lemma}
\newtheorem{corollary}[theorem] {Corollary}
\newtheorem{observation}[theorem] {Observation}
\renewcommand{\subparagraph}[1]{\smallskip\noindent\textbf{\sffamily #1}}
\newcommand{\frechet}{Fr\'echet\xspace}
\newcommand{\etal}{\textnormal{et al.}\xspace}
\newcommand{\mkmcal}[1]{\ensuremath{\mathcal{#1}}\xspace}
\newcommand{\T}{\mkmcal{T}}
\newcommand{\D}{\mkmcal{D}}
\newcommand{\F}{\mkmcal{F}}
\newcommand{\B}{\mkmcal{B}}
\newcommand{\DB}{\mkmcal{D}_B}
\newcommand{\xdistance}{\delta}
\newcommand{\fd}{\ensuremath{\D_\F}\xspace}
\newcommand{\dhd}{\ensuremath{\overrightarrow{\D}_{\!H}\xspace}}
\newcommand{\from}{\colon}
\newcommand{\Skew}{\ensuremath{P^{\le}}}
\newcommand{\CH}{\ensuremath{\mathit{CH}}\xspace}
\newcommand{\FSVD}{\mathit{FSVD}\xspace}
\newcommand{\mkmbb}[1]{\ensuremath{\mathbb{#1}}\xspace}
\newcommand{\LL}[1]{\ensuremath{\protect\overleftarrow{#1}}\xspace}
\newcommand{\RR}[1]{\ensuremath{\protect\overrightarrow{#1}}\xspace}
\newcommand{\LLa}[1]{\ensuremath{\overset{\leftarrow\,\alpha\vspace{-1ex}}{#1}}\xspace}
\newcommand{\RRa}[1]{\ensuremath{\overset{\rightarrow\,\alpha}{#1}}\xspace}
\newcommand{\R}{\mkmbb{R}}
\newcommand{\eps}{\ensuremath{\varepsilon}\xspace}
\newcommand{\thmheadfont}{\bfseries}
\newenvironment{repeatenv}[2]%
  {\smallskip\noindent {\thmheadfont #1~\ref{#2}.}\ \slshape}
  {\normalfont}
\title{Efficient \frechet distance queries for segments}
\author{Maike Buchin\thanks{
    Ruhr University Bochum, Bochum, Germany, \texttt{maike.buchin@rub.de}
  }
  \and Ivor van der Hoog\thanks{
    Utrecht University, Utrecht, The Netherlands \texttt{[i.d.vanderhoog,t.a.e.ophelders,f.staals]@uu.nl}
  }
  \and Tim Ophelders\footnotemark[2]~\thanks{
    TU Eindhoven, Eindhoven, The Netherlands
  }
  \and Lena Schlipf\thanks{
    Universit\"at T\"ubingen, T\"ubingen, Germany, \texttt{schlipf@informatik.uni-tuebingen.de}
  }
  \and Rodrigo I. Silveira\thanks{
    Universitat Polit\`{e}cnica de Catalunya, Barcelona, Spain \texttt{rodrigo.silveira@upc.edu}
  }
  \and Frank Staals\footnotemark[2]
}
\begin{document}

\maketitle

\begin{abstract}
We study the problem of constructing a data structure that can store a two-dimensional polygonal curve $P$, such that for any query segment $\overline{ab}$ one can efficiently compute the Fr\'{e}chet distance between $P$ and $\overline{ab}$. First we present a data structure of size $O(n \log n)$ that can compute the Fr\'{e}chet distance between $P$ and a horizontal query segment $\overline{ab}$ in $O(\log n)$ time, where $n$ is the number of vertices of $P$. In comparison to prior work, this significantly reduces the required space. We extend the type of queries allowed, as we allow a query to be a horizontal segment $\overline{ab}$ together with two points $s, t \in P$ (not necessarily vertices), and ask for the Fr\'{e}chet distance between $\overline{ab}$ and the curve of $P$ in between $s$ and $t$. Using $O(n\log^2n)$ storage, such queries take $O(\log^3 n)$ time, simplifying and significantly improving previous results. We then generalize our results to query segments of arbitrary orientation. We present an $O(nk^{3+\varepsilon}+n^2)$ size data structure, where $k \in [1..n]$ is a parameter the user can choose, and $\varepsilon > 0$ is an arbitrarily small constant, such that given any segment $\overline{ab}$ and two points $s, t \in P$ we can compute the Fr\'{e}chet distance between $\overline{ab}$ and the curve of $P$ in between $s$ and $t$ in $O((n/k)\log^2n+\log^4 n)$ time. This is the first result that allows efficient exact Fr\'{e}chet distance queries for arbitrarily oriented segments.

We also present two applications of our data structure: we show that we can compute a local $\delta$-simplification (with respect to the Fr\'{e}chet distance) of a polygonal curve in $O(n^{5/2+\varepsilon})$ time, and that we can efficiently find a translation of an arbitrary query segment $\overline{ab}$ that minimizes the Fr\'{e}chet distance with respect to a subcurve of $P$.
\end{abstract}
\clearpage
\setcounter{page}{1}

\section{Introduction}
\label{sec:Introduction}

%\maike{I changed to using 'curve' rather than 'trajectory' throughout the paper, as the latter is typically used for time-varying data, whereas we consider polygonal curves more generally}

%% motivation
Comparing the shape of polygonal curves is an important task that
arises in many contexts such as GIS
applications~\cite{alt2003matching,buchin2013median}, protein
classification~\cite{jiang2008protein}, curve
simplification~\cite{buchin2011detecting}, curve
clustering~\cite{agarwal2005near} and even speech
recognition~\cite{kwong1998parallel}. Within computational geometry,
there are two  well studied distance measures for polygonal curves: the
Hausdorff and the Fr\'{e}chet distance.  
The \frechet distance has proven particularly 
useful %\maike{perhaps add references} 
as it takes the course of the curves into account. 
However, the \frechet distance between curves is costly to compute, as its computation requires roughly quadratic time~\cite{ag-cfdbt-95,buchin17four_soviet_walk_dog}. 
When a large number of \frechet distance queries are required, we
would like to have a data structure, a so-called \emph{distance oracle}, to answer these queries more
efficiently.

This leads to a fundamental data structuring problem: preprocess a polygonal curve such that, given a query polygonal curve, their Fréchet distance can be computed efficiently.
(Here, the query curves are assumed to be of small size compared to the preprocessed one.)
It turns out that this problem is extremely challenging. 
Firstly, polygonal curves are complex objects of non-constant complexity, where most operations have a non-negligible cost.
Secondly, the high computational cost of the Fréchet distance complicates all attempts to design exact algorithms for the problem.
Indeed, even though great efforts have been devoted to design data structures to answer Fréchet distance queries, there is still no distance oracle known that is able to answer \emph{exactly} queries for a general query curve.

To make progress on this important problem, in this work we focus on a more restrictive but fundamental setting: when the query curve is a single segment. 
The reasons to study this variant of the problem are twofold.
On the one hand, it is a necessary step to solve the general problem.
On the other hand, it is a setting that has its own applications.  For example, in trajectory simplification, or when trying to find subtrajectories that are geometrically close to a given query segment (e.g. when computing shortcut-variants of the \frechet distance~\cite{driemel13:jaywalk}, or in    
trajectory analysis~\cite{de2013fast} on sports data). A similar strategy of tackling segment queries has also been successfully applied in nearest neighbor queries with the \frechet distance~\cite{aronov2019:nearest_neigbhbor_queries_planar_curves}.

% It arises when we are given a long curve $P$, and we frequently query the distance of (sub)curves of $P$ to different query segments. For instance, the curve $P$ could be a (long) trajectory from some application domain, and we are querying for subpaths (simplified as segments) taken by the entity.

%% problem statement
More precisely, we study the problem of preprocessing a polygonal curve $P$ %of $n$ vertices in the plane 
to determine the \emph{exact} \emph{continuous} \frechet distance between $P$ and a query segment in
sublinear time. Specifically, we study preprocessing a polygonal curve $P$ of $n$ vertices in the plane, such that given a query segment $\overline{ab}$, traversed from $a$ to~$b$, the Fr\'{e}chet distance between $P$ and $\overline{ab}$ can be computed in sublinear time. 
Note that without preprocessing, this problem can be solved in  $O(n\log n)$ time. %, or even slightly faster ($O(n\sqrt{\log n}(\log\log n)^{3/2})$ expected time~\cite{buchin17four_soviet_walk_dog}). \frank{not sure that this even applies?}  
%Thus we ask for more efficient, specifically sublinear, query time. 

\subparagraph{Related work.} Data structures that support
(approximate) nearest neighbor queries with respect to the \frechet
distance have received considerable attention throughout the
years, see for instance, these recent papers~\cite{aronov2019:nearest_neigbhbor_queries_planar_curves,driemelpsarros2020,filtser2020approximateICALP} and the references therein. 
%~\cite{ad-18,aronov2019:nearest_neigbhbor_queries_planar_curves,deBerg:2017:proximity_queries,emiris2017products,indyk-02}. 
In these problems, the goal is typically to store a set of polygonal
curves such that given a query curve and a query threshold $\Delta$
one can quickly report (or count) the curves that are within (discrete) \frechet distance $(1+\eps)\Delta$, for some $\eps > 0$, of the query curve. Some of these data structures even allow
approximately counting the number of curves that have a subcurve
within \frechet distance $\Delta$~\cite{de2013fast}. Also highlighting its practical importance, the near neighbor problem using \frechet distance was posed as ACM Sigspatial GIS Cup in 2017~\cite{gis-cup-18}.

%\frank{not sure what to do about the GIS-cup result:}
% Recently, the nearest neighbor problem using \frechet distance was
% posed as GIS Cup in 2017~\cite{gis-cup-18}. The three
% winners~\cite{bb-17, bddm-17, dv-17}, all computed distances to a
% subset of the input curve for approximations.

%However, the above results mostly %all (but \cite{ad-18}) consider approximating the \frechet distance and many of these use the discrete \frechet distance.
Here we consider the problem in which we want to compute the \frechet
distance of (part of) a curve to a low complexity \emph{query} curve. For the
\emph{discrete} \frechet distance, efficient $(1+\eps)$-approximate
distance oracles are known, even when $P$ is given in an online
fashion~\cite{filtser2021static}. For the \emph{continuous} \frechet
distance that we consider the results are more restrictive.
% When there is only a single curve $P$ with $n$ vertices,
Driemel and Har-Peled~\cite{driemel13:jaywalk} present an
$O(n \eps^{-4} \log \eps^{-1})$ size data structure that given a query
segment $\overline{ab}$ can compute a $(1+\eps)$-approximation of the
\frechet distance between $P$ and $\overline{ab}$ in
$O(\eps^{-2} \log n \log\log n)$ time. Their approach extends to
higher dimensions and low complexity polygonal query curves.

Gudmundsson~\etal~\cite{Gudmundsson:longJournal} present an
$O(n\log n)$ sized data structure that can \emph{decide} if the
\frechet distance to $\overline{ab}$ is smaller than a given value
$\Delta$ in $O(\log^2 n)$ time (so with some parametric search
approach one could consider computing the \frechet distance
itself). However, their result holds only when the length of
$\overline{ab}$ and all edges in $P$ is relatively large compared to
$\Delta$. De Berg~\etal~\cite{de2017data} presented an $O(n^2)$ size
data structure that does not have any restrictions on the length of
the query segment or the edges of $P$. However, the orientation of the
query segment is restricted to be horizontal. Queries are supported in
$O(\log^2 n)$ time, and even queries asking for the \frechet distance
to a vertex-to-vertex subcurve are allowed (in that case, using
$O(n^2\log^2 n)$ space). Recently,
Gudmundsson~\etal~\cite{gudmundssonfrechet2020} extended this result
to allow the subcurve to start and end anywhere within $P$. Their data
structure has size $O(n^2 \log^2 n)$ and queries take $O(\log^8 n)$
time. In their journal version,
Gudmundsson~\etal~\cite{gudmundssonFrechetJournal} directly apply the
main result of a preliminary version of this
paper~\cite{eurocgversion} to immediately improve space usage of their
data structure to $O(n^{3/2})$; their preprocessing time remains
$O(n^2\log^2n)$. The current version of this paper significantly
improves these results. Moreover, we present data structures that
allow for arbitrarily oriented query segments.

\subparagraph{Problem statement \& our results.} Let $P$ be a
polygonal curve in $\R^2$ with $n$ vertices $p_1,\dots,p_n$. For ease of
exposition, we assume that the vertices of $P$ are in general
position, i.e., all $x$- and $y$-coordinates are unique, no three
points lie on a line, and no four points are cocircular. We consider
$P$ as a function mapping any time $t \in [0,1]$ to a point $P(t)$ in
the plane. Our ultimate goal is to store $P$ such that we can quickly compute
the \emph{\frechet distance} $\fd(P,Q)$ between $P$ and a query curve
$Q$. The \frechet distance is defined as
\[ \fd(P,Q) = \inf_{\alpha,\beta}\max_{t \in [0,1]} \|P(\alpha(t))-Q(\beta(t))\|, \]
\noindent
where $\alpha,\beta\from[0,1]\to[0,1]$ are nondecreasing surjections,
also called reparameterizations of $P$ and $Q$, respectively, and
$\|p-q\|$ denotes the Euclidean distance between $p$ and $q$.

In this work we focus on the case where $Q$ is a single line segment $\overline{ab}$
starting at $a$ and ending at $b$.
Note that $P$ may self-intersect and $\overline{ab}$ may intersect $P$. 
Our first main result deals with the case where 
$\overline{ab}$ is horizontal:

\begin{restatable}{theorem}{horizontalfull}
  \label{thm:horizontal_full_curve_ds}
  Let $P$ be a polygonal curve in $\R^2$ with $n$ vertices. There is
  an $O(n\log n)$ size data structure that can be built in
  $O(n\log^2 n)$ time such that given a horizontal query segment
  $\overline{ab}$ it can report $\fd(P,\overline{ab})$ in $O(\log n)$
  time.
\end{restatable}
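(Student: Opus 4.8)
The plan is to reduce a \frechet distance query to a small number of one-dimensional queries by exploiting the structure of the free space diagram when the query segment $\overline{ab}$ is horizontal. Recall that $\fd(P,\overline{ab}) \le \Delta$ if and only if there is a monotone path from the bottom-left to the top-right corner of the free space diagram $\mkmcal{D}_{\le \Delta}$. Because $\overline{ab}$ is a single horizontal segment, we can think of the free space as living over the parameter interval of $P$, where for each point $p\in P$ the ``feasible'' set of parameters on $\overline{ab}$ is an interval (an intersection of a disk of radius $\Delta$ with a line). The key geometric quantities are: the distance from $a$ to the curve as we march along $P$, the distance from $b$ to the curve, and the vertical distance (the $y$-coordinate difference, since $\overline{ab}$ is horizontal) from $\overline{ab}$ to the curve. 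A horizontal segment makes the ``height'' component of the distance to $\overline{ab}$ simply $|y(p) - y_{ab}|$, which is what makes the one-dimensional reduction clean.

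First I would characterize $\fd(P,\overline{ab})$ as the maximum of three terms, following the standard analysis (as in de Berg~\etal~\cite{de2017data}): (i) $\|a - p_1\|$, the cost of matching the start of $\overline{ab}$ to the start of $P$; (ii) $\|b - p_n\|$, the cost of matching the end; and (iii) a ``width'' term governing how far the interior of $P$ is from the line supporting $\overline{ab}$, together with a monotonicity condition that ensures the matched intervals on $\overline{ab}$ can be traversed monotonically. The first two terms are trivial to evaluate in $O(1)$ time once $p_1,p_n$ are known. The third term is the crux: it is the smallest $\Delta$ such that every vertex of $P$ lies within vertical distance $\Delta$ of the line through $\overline{ab}$ \emph{and} lies within horizontal extent compatible with $[a,b]$, i.e. the $x$-projection onto $\overline{ab}$'s supporting line never forces a backward move. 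This can be expressed as a lower-envelope / upper-envelope computation over the edges of $P$: for each edge we get a function of $\Delta$ describing the feasible $x$-interval on $\overline{ab}$, and monotone reachability reduces to checking that a running maximum of left endpoints never exceeds the corresponding right endpoint.

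The data structure then is a balanced binary tree over the edges of $P$ (in order along the curve), where each node stores a compact summary of the partial free-space ``reachability transition'' for its subcurve, parameterized by $\Delta$ and by the entry interval on $\overline{ab}$. Since $\overline{ab}$ is horizontal and translating it only shifts coordinates, the summary at each node is a piecewise description (of $O(1)$ pieces per edge, hence $O(\text{subtree size})$ total, but after sorting/merging along the tree an $O(\log n)$-size searchable structure per node) of how the reachable interval on $\overline{ab}$ depends on $\Delta$; merging two children is a constant-complexity composition of these monotone transition functions. Building the tree bottom-up with a merge step that sorts breakpoints gives the $O(n\log^2 n)$ preprocessing (an extra log over the $O(n\log n)$ size from the sorting in merges, or from fractional-cascading setup). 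A query with segment $\overline{ab}$ then walks the root node, binary-searching on the $O(\log n)$-size summary to find the critical $\Delta$, combines it with the two endpoint terms $\|a-p_1\|$ and $\|b-p_n\|$, and returns the maximum — all in $O(\log n)$ time.

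The main obstacle I expect is controlling the combinatorial complexity of the per-node summary so that it stays $O(\log n)$ (or at least polylogarithmic) rather than proportional to the subtree size, which would blow up both space and query time. Naively, the transition function for a subcurve of $m$ edges could have $\Theta(m)$ breakpoints in $\Delta$; the trick must be that, for a \emph{horizontal} query segment, the relevant quantity per subcurve collapses to a constant number of extremal values — essentially the maximum and minimum $y$-coordinate in the subcurve, plus a couple of ``x-monotonicity obstruction'' parameters — so that the merge is genuinely $O(1)$ and only the root needs an $O(\log n)$-time search over the global breakpoint set induced by where these extremal edges sit. Making this collapse precise, and proving that these $O(1)$ parameters per node truly determine $\fd$ of the subcurve against an arbitrary horizontal segment (including the subtle monotonicity/backtracking conditions), is where the real work lies; everything else is standard balanced-tree bookkeeping.
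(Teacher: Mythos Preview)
Your decomposition into endpoint terms plus a ``monotonicity obstruction'' is on the right track; the paper makes this precise as four terms: $\|p_1-a\|$, $\|p_n-b\|$, the directed Hausdorff distance $\dhd(P,\overline{ab})$, and the \emph{backward pair distance} $\DB(y_a)$, where the last term is exactly the obstruction you describe.

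The genuine gap is your hoped-for collapse. The per-subcurve summary does \emph{not} reduce to $O(1)$ parameters. The monotonicity obstruction for a subcurve of $m$ vertices is the upper envelope of up to $\Theta(m^2)$ functions $\delta_{pq}(y)$ (one per backward pair $(p,q)$), and this envelope has complexity $\Theta(m)$ already for the cross-term between two halves, hence $\Theta(m\log m)$ for the full subcurve. Your ``maximum/minimum $y$-coordinate plus a couple of obstruction parameters'' cannot encode this: different query heights $y$ select different critical backward pairs, and there are $\Theta(m\log m)$ such switchovers. So the merge you sketch cannot be $O(1)$, and without a new idea the space stays at $\Theta(n^2)$ --- which is exactly what de Berg \etal already had.

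The missing ingredient is a geometric reinterpretation via furthest segment Voronoi diagrams of horizontal halflines. Rewriting $\delta_{pq}(y)$ in terms of distances to the rays $\RR{p}$ and $\LL{q}$, one shows that for a bipartition $P = S \cup T$ (with $S$ preceding $T$), the cross contribution $\DB^{S\times T}(y)$ is realized where two monotone curves --- the envelopes $x\mapsto h_{\RR{S}}(x,y)$ and $x\mapsto h_{\LL{T}}(x,y)$ --- intersect, and as $y$ varies this intersection traces the bichromatic bisector between the FSVD of $\RR{S}$ and that of $\LL{T}$. That bisector has complexity $O(|S|+|T|)$, not $O(|S|\cdot|T|)$. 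Plugging this linear cross-term bound into the divide-and-conquer recurrence gives $\DB$ of total complexity $O(n\log n)$, computable in $O(n\log^2 n)$ time; storing its breakpoints in a sorted array then yields the $O(\log n)$ query. This is precisely the step you flagged as ``where the real work lies,'' but your proposal does not supply it.
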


This significantly improves over the earlier result of de Berg
\etal~\cite{de2017data}, as we reduce the required space and
preprocessing time from quadratic to near linear. We simultaneously
improve the query time from $O(\log^2 n)$ to $O(\log n)$. We further
extend our results to allow queries against subcurves of $P$. Let
$s, t$ be two points on $P$, we use $P[s,t]$ to denote the subcurve of $P$ from
$s$ to $t$. For horizontal query segments we then get:

\begin{theorem}
  \label{thm:horizontal_subcurve_ds}
  Let $P$ be a polygonal curve in $\R^2$ with $n$ vertices. There is
  an $O(n\log^2 n)$ size data structure that can be built in
  $O(n\log^2 n)$ time such that given a horizontal query segment
  $\overline{ab}$ and two query points $s$ and $t$ on $P$ it can
  report $\fd(P[s,t],\overline{ab})$ in $O(\log^3 n)$ time.
\end{theorem}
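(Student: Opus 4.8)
The plan is to reduce the subcurve query to $O(\log n)$ full‑curve–style subproblems glued together along the query segment, each handled by an augmented version of the data structure of Theorem~\ref{thm:horizontal_full_curve_ds}. After applying an isometry so that $\overline{ab}$ lies on the $x$‑axis with $a=(a_x,0)$, $b=(b_x,0)$, and $a_x\le b_x$, recall that the free‑space diagram of $P[s,t]$ against $\overline{ab}$ is a single column of convex cells and that $\fd(P[s,t],\overline{ab})\le\Delta$ holds iff $\|s-a\|\le\Delta$, $\|t-b\|\le\Delta$, and a monotone staircase connects the two corners; equivalently, one can pick a nondecreasing assignment $\xi$ of $x$‑coordinates along $P[s,t]$ with $\xi(\cdot)\in F_\Delta(\cdot)\cap[a_x,b_x]$, where $F_\Delta(p)=[\,p_x-\sqrt{\Delta^2-p_y^2},\ p_x+\sqrt{\Delta^2-p_y^2}\,]$ is the set of $x$‑coordinates within distance $\Delta$ of $p$. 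The object I would glue is, for a subcurve $C$ and threshold $\Delta$, its \emph{reachability function} $R_C^\Delta$: it maps the leftmost admissible entry $x$‑coordinate of $C$ to the leftmost $x$‑coordinate at which one can leave $C$ by a feasible nondecreasing assignment, with $R_C^\Delta\equiv\bot$ if some point of $C$ fails to reach $\overline{ab}$ within $\Delta$. This function is nondecreasing and piecewise of the algebraic type already handled in Theorem~\ref{thm:horizontal_full_curve_ds}, where the full‑curve decision is exactly the test $R_P^\Delta(a_x)\le b_x$.

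Concretely, I would build a balanced binary tree $T$ on the edges of $P$; each node $\nu$ owns the canonical subcurve $P_\nu$ spanned by its subtree, and stores a parametric structure in the spirit of Theorem~\ref{thm:horizontal_full_curve_ds} for $P_\nu$ that, given $\Delta$, evaluates $R_{P_\nu}^\Delta$ at a single queried argument in $O(\log n)$ time. Since $\sum_\nu|P_\nu|=O(n\log n)$ and a subcurve of size $k$ carries an $O(k\log k)$‑size structure, the total size is $O(n\log^2 n)$; building the node structures bottom up by combining a node's two children rather than from scratch is what I would rely on to keep preprocessing at $O(n\log^2 n)$. A query subcurve $P[s,t]$ decomposes in the standard way into two partial end‑edges and $O(\log n)$ canonical subcurves $C_1,\dots,C_m$ with $m=O(\log n)$.

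The decision procedure for a threshold $\Delta$ then sweeps left to right: it checks $\|s-a\|\le\Delta$, handles the first partial edge explicitly to obtain a leftmost reachable coordinate $x_0$, sets $x_j=R_{C_j}^\Delta(x_{j-1})$ for $j=1,\dots,m$ (propagating $\bot$ on failure), handles the last partial edge, and accepts iff the resulting coordinate is $\le b_x$ and $\|t-b\|\le\Delta$; this costs $O(m\log n)=O(\log^2 n)$. To report $\fd(P[s,t],\overline{ab})$ exactly I would use that it is the least feasible $\Delta$, attained at one of a constant number of critical types: the two endpoint distances, the least $\Delta$ for which every point of $P[s,t]$ reaches the segment, or a ``hyperbola'' value witnessed by a pair of points $p\in C_i$, $p'\in C_j$ with $i\le j$ that are maximally backward relative to one another. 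I would locate it by a parametric search driving the decision procedure above, each comparison of $\Delta$ against a candidate being resolved by one decision call; because the decision procedure makes only $O(\log n)$ guided probes into the per‑node structures, the search can be kept to an extra logarithmic factor, giving $O(\log^3 n)$ total.

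The main difficulty is twofold. First is the structural claim that it suffices to track a single scalar — the leftmost reachable $x$, i.e.\ that the reachable set after any prefix is an interval $[\,x_j,b_x\,]$ — and that overall feasibility reduces to composing the $R_{C_j}^\Delta$ from $a_x$ together with the endpoint and ``every point reaches the segment'' conditions; this is a free‑space argument that must be handled carefully precisely where a subcurve pins the reachable coordinate against $a_x$ or pushes it past $b_x$, and where a partial edge meets a canonical piece. Second is realizing $R_C^\Delta$ as an $O(\log n)$‑time oracle inside the Theorem~\ref{thm:horizontal_full_curve_ds}‑style structure while keeping the parametric search within $O(\log^3 n)$: the reachability function of a size‑$k$ subcurve has $\Theta(k)$ breakpoints and must never be materialized, so the search must be driven by the decision procedure's own internal binary searches (Cole's technique), requesting only $O(\log n)$ of its values, and the per‑node structures must be mergeable so that preprocessing does not blow up beyond $O(n\log^2 n)$. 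The two partial end‑edges and the degenerate point query $a_x=b_x$ are routine by comparison.
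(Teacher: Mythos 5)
Your proposal takes a fundamentally different route from the paper, and it contains a gap that I do not think is repairable as stated.

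The paper does not use decision procedures or parametric search at all for this theorem. It starts from de~Berg~\etal's closed-form decomposition (Eq.~\ref{eq:decompose_FD}), which expresses $\fd(P[s,t],\overline{ab})$ \emph{exactly} as the maximum of four directly computable terms: $\|s-a\|$, $\|t-b\|$, $\dhd(P[s,t],\overline{ab})$, and $\DB^{P[s,t]\times P[s,t]}(y_a)$. It then builds one range tree for the Hausdorff term (each node storing an FSVD merged from its children) giving $O(n\log n)$ space and $O(\log^2 n)$ query time, and a second range tree for the backward-pair term (each node storing the envelope $\DB^{P_\nu\times P_\nu}$ plus two partially persistent red--black trees over the FSVDs of $\LL{P_\nu}$ and $\RR{P_\nu}$) giving $O(n\log^2 n)$ space and $O(\log^3 n)$ query time via Lemma~\ref{lem:bisectorSubDS}. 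The whole point of this decomposition is to avoid free-space reachability altogether.

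Your plan instead keeps the free-space picture and relies on a per-node \emph{reachability oracle}: a data structure of size $O(k\log k)$ on a size-$k$ canonical subcurve $C$ that, for any $(\Delta, x)$, returns $R_C^\Delta(x)$ in $O(\log n)$ time. This oracle is precisely what is not established. The structure of Theorem~\ref{thm:horizontal_full_curve_ds} stores an FSVD and the univariate envelope $\DB$; it reports the single value $\fd(P_\nu,\overline{ab})$, not the propagation of a reachable interval through the free space of $P_\nu$. It is not ``parametric'' in the sense you need, and there is no evident way to repurpose it to answer $R_C^\Delta(x)$ queries. Worse, $R_C^\Delta$ is a two-parameter family of piecewise functions whose breakpoints move as $\Delta$ varies; a query structure that answers $R_C^\Delta(x)$ for arbitrary $(\Delta,x)$ is essentially a search structure over the full free-space reachability surface of $C$, which has $\Theta(k^2)$ combinatorial features in general. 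This is exactly why de~Berg~\etal's original subcurve structure needed $\Theta(n^2)$ space, and why the present paper's contribution is to replace the reachability viewpoint with the closed-form terms. You acknowledge this as ``the main difficulty'' but do not resolve it; the Cole-style parametric-search sketch also presupposes the oracle. Without a construction for the oracle, the space and query bounds do not follow, and I see no way to get them from the Theorem~\ref{thm:horizontal_full_curve_ds} machinery alone.
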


De Berg \etal presented a data structure that could
handle such queries in $O(\log^2 n)$ time (using $O(n^2\log^2 n)$
space), provided that $s$ and $t$ were vertices of $P$. Compared to
their data structure we thus again significantly improve the space
usage, while allowing more general queries. % Our query time is slightly
% slower ($O(\log^3 n)$ rather than $O(\log^2 n)$).
The recently presented data structure of Gudmundsson
\etal~\cite{gudmundssonfrechet2020} does allow $s$ and $t$ to lie on
the interior of edges of $P$ (and thus supports queries against
arbitrary subcurves). Their original data structure uses $O(n^2\log^2 n)$ space
and allows for $O(\log^8 n)$ time queries. Compared to their
result we use significantly less space, while also
improving the query time.

Using the insights gained in this restricted setting, we then present
the first data structure that allows exact \frechet distance queries
with arbitrarily oriented query segments in sublinear time. With near
quadratic space we obtain a query time of $O(n^{2/3}\log^2 n)$. If we
insist on logarithmic query time the space usage increases to
$O(n^{4+\eps})$. In particular, we present a data structure with the
following time-space tradeoff. 
At only a small additional cost we can also support subcurve queries. 
% Our third main contribution is a data structure that allows
% the query segment to have an arbitrary orientation:

\begin{theorem} %thm:arbitrary_subcurve_curve_ds
  \label{thm:arbitrary_full_curve_ds}
  Let $P$ be a polygonal curve in $\R^2$ with $n$ vertices, and let
  $k \in [1..n]$ be a parameter. There is an $O(nk^{3+\eps}+n^2)$ size
  data structure that can be built in $O(nk^{3+\eps}+n^2)$ time such that
  given an arbitrary query segment $\overline{ab}$ it can report
  $\fd(P,\overline{ab})$ in $O((n/k)\log^2 n)$ time. 
  In addition, given two query points $s$ and $t$ on $P$, it can
  report $\fd(P[s,t],\overline{ab})$ in $O((n/k)\log^2 n + \log^4 n)$ time. 
%  Here, $\eps > 0$ is an arbitrarily small constant.
\end{theorem}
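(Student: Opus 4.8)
The plan is to reduce the computation of $\fd(P,\overline{ab})$ to a left-to-right sweep over $P$ that can be \emph{accelerated} on chunks of $k$ consecutive edges, and then to pay the additive $O(n^2)$ term for the phenomena that only arise once the orientation of $\overline{ab}$ is no longer fixed.

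First recall the Alt--Godau analysis specialised to a curve versus a single segment: $\fd(P,\overline{ab})\le\Delta$ iff one can sweep $P$ from $p_1$ to $p_n$ while maintaining, for the current prefix of $P$, the set $R\subseteq\overline{ab}$ of points reachable by a monotone path in the $\Delta$-free space; since $\overline{ab}$ is one edge, $R$ is always a single (possibly empty) subsegment, it starts as $\{a\}$ when $\|p_1-a\|\le\Delta$, must still contain $b$ at the end, and appending one edge $e_i$ of $P$ updates $R$ by a constant-complexity operation. Hence the critical values of $\fd(P,\overline{ab})$ are of three kinds: (i) the endpoint distances $\|p_1-a\|,\|p_n-b\|$ and the distances from $a$ and $b$ to the edges $e_i$; (ii) the vertex--segment distances $\operatorname{dist}(p_i,\overline{ab})$; and (iii) the ``monotonicity'' values, one for each pair $i<j$: the distance from $p_i$ (equivalently $p_j$) to the point where the perpendicular bisector of $p_ip_j$ meets the line through $\overline{ab}$, whenever that point lies on $\overline{ab}$. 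We compute $\fd(P,\overline{ab})$ by searching over these critical values while running the decision sweep; a parametric-search argument, or a direct computation as in the cited prior work (\cite{de2017data,gudmundssonfrechet2020}) adapted to the chunked setting, keeps the overhead to within the stated bounds.

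Next, partition $P$ into $m=\lceil n/k\rceil$ chunks $C_1,\dots,C_m$ of at most $k$ consecutive vertices, consecutive chunks sharing a vertex. For each chunk $C_c$ build a data structure that, given an arbitrary query segment $\overline{ab}$, a threshold $\Delta$, and the interval $R$ entering $C_c$, returns in $O(\log^2 n)$ time the interval exiting $C_c$ together with the $O(k)$ chunk-local critical values of types (i)--(ii) and the $O(k^2)$ same-chunk values of type (iii), so that the search over $\Delta$ can also be driven. The key point is that all involved predicates are constant-complexity semialgebraic conditions in the at most five real parameters describing $(a,b,\Delta)$, so a multilevel partition-tree / cutting-based range-searching structure on the $O(k)$ vertices and $O(k^2)$ vertex pairs of $C_c$ achieves size $O(k^{4+\eps})$ with polylogarithmic query time; summed over the $m$ chunks this gives $O(nk^{3+\eps})$ space and preprocessing, and the decision sweep costs $O((n/k)\log^2 n)$ since it touches each chunk once.

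The remaining ingredient is the interaction \emph{between} chunks (the type-(iii) values for a vertex in an early chunk and one in a later chunk) and, for subcurve queries, the two partial edges carrying $s$ and $t$. Because a query orientation reorders how $P$'s vertices project onto $\overline{ab}$, the binding cross-chunk constraint cannot be baked into a single chunk; instead we store, in $O(n^2)$ total space, the $\binom{n}{2}$ perpendicular bisectors of vertex pairs inside a point-location / range structure so that, once the supporting line of $\overline{ab}$ is fixed, the sweep can evaluate the at most one cross-chunk value it needs at each step, and we equip each chunk (and, where needed, pairs of chunks, using the subcurve machinery behind Theorem~\ref{thm:horizontal_subcurve_ds} after rotating to the query orientation) with the auxiliary structures to start and end the sweep at an arbitrary point on an edge; this accounts for the additive $O(n^2)$ space and the additive $O(\log^4 n)$ in the subcurve query time (note that at $k=1$ the per-chunk structures are trivial and the bound degenerates to the near-quadratic-space, near-linear-query regime, while at $k=n$ it gives one $O(n^{4+\eps})$-size chunk with polylogarithmic queries). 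The main obstacle is precisely this per-chunk arbitrary-orientation structure: for horizontal segments one coordinate axis is aligned with the segment, so the monotonicity constraints come pre-sorted, which is what makes the near-linear-size structures of Theorems~\ref{thm:horizontal_full_curve_ds} and~\ref{thm:horizontal_subcurve_ds} possible; with a free orientation one must handle all $\Theta(k^2)$ possible projection orders simultaneously together with the quadratic free-space-cell geometry, and nailing down the exact space exponent ($k^{4+\eps}$ per chunk, hence $k^{3+\eps}$ overall) requires a careful accounting of how many partition-tree levels are needed and which predicate each level resolves.
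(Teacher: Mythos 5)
The paper's route to this theorem is quite different from yours, and I think your plan has real gaps.

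The paper does \emph{not} run an Alt--Godau sweep or a parametric search over $\Delta$. It relies on the decomposition (Eq.~\ref{eq:decompose_FD}, due to de Berg~\etal) that expresses $\fd(P,\overline{ab})$ exactly as the maximum of four terms: $\|p_1-a\|$, $\|p_n-b\|$, $\dhd(P,\overline{ab})$, and the backward pair distance $\DB(\alpha,\beta)$. This identity is what removes any need to search over $\Delta$: one simply evaluates four quantities and takes their maximum. Your first paragraph reconstructs the critical-value taxonomy of the free-space sweep and then appeals to ``a parametric-search argument\dots keeps the overhead to within the stated bounds,'' but a parametric search that drives an $O((n/k)\log^2 n)$-time decision procedure does not give an $O((n/k)\log^2 n)$ optimization procedure; it typically squares the cost or at least multiplies by extra polylog factors. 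Without the exact four-term decomposition, the stated query time is not established.

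Your per-chunk structure is also not justified. You assert that a multilevel partition-tree/cutting structure on $O(k)$ vertices and $O(k^2)$ pairs, with constant-complexity semialgebraic predicates in five real parameters, has size $O(k^{4+\eps})$ and polylogarithmic query time, but no standard range-searching theorem gives that combination for this kind of predicate without a careful case analysis, and you never actually state which ranges you are searching. The paper obtains $O(k^{4+\eps})$ per chunk from a very specific and concrete construction: each ordered pair $(p,q)$ contributes a partially defined, constant-degree \emph{bivariate} function $\delta_{pq}(\alpha,\beta)$, and the backward pair distance of the chunk is the upper envelope of these $O(k^2)$ bivariate functions, whose complexity is $O(k^{4+\eps})$ by Sharir's bound on upper envelopes in $\R^3$. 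Point location in this envelope answers a query in $O(\log k)$ time. This is not a range-searching reduction; it is an explicit envelope.

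Finally, your handling of cross-chunk backward pairs is too vague to support the $O(n^2)$ additive space term. You propose storing the $\binom{n}{2}$ perpendicular bisectors in ``a point-location / range structure'' so that ``the sweep can evaluate the at most one cross-chunk value it needs at each step,'' but you do not say how you locate the binding cross-chunk pair among $\Theta(n^2)$ candidates in polylogarithmic time, and the binding pair can span any two chunks. The paper instead observes (Lemmas~\ref{lem:db_minx} and~\ref{lem:intersect_is_opt_arbitrary_orient}) that for two subsets $S,T$ with $S$ preceding $T$, $\DB^{S\times T}(\alpha,\beta)$ is realized at the unique crossing of the monotone functions $x\mapsto h_{\RRa{S}}(x,\alpha x+\beta)$ and $x\mapsto h_{\LLa{T}}(x,\alpha x+\beta)$; these are evaluated via furthest \emph{point} Voronoi diagrams on convex hull intervals (Lemma~\ref{lem:querying_h_arbitrary_orient_large_space}, $O(n^2)$ space total across the dyadic decomposition), and the crossing point is found by a small parametric search in $O(\log^2 n)$ time (Lemma~\ref{lem:query_intersect}). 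Trimming the recursion tree at size-$k$ nodes and summing gives the stated $O((n/k)\log^2 n)$ query and $O(nk^{3+\eps}+n^2)$ space. The missing ingredient in your plan is precisely this reduction of the cross-chunk term to a monotone-function intersection, together with the FPVD-based evaluation structure; the bisector-storage idea does not on its own yield an efficient way to find the binding cross-chunk pair.
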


In Theorem~\ref{thm:arbitrary_full_curve_ds} and throughout the rest of the paper $\eps > 0$ is an arbitrarily small constant. 
%\frank{should we have a "Here and throughout the rest of the paper $\eps > 0$ is an arbitrarily small constant" here so that we can get rid of that explanation in the rest of the paper?}\maike{good idea!}
%
In both Theorem~\ref{thm:horizontal_subcurve_ds} and
Theorem~\ref{thm:arbitrary_full_curve_ds} the query time can be made
sensitive to the number of vertices $m=|P[s,t]|$ in the query subcurve
$P[s,t]$. That is, we can get query times $O(\log^3 m)$ and
$O((m/k)\log^2 m + \log^4 m)$, respectively.

To achieve our results, we also develop data structures that allow us to
efficiently query the \emph{directed Hausdorff distance}
$\dhd(P[s,t],\overline{ab}) = \max_{p \in P[s,t]} \min_{q \in
  \overline{ab}} \|p-q\|$ from (a subcurve $P[s,t]$ of) $P$ to the
query segment $\overline{ab}$.
For an arbitrarily oriented query
segment $\overline{ab}$ and a query subcurve $P[s,t]$ our data
structure uses $O(n\log n)$ space and can answer such queries in
$O(\log^2 n)$ time. 
Using more space, queries can be answered in $O(\log n)$ time, see Section~\ref{sec:Arbitrary_Query_Orientation}.

\subparagraph{Applications.}
In Section~\ref{sec:Applications} we show how to efficiently solve two problems using our data structure. First, we show
how to compute a local $\delta$-simplification of $P$---that is, a
minimum complexity curve whose edges are within \frechet distance
$\delta$ to the corresponding subcurve of $P$---in $O(n^{5/2+\eps})$
time. This improves existing $O(n^3)$ time
algorithms~\cite{godau91}. Second, given a query segment
$\overline{ab}$ we show how to efficiently find a translation of
$\overline{ab}$ that minimizes the \frechet distance to (a given
subcurve of) $P$. This extends the work of
Gudmundsson~\etal~\cite{gudmundssonFrechetJournal} to arbitrarily
oriented segments. Furthermore, we answer one of their open problems by showing how to find a scaling
of the segment that minimizes the \frechet distance.

\section{Global approach}
\label{sec:Global_Approach}

We illustrate the main ideas of our approach, in particular for the
case where the query segment $\overline{ab}$ is horizontal, with $a$
left of $b$. We can build a symmetric data structure in case $a$ lies
right of $b$. We now first review some definitions based on
those in~\cite{de2017data}.

\begin{figure}[t]
  \centering
  \includegraphics{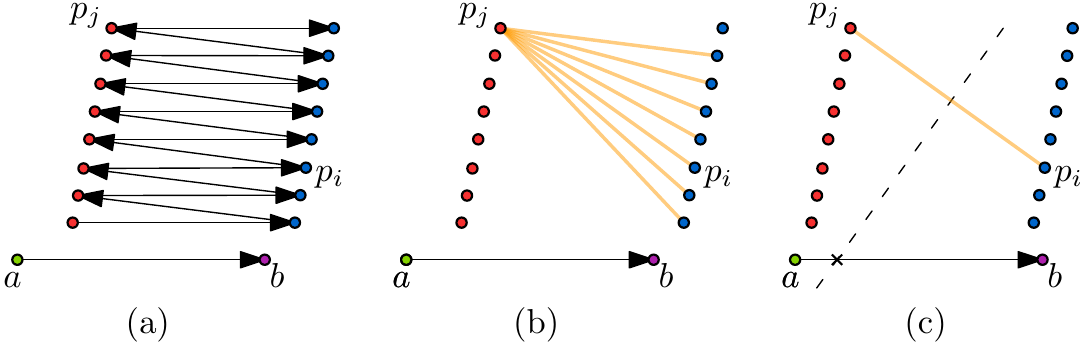}
  \caption{
    (a) A polygonal curve and query segment.
    (b) The red vertex $p_j$ forms a backward pair with all but one blue vertex.
    (c) For a fixed backward pair $(p_i, p_j)$,
    %we consider the point of intersection between their bisector and $\overline{ab}$ (cross) and 
    we consider %are interested in  
    the distance between the  intersection (cross) of their bisector (dashed) and $\overline{ab}$, and either $p_i$ or~$p_j$.
  }
  \label{fig:backwardpair}
\end{figure}

Let $\Skew$ be the set of ordered pairs of vertices
$(p, q) \in P \times P$ where $p$ precedes or equals $q$ along $P$. An
ordered pair $(p,q) \in \Skew$ forms a \emph{backward pair} if
$x_{q}\leq x_{p}$. Here, and throughout the rest of the paper, $x_p$ and
$y_p$ denote the $x$- and $y$-coordinates of point $p$,
respectively. The set of all backward pairs of $P$ will be denoted
$\mathcal{B}_{}(P)$. A backward pair $(p,q)$ is \emph{trivial} if
$p=q$.  % Trivial backward pairs will be useful to compute the
% Hausdorff distance from the curve $P$ to the query segment $ab$.
See Fig.~\ref{fig:backwardpair} for an example of backward pairs
(omitting trivial pairs). For two points $p, q \in P$, we then define
$ \xdistance_{pq}(y) = \min_x \max \left\{\|(x,y)-p\|,\|(x,y)-q\|
\right\}.  $ That is, $\xdistance_{pq}(y)$ is a function that for any
$y$-coordinate gives the minimum possible distance between a point at
height $y$ and both $p$ and~$q$.  We will use the function
$\xdistance_{pq}$ only when $(p,q) \in \B(P)$ is a backward pair. We
then define the function
$ \DB(y) = \max\left\{ \xdistance_{pq}(y) \mid
  (p,q)\in\mathcal{B}(P)\right\}, $ which we refer to as the
\emph{backward pair distance} of a horizontal segment at height $y$
with respect to $P$. Note that $\DB(y)$ is the upper envelope of the
functions $\xdistance_{pq}$ for all backward pairs $(p,q)$ of $P$.

De Berg \etal~\cite{de2017data} prove that the \frechet
distance is the maximum of four terms:
\begin{equation}
  \fd(P, \overline{ab}) = \max \left\{ || p_1-a ||, \quad || p_n-b ||, \quad \dhd(P, \overline{ab}), \quad \DB(y_a) \right\}.\label{eq:decompose_FD}
\end{equation}

The first two terms are trivial to compute in $O(1)$
time. Like de Berg \etal, we build separate data structures that allow
us to efficiently compute the third and fourth terms.

A key insight is that we can compute $\dhd(P,\overline{ab})$ by
building the furthest segment Voronoi diagrams (FSVD) of two sets of
horizontal halflines, and querying these diagrams with the endpoints
$a$ and $b$. See
Section~\ref{sub:Horizontal:_The_Hausdorff_Term}. This allows for a
linear space data structure that supports querying
$\dhd(P,\overline{ab})$ in $O(\log n)$ time, improving both the space
and query time over~\cite{de2017data}.

However, in~\cite{de2017data} the data structure that supports
computing the backward pair distance dominates the required space
and preprocessing time, as there may be $\Omega(n^2)$ backward pairs,
see Fig.~\ref{fig:backwardpair}. Via a divide and conquer argument
we show that the number of backward pairs that show up on the upper
envelope $\DB$ is only $O(n \log n)$, see
Section~\ref{sub:Horizontal:_Backward_Pairs}. The crucial ingredient
is that there are only $O(n)$ backward pairs $(p,q)$ contributing to
$\DB$ in which $p$ is a vertex among the first $n/2$ vertices %$p_1,\dots,p_{n/2}$ 
of $P$, and $q$ is a vertex in the remaining $n/2$
vertices. Surprisingly, we can again argue this using furthest segment Voronoi
diagrams of sets of horizontal halflines. This allows us
to build $\DB$ in $O(n\log^2 n)$ time in total. In 
Section~\ref{sec:Horizontal:_Subtrajectories} we %show that we can
extend these results to support queries against an arbitrary subcurve $P[s,t]$ of~$P$.

For arbitrarily oriented query segments we similarly decompose
$\fd(P,\overline{ab})$ into four terms, and build a data structure for
each term separately, see Section~\ref{sec:Arbitrary_Query_Orientation}. The directed Hausdorff term can still be queried efficiently using an $O(n\log^2 n)$ size data structure. However, our initial data structure for the backward pair distance uses $O(n^{4+\eps})$ space. The main reason for this
is that functions $\delta_{pq}$ expressing the cost of a backward pair
are now bivariate, depending on both the slope and intercept of the
supporting line of $\overline{ab}$. The upper envelope of a set of $n$
such functions may have quadratic complexity. While our divide and
conquer strategy does not help us to directly bound the complexity of the
(appropriately generalized function) $\DB$ in this case, it does allow
us to support queries against subcurves of $P$. Moreover, we can use it to obtain a favourable query time vs. space trade off. In Section~\ref{sec:Applications} we then apply our data structure to efficiently solve various \frechet distance related problems. 

\section{Horizontal queries}
\label{sec:Horizontal}

\subsection{The Hausdorff term}\label{sub:Horizontal:_The_Hausdorff_Term}
%\subsection{Computation of the Hausdorff term for horizontal queries}
%\label{sec:apx_hausdorff}
	\begin{figure}[tb]
	\centering
	\includegraphics{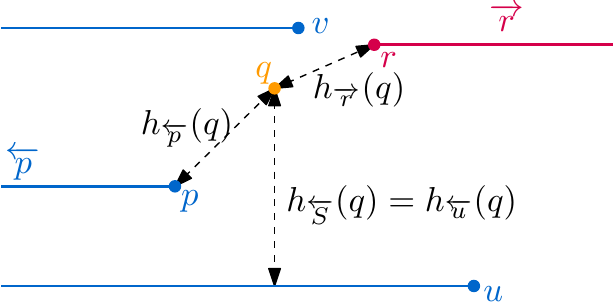}
	\caption{Example for $S$=$\{p,u,v\}$.  }
	\label{fig:functions_hs}
\end{figure}

In this section we show that there is a linear-size data structure to query the
Hausdorff term in $O(\log n)$ time, which can be built in $O(n\log n)$ time.

For a point $p \in \R^2$, define $\LL{p}$ to be the ``leftward''
horizontal halfline starting at $p$ and containing all points
directly to the left of $p$.
Analogously, we define $\RR{p}$ as the
``rightward'' horizontal halfline starting at $p$, so that
$p=\LL{p} \cap \RR{p}$.  We extend this notation to any set of points $S$,
that is, $\LL{S}=\{\LL{s} \mid s \in S\}$ denotes the set of
``leftward'' halflines starting at the points in $S \subseteq
\R^2$.
We define $\RR{S}$ analogously.
Let $S$ and $T$ be two (possibly overlapping) point sets in the plane.
We define the following distance functions (see also Fig.~\ref{fig:functions_hs}.):
\begin{align*}
h_{\LL{p}}(q) &= \dhd(\{q\},\LL{p}) = \min\left\{\|p'-q\|\mid p' \in \LL{p}\right\}, 
\qquad h_{\LL{S}}(q) = \max\left\{h_{\LL{p}}(q) \mid p\in S\right\}\\
h_{\RR{p}}(q) &= \dhd(\{q\},\RR{p}) = \min\left\{\|p'-q\|\mid p' \in \RR{p}\right\},
\qquad h_{\RR{S}}(q) = \max\left\{h_{\RR{p}}(q) \mid p\in S\right\}.
\end{align*}

Note that $h_{\RR{S}}$ (resp., $h_{\LL{S}}$) is the upper envelope of the distance
functions to the halflines in $\RR{S}$ (resp., $\LL{S}$).
Since $h_{\RR{S}}$ and $h_{\LL{S}}$ map each point in the plane to a distance, the envelopes live in $\R^3$.
%Symmetrically $h_{\LL{S}}$ is the upper envelope of the distance functions to halflines in $\LL{S}$.

We begin by providing some observations on the computation of the Hausdorff term.

\begin{observation}
	\label{obs:fsvd}
	Let $S$ be a set of points in the plane. The (graph of the) function $h_{\RR{S}}$ is an upper envelope whose orthogonal projection is the furthest segment Voronoi diagram of the set of halflines $\RR{S}$. A symmetrical property holds for $h_{\LL{S}}$. 
\end{observation}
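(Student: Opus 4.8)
The plan is to prove the statement by unwinding the relevant definitions. First I would observe that for a single query point $q$ and a halfline $\ell$, the quantity $\dhd(\{q\},\ell)=\min\{\|p'-q\|\mid p'\in\ell\}$ is simply the Euclidean distance from $q$ to the closed set $\ell$. Hence $h_{\RR{p}}$ is precisely the point-to-halfline distance function of $\RR{p}$: its graph is a piecewise-smooth surface in $\R^3$, and $h_{\RR{S}}=\max_{p\in S}h_{\RR{p}}$ is the pointwise maximum, i.e.\ the upper envelope, of the $|S|$ such surfaces.

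Next I would recall that the furthest segment Voronoi diagram of a family $\mathcal{H}$ of segments is, by definition, the subdivision of the plane whose cell for $\ell\in\mathcal{H}$ is $\{q\in\R^2\mid d(q,\ell)\ge d(q,\ell')\text{ for all }\ell'\in\mathcal{H}\}$, with $d$ the Euclidean point-to-segment distance; a halfline is an (unbounded) segment, so this notion applies to $\RR{S}$. The key --- and entirely routine --- step is then the following: a point $q$ lies in the furthest-segment cell of $\RR{p}$ exactly when $h_{\RR{p}}(q)=h_{\RR{S}}(q)$, that is, exactly when the surface of $h_{\RR{p}}$ attains the upper envelope above $q$. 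Consequently, projecting the graph of $h_{\RR{S}}$ vertically onto $\R^2$ --- that is, taking its maximization diagram, with each cell labelled by the surface realizing the envelope --- yields, cell for cell, the furthest segment Voronoi diagram of $\RR{S}$. The symmetric claim for $h_{\LL{S}}$ follows by the same argument applied to leftward halflines, or equivalently by reflecting the plane across a vertical line.

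There is no substantial obstacle here; the statement is essentially a reformulation of definitions. The one point I would state explicitly is that the standard structural theory of furthest segment Voronoi diagrams --- which later sections invoke for complexity and construction bounds --- applies verbatim when the ``segments'' are halflines: point-to-halfline distance is a convex, piecewise-smooth function (the distance to a convex set), so the diagram has its usual complexity and combinatorial structure, and coincidences among the maxima of the $h_{\RR{p}}$ correspond to the same higher-dimensional faces of the envelope and of the diagram. I would also make explicit that we use the Euclidean point-to-halfline distance throughout, so that the identification with the $\FSVD$ invoked in subsequent sections is literal.
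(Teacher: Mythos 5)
Your proof is correct and takes the same route the paper implicitly relies on: the paper states this as an Observation with no written proof, precisely because it is a direct unwinding of definitions, and your argument does exactly that, identifying $h_{\RR{p}}$ as the point-to-halfline distance, $h_{\RR{S}}$ as the upper envelope of these surfaces, and the vertical projection of that envelope (its maximization diagram) with the furthest segment Voronoi diagram of $\RR{S}$. Your additional remark that the standard FSVD theory applies to halflines is a sensible clarification consistent with the paper's later use of Papadopoulou--Dey.
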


For points $p=(x_p,y_p)$ and $q=(x_q,y_q)\in\mathbb{R}^2$, we have
\[
h_{\LL{p}}(q)=
\begin{cases}
\|p-q\| &\text{if $x_p\leq x_q$}\\|y_p-y_q|&\text{if $x_p\geq
	x_q$,}
\end{cases} \text{\qquad and\qquad}
h_{\RR{p}}(q)=
\begin{cases}\|p-q\| &\text{if $x_p\geq x_q$}\\|y_p-y_q|&\text{if
	$x_p\leq x_q$.}
\end{cases}
\]

\begin{observation}
	\label{obs:monotonic}
	For any fixed $y$ and $p \in S$, the function $x\mapsto h_{\LL{p}}(x,y)$ for a point $p$ is monotonically increasing, and $x\mapsto h_{\RR{p}}(x,y)$ is
	monotonically decreasing.  Consequently, also for any point set $S$, the function $x\mapsto h_{\RR{S}}(x,y)$
	is monotonically decreasing, and $x\mapsto h_{\LL{S}}(x,y)$ is
	monotonically increasing.
\end{observation}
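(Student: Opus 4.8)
The plan is to prove the statement for a single halfline and then lift it to the upper envelope for free, since the maximum of monotone functions (all monotone in the same direction) is again monotone. So the only real work is the single-point case, and that follows directly from the explicit case distinction for $h_{\LL{p}}$ and $h_{\RR{p}}$ written out just above the statement. I would do $h_{\LL{p}}$ and note $h_{\RR{p}}$ is symmetric (reflect in a vertical line).

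First I would fix $y$ and $p=(x_p,y_p)$ and consider $f(x) := h_{\LL{p}}(x,y)$. By the case formula, $f(x) = |y_p - y|$ (a constant in $x$) whenever $x \le x_p$, and $f(x) = \|p-(x,y)\| = \sqrt{(x-x_p)^2 + (y-y_p)^2}$ whenever $x \ge x_p$. On the first piece $f$ is constant, hence (weakly) increasing. On the second piece, $x \ge x_p$ so $(x-x_p)^2$ is increasing in $x$, and therefore $\sqrt{(x-x_p)^2 + (y-y_p)^2}$ is increasing in $x$ as well. At the break point $x = x_p$ the two expressions agree (both equal $|y_p - y|$), so the two pieces glue into a single monotonically increasing function on all of $\R$. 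The argument for $h_{\RR{p}}$ is identical after swapping the roles of $x \le x_p$ and $x \ge x_p$: now the constant piece is on the right and the increasing-distance piece is on the left with $x \le x_p$, so $(x-x_p)^2$ decreases as $x$ increases, giving a monotonically decreasing function.

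For the point-set versions, recall that $h_{\RR{S}}(x,y) = \max_{p \in S} h_{\RR{p}}(x,y)$ and $h_{\LL{S}}(x,y) = \max_{p\in S} h_{\LL{p}}(x,y)$. A pointwise maximum of monotonically decreasing functions is monotonically decreasing: if $x_1 \le x_2$ then for every $p$ we have $h_{\RR{p}}(x_1,y) \ge h_{\RR{p}}(x_2,y) $, so taking the max over $p$ on the left against the max over $p$ on the right preserves the inequality. Symmetrically, a pointwise maximum of monotonically increasing functions is monotonically increasing. This yields the two ``Consequently'' claims and completes the proof.

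There is essentially no obstacle here — the observation is an immediate consequence of the explicit formula for the two halfline distance functions; the only thing to be a little careful about is checking that the two branches of each piecewise definition agree at $x = x_p$ so that no downward jump is introduced at the seam, and noting that $x \mapsto \sqrt{(x-x_p)^2 + c}$ is monotone on each side of $x_p$ separately (it is a standard fact, increasing for $x \ge x_p$, decreasing for $x \le x_p$).
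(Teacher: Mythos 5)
Your proposal is correct and matches what the paper intends: the observation is stated without proof precisely because it follows immediately from the explicit piecewise formulas for $h_{\LL{p}}$ and $h_{\RR{p}}$ given just above it, plus the fact that a pointwise maximum of functions monotone in the same direction is again monotone. You fill in exactly those details (including the check that the two branches agree at $x=x_p$), so there is nothing to flag.
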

% \begin{proof}
%   Because $x\mapsto h_{\RR{S}}(x,y)$ is the upper envelope of nonincreasing functions, it is nonincreasing and symmetrically, $x\mapsto h_{\LL{T}}(x,y)$ is nondecreasing.
% \end{proof}

\begin{lemma}\label{lem:dHtoSegment}
	For any horizontal segment $\overline{ab}$ and
	any point set $S\subseteq\mathbb{R}^2$, we have
	$\dhd(S,\overline{ab})=\max\left\{h_{\LL{S}}(a),h_{\RR{S}}(b)\right\}$.
\end{lemma}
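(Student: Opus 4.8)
The plan is to unfold the definition of the directed Hausdorff distance and then split the segment $\overline{ab}$ into three pieces according to the $x$-coordinate relative to $a$ and $b$, matching each piece to the halfline distance functions. Write $\dhd(S,\overline{ab}) = \max_{p \in S} \min_{q \in \overline{ab}} \|p-q\|$, so it suffices to show that for every fixed point $p \in S$ we have $\min_{q\in\overline{ab}}\|p-q\| = \max\{h_{\LL{p}}(a), h_{\RR{p}}(b)\}$ is \emph{false} in general---so instead I would aim directly for the envelope identity $\max_{p\in S}\min_{q\in\overline{ab}}\|p-q\| = \max\{\max_p h_{\LL{p}}(a), \max_p h_{\RR{p}}(b)\}$. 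Since $h_{\LL{S}}(a) = \max_p h_{\LL{p}}(a)$ and $h_{\RR{S}}(b) = \max_p h_{\RR{p}}(b)$ by definition, and since $\max$ distributes over $\max$, the whole statement reduces to the per-point claim: for each $p$,
\[ \min_{q \in \overline{ab}} \|p-q\| = \min\bigl\{ h_{\LL{p}}(a),\ h_{\RR{p}}(b) \bigr\}, \]
wait---that is also not quite it. Let me restate: I want $\min_{q\in\overline{ab}}\|p-q\|$ to equal something that, after taking $\max$ over $p$, yields $\max\{h_{\LL{S}}(a),h_{\RR{S}}(b)\}$. The clean route is to observe $\overline{ab} = \LL{a} \cap \RR{b}$ as point sets (since $a$ is left of $b$, the horizontal segment between them is exactly the intersection of the leftward halfline from $a$ with the rightward halfline from $b$), and hence $\min_{q\in\overline{ab}}\|p-q\| = \max\{\min_{q\in\LL{a}}\|p-q\|,\ \min_{q\in\RR{b}}\|p-q\|\} = \max\{h_{\LL{a}}(p), h_{\RR{b}}(p)\}$.

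So the first key step is the set identity $\overline{ab} = \LL{a} \cap \RR{b}$, which is immediate from $a$ lying left of $b$ and both being on a common horizontal line. The second step is the geometric fact that for a convex set written as an intersection of two halfplanes/halflines $A \cap B$ containing the nearest-point projections appropriately, the distance from $p$ to $A\cap B$ equals $\max$ of the distances to $A$ and to $B$; here $A = \LL{a}$ and $B = \RR{b}$ are halflines, and the relevant statement is that $d(p, \LL{a}\cap\RR{b}) = \max\{d(p,\LL{a}), d(p,\RR{b})\}$. This needs a small argument: the foot of the perpendicular from $p$ to the full line is some point $q^*$; if $q^* \in \overline{ab}$ both halfline-distances are $\le \|p-q^*\|$ hence the max is $\|p-q^*\| = d(p,\overline{ab})$; if $q^*$ lies left of $a$, then $d(p,\overline{ab}) = \|p-a\|$ (nearest point on the segment is the endpoint $a$), $d(p,\LL{a}) = \|p-q^*\| \le \|p-a\|$, and $d(p,\RR{b}) = \|p-a\|$ as well since $a \in \RR{b}$ and $a$ is the nearest point of $\RR{b}$ to $p$ in this configuration—here I use Observation~\ref{obs:monotonic} (monotonicity of $x\mapsto h_{\RR{p}}(x,y)$) to conclude the nearest point of $\RR{b}$ to $p$ is indeed $a$ when $p$ projects to the left of $a$; the symmetric case ($q^*$ right of $b$) is analogous. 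The third step is bookkeeping: using the definitions $h_{\LL{a}}(p) = h_{\LL{p}}(a)$ — note this symmetry holds because both equal $\min\{\|p'-p\| : p' \in \LL{a}\}$... actually $h_{\LL{p}}(a) = \min\{\|p'-a\|: p'\in\LL{p}\}$ is the distance from $a$ to the leftward halfline at $p$, which is \emph{not} symmetric to the distance from $p$ to the leftward halfline at $a$; so here I would instead just verify from the explicit case formulas for $h_{\LL{p}}$ and $h_{\RR{p}}$ given right before the lemma that $d(p,\LL{a}) = h_{\LL{p}}(a)$ and $d(p,\RR{b}) = h_{\RR{p}}(b)$, i.e.\ $d(p,\LL{a}) = \|p-a\|$ if $x_a \le x_p$ and $|y_p - y_a|$ if $x_a \ge x_p$, which is exactly the displayed formula for $h_{\LL{p}}(a)$, and symmetrically for $\RR{b}$.

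Putting it together: $\dhd(S,\overline{ab}) = \max_{p\in S} d(p,\overline{ab}) = \max_{p\in S}\max\{d(p,\LL{a}),d(p,\RR{b})\} = \max_{p\in S}\max\{h_{\LL{p}}(a),h_{\RR{p}}(b)\} = \max\{\max_{p\in S}h_{\LL{p}}(a),\ \max_{p\in S}h_{\RR{p}}(b)\} = \max\{h_{\LL{S}}(a),h_{\RR{S}}(b)\}$, as claimed. The main obstacle is the middle step—carefully checking that the distance from $p$ to the segment equals the max of the distances to the two halflines in all three projection cases; everything else is definition-chasing. I would present that case analysis explicitly (three cases based on whether $x_p < x_a$, $x_a \le x_p \le x_b$, or $x_p > x_b$) since it is where the geometry actually lives, and keep the rest terse.
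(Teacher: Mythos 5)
Your overall plan—reduce to a per-point identity $d(p,\overline{ab}) = \max\{h_{\LL{p}}(a),h_{\RR{p}}(b)\}$ and then distribute $\max$ over $p\in S$—is sound, and that per-point identity is in fact true (a direct three-case check on the position of $x_p$ relative to $[x_a,x_b]$ confirms it). However, your derivation of it is broken at the very first step. The set identity $\overline{ab}=\LL{a}\cap\RR{b}$ is false: $\LL{a}$ consists of points with $x\le x_a$ and $\RR{b}$ of points with $x\ge x_b$, so for $x_a<x_b$ the intersection is \emph{empty}. The correct identity is $\overline{ab}=\RR{a}\cap\LL{b}$. This error propagates: the formula $d(p,\LL{a})=h_{\LL{p}}(a)$ does not hold (by the displayed case formulas, $d(p,\LL{a})=h_{\LL{a}}(p)=\|p-a\|$ when $x_p\ge x_a$, whereas $h_{\LL{p}}(a)=|y_p-y_a|$ when $x_p\ge x_a$---the cases are swapped, not equal), and the case analysis contains false sub-claims such as ``$a\in\RR{b}$'' and ``$d(p,\RR{b})=\|p-a\|$ when $p$ projects left of $a$'' (in fact $d(p,\RR{b})=\|p-b\|\neq\|p-a\|$ there). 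A concrete counterexample to your claimed identity $d(p,\overline{ab})=\max\{d(p,\LL{a}),d(p,\RR{b})\}$: take $a=(0,0)$, $b=(1,0)$, $p=(2,3)$; then $d(p,\overline{ab})=\sqrt{10}$ but $d(p,\LL{a})=\sqrt{13}$.

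The fix is simple and still recovers your strategy: use $\overline{ab}=\RR{a}\cap\LL{b}$, prove $d(p,\RR{a}\cap\LL{b})=\max\{d(p,\RR{a}),d(p,\LL{b})\}$ by your projection-foot case analysis, and then invoke the genuine symmetry $h_{\RR{a}}(p)=h_{\LL{p}}(a)$ and $h_{\LL{b}}(p)=h_{\RR{p}}(b)$---which holds precisely because it swaps both the roles of the two points \emph{and} the halfline direction (you dismissed symmetry because you were comparing same-direction halflines). For comparison, the paper's proof avoids the per-point identity entirely: it partitions $S$ into the sets $L$, $M$, $R$ of points strictly left of $a$, in the slab, and strictly right of $b$, computes $\dhd$ directly on each part, and then shows $h_{\LL{R}}(a)\le h_{\RR{R}}(b)$ and $\sup_{s\in M}|y_s-y|=h_{\LL{M}}(a)$ to assemble the claimed form. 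Your per-point route would be a slightly cleaner alternative once the halfline directions are corrected.
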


\begin{proof}
	Assume w.l.o.g. that $x_a\leq x_b$, and let $y = y_a = y_b$.
	Partition $S$ into three disjoint subsets $L$, $R$, and $M$, where
	$L \subseteq S$ contains all points in $S$ strictly left of $a$, $R$
	all points strictly right of $b$, and $M$ all (remaining) points that lie in
	the vertical slab defined by $x_a$ and $x_b$. We then have:
	\begin{align*}
	\dhd(S,\overline{ab})
	= \max_{s\in S}\min_{r\in\overline{ab}}\|s-r\|
	&= \max\left\{\sup_{s\in L} \|s-a\|
	, \max_{s\in R} \|s-b\|
	, \max_{s \in M} |y_s-y|
	\right\} \\
	&= \max\left\{h_{\LL{L}}(a), h_{\RR{R}}(b), \max_{s \in M} |y_s-y|\right\},
	\end{align*}

	\noindent
	For all points $s$ not strictly left of $a$, so
	in particular those in $M$, we have that
	$h_{\LL{s}}(a) = |y_s-y_a| = |y_s-y|$, and thus
	$\sup_{s \in M} |y_s-y| = h_{\LL{M}}(a)$. For the points $s \in R$
	(these are the points right of $a$, that even lie right of $b$) we have
	$h_{\LL{s}}(a) = |y_s-y_a| = |y_s- y_b| \leq \|s-b\| = h_{\RR{s}}(b)$. It
	therefore follows that $h_{\LL{R}}(a) \leq h_{\RR{R}}(b)$.
	Consequently  we observe that by definition: $\dhd(S,\overline{ab})
	= \max\left\{h_{\LL{L}}(a), h_{\RR{R}}(b), \sup_{s \in M} |y_s-y|\right\}$. 
	Finally we conclude:
	
	\[
	\dhd(S,\overline{ab}) = \max\left\{h_{\LL{S}}(a), h_{\RR{R}}(b)\right\}.
	\]

	Symmetrically, we obtain that
	$\sup_{s \in M} |y_s-y| = h_{\RR{M}}(b)$, and that
	$h_{\RR{L}}(b) \leq h_{\LL{L}}(a)$. Therefore
	$\dhd(S,\overline{ab}) = \max\left\{h_{\LL{L}}(a), h_{\RR{S}}(b)\right\}$. The lemma
	now follows since $L,R \subseteq S$.
\end{proof}

\begin{corollary}\label{lem:dHtoPoint}
	For any point $p$ and point set $S\subseteq\mathbb{R}^2$,
	$\dhd(S,p)=\max\left\{h_{\LL{S}}(p),h_{\RR{S}}(p)\right\}$.
\end{corollary}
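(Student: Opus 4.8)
The plan is to derive this directly from Lemma~\ref{lem:dHtoSegment} by viewing the single query point $p$ as a degenerate horizontal segment $\overline{ab}$ with $a = b = p$. The directed Hausdorff distance $\dhd(S,p)$ is exactly $\dhd(S,\overline{pp})$, since the only point on $\overline{pp}$ is $p$ itself, so $\min_{r \in \overline{pp}} \|s-r\| = \|s-p\|$ for every $s \in S$. Applying Lemma~\ref{lem:dHtoSegment} with this degenerate segment (the lemma's proof does not use $x_a < x_b$ strictly; the case $x_a = x_b$ is permitted, and indeed the w.l.o.g.\ assumption $x_a \le x_b$ holds trivially) immediately gives $\dhd(S,p) = \max\{h_{\LL{S}}(a), h_{\RR{S}}(b)\} = \max\{h_{\LL{S}}(p), h_{\RR{S}}(p)\}$.

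Alternatively, if one prefers not to invoke the degenerate case of the lemma, I would argue directly. For each $s \in S$ we have $\dhd(\{s\},p) = \|s-p\|$, and I claim $\|s-p\| = \max\{h_{\LL{s}}(p), h_{\RR{s}}(p)\}$. This follows from the explicit case formulas for $h_{\LL{p}}$ and $h_{\RR{p}}$ stated just before Observation~\ref{obs:monotonic}: if $x_s \le x_p$ then $h_{\RR{s}}(p) = \|s-p\|$ while $h_{\LL{s}}(p) = |y_s - y_p| \le \|s-p\|$, and symmetrically if $x_s \ge x_p$ then $h_{\LL{s}}(p) = \|s-p\|$ dominates. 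Taking the maximum over all $s \in S$ and exchanging the order of the two maxima yields
\[
\dhd(S,p) = \max_{s\in S}\|s-p\| = \max_{s\in S}\max\{h_{\LL{s}}(p),h_{\RR{s}}(p)\} = \max\{h_{\LL{S}}(p), h_{\RR{S}}(p)\},
\]
where the last equality uses the definitions $h_{\LL{S}}(p) = \max_{s\in S} h_{\LL{s}}(p)$ and $h_{\RR{S}}(p) = \max_{s\in S} h_{\RR{s}}(p)$.

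There is essentially no obstacle here; the only point requiring a moment's care is confirming that the degenerate segment is admissible in Lemma~\ref{lem:dHtoSegment} (equivalently, that the partition of $S$ into $L$, $R$, $M$ still makes sense when $x_a = x_b$, in which case $M$ consists of the points on the vertical line through $p$ and the bounds $h_{\LL{R}}(a) \le h_{\RR{R}}(b)$ and $h_{\RR{L}}(b) \le h_{\LL{L}}(a)$ still hold by the same one-line computations). Since the direct argument above sidesteps this entirely, I would present that version for self-containedness.
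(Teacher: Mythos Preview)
Your proposal is correct and matches the paper's approach: the statement is presented as a corollary of Lemma~\ref{lem:dHtoSegment} with no separate proof, so the intended argument is precisely your first one—specialize to the degenerate segment $\overline{ab}$ with $a=b=p$. Your alternative direct computation is also fine and equally short.
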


By Observation~\ref{obs:fsvd}, $h_{\LL{S}}$ corresponds to the
furthest segment Voronoi diagram (FSVD) of $\overleftarrow{S}$.  For
$d=2$, this diagram has size $O(n)$ and can be computed in
$O(n\log n)$ time~\cite{PapadopoulouD13}. Thus, by preprocessing the
FSVD for planar point location
queries~\cite{sarnak86planar_point_locat_using_persis_searc_trees} we
obtain a linear space data structure that allows us to evaluate
$h_{\LL{S}}(q)$ for any query point $q \in \R^2$ in $O(\log n)$
time. Note that Sarnak and
Tarjan~\cite{sarnak86planar_point_locat_using_persis_searc_trees}
define their point location data structure (a sweep with a partially
persistent red-black tree) for planar subdivisions whose edges are
line segments. However, their result directly applies to subdivisions
with $x$-monotone curved segments of low algebraic degree. Since the
edges in the FSVD are line segments, rays, or parabolic
arcs~\cite{PapadopoulouD13} we can easily split each such edge into
$O(1)$ $x$-monotone curved segments, and thus use their result as
well. Analogously, we build a linear space data structure for querying
$h_{\RR{S}}$, and obtain the following result through
Lemma~\ref{lem:dHtoSegment}.

\begin{theorem}
  \label{thm:dHtoPoint_algo}
  Let $S$ be a set of $n$ points in $\R^2$. In $O(n\log n)$ time we
  can build a data structure of linear size so that given a
  horizontal query segment $\overline{ab}$,
  $\dhd(S,\overline{ab})$ can be computed in $O(\log n)$ time.
\end{theorem}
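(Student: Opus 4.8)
The plan is to combine Observation~\ref{obs:fsvd}, Lemma~\ref{lem:dHtoSegment}, and the known complexity and construction bounds for the furthest segment Voronoi diagram (FSVD) in the plane. Lemma~\ref{lem:dHtoSegment} tells us that $\dhd(S,\overline{ab}) = \max\{h_{\LL{S}}(a), h_{\RR{S}}(b)\}$, so it suffices to build two data structures: one that evaluates $h_{\LL{S}}$ at an arbitrary query point in $O(\log n)$ time, and a symmetric one for $h_{\RR{S}}$. By Observation~\ref{obs:fsvd}, the graph of $h_{\LL{S}}$ is an upper envelope whose vertical projection is the FSVD of the halfline set $\LL{S}$, and likewise for $h_{\RR{S}}$ and $\RR{S}$. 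Hence evaluating $h_{\LL{S}}(q)$ reduces to locating the cell of the FSVD of $\LL{S}$ containing $q$, reading off which halfline $\LL{p}$ is furthest at $q$, and returning $h_{\LL{p}}(q)$, which is computable in $O(1)$ time from the closed-form expression given just before Observation~\ref{obs:monotonic}.

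First I would invoke the result of Papadopoulou and Dey~\cite{PapadopoulouD13}: for $n$ segments (here, halflines) in the plane the FSVD has $O(n)$ complexity and can be constructed in $O(n\log n)$ time, with edges that are line segments, rays, or parabolic arcs. Second, I would preprocess this planar subdivision for point location. The textbook tool is Sarnak--Tarjan persistent search trees~\cite{sarnak86planar_point_locat_using_persis_searc_trees}, which give $O(n)$ space, $O(n\log n)$ preprocessing, and $O(\log n)$ query for subdivisions with straight edges; as already noted in the excerpt, this extends to $x$-monotone low-degree curved edges, so after splitting each FSVD edge into $O(1)$ $x$-monotone pieces the machinery applies verbatim. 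Third, I would store with each cell the identity of its defining halfline so that, after locating $q$, the value $h_{\LL{S}}(q)$ is returned in $O(1)$ extra time. Running the same construction on $\RR{S}$ gives the companion structure; the two together have linear size and $O(n\log n)$ construction time.

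Finally, to answer a query for a horizontal segment $\overline{ab}$: evaluate $h_{\LL{S}}(a)$ in the first structure and $h_{\RR{S}}(b)$ in the second, each in $O(\log n)$ time, and return their maximum; correctness is exactly Lemma~\ref{lem:dHtoSegment} (no assumption on the relative order of $x_a$ and $x_b$ is needed, since that lemma handles both cases). I do not foresee a genuine obstacle here — the statement is essentially an assembly of cited black boxes — but the one point requiring a word of care is the mild extension of the Sarnak--Tarjan point-location structure to subdivisions with parabolic-arc edges; since the excerpt has already spelled out why this is legitimate (bounded algebraic degree, $O(1)$ $x$-monotone pieces per edge), the proof can simply cite that discussion. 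The resulting data structure has linear size, $O(n\log n)$ preprocessing time, and $O(\log n)$ query time, as claimed.
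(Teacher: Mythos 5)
Your argument is the paper's own, assembled in the same order: Lemma~\ref{lem:dHtoSegment} reduces $\dhd(S,\overline{ab})$ to two point evaluations $h_{\LL{S}}(a)$ and $h_{\RR{S}}(b)$; Observation~\ref{obs:fsvd} identifies the graphs of $h_{\LL{S}}$ and $h_{\RR{S}}$ with upper envelopes whose projections are the furthest segment Voronoi diagrams of $\LL{S}$ and $\RR{S}$; Papadopoulou--Dey~\cite{PapadopoulouD13} gives $O(n)$ size and $O(n\log n)$ construction; and Sarnak--Tarjan~\cite{sarnak86planar_point_locat_using_persis_searc_trees}, extended to bounded-degree $x$-monotone arcs as the paper already notes, gives $O(\log n)$ point location. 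The one slip is your parenthetical claim that Lemma~\ref{lem:dHtoSegment} ``handles both cases'' of the relative order of $x_a$ and $x_b$. The identity $\dhd(S,\overline{ab})=\max\{h_{\LL{S}}(a),h_{\RR{S}}(b)\}$ holds when $a$ is the left endpoint and fails otherwise: take $S=\{(1,1)\}$, $a=(2,0)$, $b=(0,0)$, for which the left-hand side is $1$ while the right-hand side is $\sqrt{2}$. The paper avoids this by fixing the convention that $a$ lies left of $b$ in Section~\ref{sec:Global_Approach}. Your data structure need not change, but the query routine should feed the leftmost of $a,b$ into $h_{\LL{S}}$ and the other into $h_{\RR{S}}$ (which is legitimate since $\dhd(S,\overline{ab})=\dhd(S,\overline{ba})$); this $O(1)$-time adjustment restores full correctness without affecting any of the claimed bounds.
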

Note that the directed Hausdorff distance from a polygonal curve $P$ to a (horizontal) line segment is attained at a vertex of $P$~\cite{de2017data}, thus, we can use Theorem~\ref{thm:dHtoPoint_algo} to compute it.

\subsection{The backward pairs term}
\label{sub:Horizontal:_Backward_Pairs}

In this section we show that the function $\DB$, representing the
backward pair distance, has complexity $O(n\log n)$, can be
computed in $O(n\log^2 n)$ time, and can be evaluated for
some query value $y$ in $O(\log n)$ time. 
This leads to an
efficient data structure for querying $P$ for the \frechet distance to
a horizontal query segment $\overline{ab}$, proving
Theorem~\ref{thm:horizontal_full_curve_ds}. 

Recall that $\DB(y)$ is the maximum over all function values  $\delta_{pq}(y)$ for all backward pairs $(p, q) \in \mathcal{B}(P)$. 
To avoid computing $\mathcal{B}(P)$, we define a new function $\delta'_{pq}(y)$ that applies to any ordered pair of points $(p, q) \in \Skew$. We show that for all backward pairs $(p, q) \in \mathcal{B}(P)$, we have $\delta'_{pq}(y) = \delta_{pq}(y)$. For any pair $(p, q) \in \Skew$ that is not a backward pair, we show that there exists a backward pair $(p', q') \in \mathcal{B}(P)$ such that $\delta'_{pq}(y) \le \delta'_{p'q'}(y) = \delta_{p'q'}(y)$. Consequently, we can compute the value $\DB(y)$ by computing the maximum value of $\delta'_{pq}(y)$ over all pairs in $\Skew$.
We will show how to do this in an efficient manner.

\subsubsection{Decomposing the backward pair distance}

Recall that for any two points $(p, q)$, we denote by $\LL{q}$ the leftward horizontal ray originating from $q$ and by $\RR{p}$ the rightward horizontal ray originating from $p$. 
For each pair of points $(p, q) \in \Skew$, we define the \emph{pair distance} between a query $\overline{ab}$ at height $y$ and $(p,q)$ as the Hausdorff distance from a horizontal line of height $y$ to $(\RR{p} \cup \LL{q})$, or more formally:
\begin{equation*}
    \delta'_{pq}(y) = \min_x \max \left\{h_{\LL{q}}( (x, y) ),
      h_{\RR{p}}( (x, y ) ) \right\}.
\end{equation*}

 First we analyze the case where $(p,q)$ is a backward pair.   
% Note that for all points $p$, the function $x \mapsto\|(x,y)-p\|$ is convex, and minimized at $x=x_p$.
%  The function $\delta_{pq}(y)$ is defined as  $\delta_{pq}(y) = \min_x \max \left\{\|(x,y)-p\|,\|(x,y)-q\| \right\}$.
%   Since $x_q \leq x_p$, it follows from convexity that $\max \left\{\|(x,y)-p\|,\|(x,y)-q\| \right\}$ is minimized for an $x$ in $[x_q, x_p]$.
%  Since for any point $p$, $\|(x,y)-p\| = \max \left\{h_{\RR{p}}(( x, y ) ), h_{\LL{p}}( (x, y) )\right \}$, we obtain (see also Fig.~\ref{fig:distance}): 

 \begin{lemma}
 	\label{lem:backward}
 	Let $(p, q) \in \Skew$ be a pair of points with $x_p \ge x_q$. Then for all $y$, $\delta_{pq} (y) = \delta'_{pq} (y)$.
 \end{lemma}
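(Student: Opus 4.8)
The goal is to show that for a backward pair $(p,q)$, i.e., with $x_p \ge x_q$, the two functions coincide: $\delta_{pq}(y) = \delta'_{pq}(y)$ for every $y$. Recall that $\delta_{pq}(y) = \min_x \max\{\|(x,y)-p\|,\|(x,y)-q\|\}$ while $\delta'_{pq}(y) = \min_x \max\{h_{\LL{q}}((x,y)), h_{\RR{p}}((x,y))\}$. So it suffices to argue that, for the purpose of taking the minimum over $x$, replacing the point distances $\|(x,y)-p\|$ and $\|(x,y)-q\|$ by the halfline distances $h_{\RR{p}}((x,y))$ and $h_{\LL{q}}((x,y))$ does not change the outer minimum. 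The plan is to identify where the minimizing $x$ lies and show that in that regime the halfline distances agree with the point distances.

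\textbf{Key steps.} First I would recall from the displayed case analysis in Section~\ref{sub:Horizontal:_The_Hausdorff_Term} that $h_{\RR{p}}((x,y)) = \|(x,y)-p\|$ exactly when $x \ge x_p$, and equals $|y - y_p|$ (which is $\le \|(x,y)-p\|$) when $x \le x_p$; symmetrically $h_{\LL{q}}((x,y)) = \|(x,y)-q\|$ when $x \le x_q$ and equals $|y-y_q| \le \|(x,y)-q\|$ otherwise. Hence pointwise $h_{\RR{p}} \le \|\cdot - p\|$ and $h_{\LL{q}} \le \|\cdot - q\|$, which immediately gives $\delta'_{pq}(y) \le \delta_{pq}(y)$; the work is the reverse inequality. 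For that, I would argue that the minimizer $x^*$ of $\max\{h_{\LL{q}}, h_{\RR{p}}\}$ can be taken in the closed interval $[x_q, x_p]$ (this is exactly where the backward-pair hypothesis $x_q \le x_p$ is used, so the interval is nonempty). This follows from Observation~\ref{obs:monotonic}: $x \mapsto h_{\RR{p}}(x,y)$ is nonincreasing and $x \mapsto h_{\LL{q}}(x,y)$ is nondecreasing, so their upper envelope is a unimodal (valley-shaped) function of $x$; moving $x$ to the right of $x_p$ only increases $h_{\RR{p}}$ (which there equals the true point distance and is increasing), and moving $x$ left of $x_q$ only increases $h_{\LL{q}}$, so neither direction can help — the minimum is attained for some $x^* \in [x_q, x_p]$. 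For such $x^*$ we have $x^* \ge x_q$, so $h_{\LL{q}}(x^*,y) = |y-y_q|$, and $x^* \le x_p$, so $h_{\RR{p}}(x^*,y) = |y-y_p|$. Thus $\delta'_{pq}(y) = \max\{|y-y_q|,|y-y_p|\}$. Finally I would observe that $\delta_{pq}(y) = \max\{|y-y_q|,|y-y_p|\}$ as well: the point minimizing $\max\{\|(x,y)-p\|,\|(x,y)-q\|\}$ over the horizontal line at height $y$ either lies on the perpendicular bisector of $p,q$ within the slab $[x_q,x_p]$ — giving a value at most $\max\{|y-y_q|,|y-y_p|\}$ — or is clamped to one of $x_q, x_p$; in every case one checks the value equals $\max\{|y-y_q|,|y-y_p|\}$, using again that $x_q \le x_p$ so that whichever of $p,q$ is vertically farther from height $y$ governs. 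Comparing the two expressions yields $\delta_{pq}(y) = \delta'_{pq}(y)$.

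\textbf{Main obstacle.} The only real subtlety is pinning down that the minimizing $x$ for the $\delta'$ formulation lies in $[x_q,x_p]$ and simultaneously that the minimizing $x$ for the $\delta$ formulation yields the same clean closed form $\max\{|y-y_q|,|y-y_p|\}$; both rest on the monotonicity (Observation~\ref{obs:monotonic}) together with the backward-pair inequality $x_q \le x_p$, which is what makes the relevant slab nonempty and forces the optimal horizontal position to "see" both $p$ and $q$ only through their $y$-offsets. Once the interval localization is established, the rest is a direct substitution into the piecewise definitions of $h_{\LL{q}}$ and $h_{\RR{p}}$.
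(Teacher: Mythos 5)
The piecewise case analysis you invoke is exactly backwards, and the error propagates to an incorrect closed form. From the paper's displayed formula, $h_{\RR{p}}((x,y)) = \|(x,y)-p\|$ when $x \le x_p$ (i.e., when $(x,y)$ is left of the apex $p$ of the rightward ray) and $= |y-y_p|$ when $x \ge x_p$; symmetrically $h_{\LL{q}}((x,y)) = \|(x,y)-q\|$ when $x \ge x_q$ and $= |y-y_q|$ when $x \le x_q$. You have swapped both conditions. Because of this, once you localize the minimizer $x^*$ to $[x_q,x_p]$ you conclude $h_{\LL{q}}(x^*,y) = |y-y_q|$ and $h_{\RR{p}}(x^*,y) = |y-y_p|$, and hence $\delta'_{pq}(y) = \max\{|y-y_p|,|y-y_q|\}$. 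That formula is what one gets for a \emph{non}-backward pair (it is precisely the content of Lemma~\ref{lem:nonbackward}); it is generally wrong here. A concrete counterexample: with $p=(2,0)$, $q=(0,1)$, $y=0$ one has $x_p \ge x_q$, yet $\delta_{pq}(0) = \min_x\max\{|x-2|,\sqrt{x^2+1}\} = 5/4$ (attained at $x=3/4$), while $\max\{|y-y_p|,|y-y_q|\} = 1$.

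The correct mechanism, and the one the paper uses, is the opposite of what you wrote: on the interval $[x_q,x_p]$ (nonempty precisely because the pair is backward) the halfline distances \emph{coincide} with the point distances, $h_{\RR{p}}((x,y)) = \|(x,y)-p\|$ (since $x \le x_p$) and $h_{\LL{q}}((x,y)) = \|(x,y)-q\|$ (since $x \ge x_q$). So $\max\{h_{\LL{q}},h_{\RR{p}}\}$ and $\max\{\|\cdot-p\|,\|\cdot-q\|\}$ are the \emph{same} function on $[x_q,x_p]$, and both minimizers live in that interval (yours by the monotonicity of $h_{\RR{p}}$, $h_{\LL{q}}$; the paper's $\delta_{pq}$ minimizer by convexity of $x\mapsto\|(x,y)-p\|$ and $x\mapsto\|(x,y)-q\|$). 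That is what forces $\delta_{pq}(y) = \delta'_{pq}(y)$; no clean closed form like $\max\{|y-y_p|,|y-y_q|\}$ should be expected in the backward case. Your localization step and the inequality $\delta'_{pq}\le\delta_{pq}$ are fine; the bug is entirely in the case analysis of $h_{\RR{p}}$ and $h_{\LL{q}}$ (and, as a symptom, in the aside claiming that ``moving $x$ to the right of $x_p$ only increases $h_{\RR{p}}$'' --- it is monotonically nonincreasing, and constant for $x \ge x_p$).
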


 \begin{proof}
 	For all points $p$, the function $x \mapsto\|(x,y)-p\|$ is convex, and minimized at $x=x_p$.
 	The function $\delta_{pq}(y)$ is defined as  $\delta_{pq}(y) = \min_x \max \left\{\|(x,y)-p\|,\|(x,y)-q\| \right\}$.
 	Since $x_q \leq x_p$, it follows from the convexity that the function value $\max \left\{\|(x,y)-p\|,\|(x,y)-q\| \right\}$ is minimized for an $x$-coordinate in $[x_q, x_p]$ (and thus $\delta_{pq}(y)$ is realized by an $x$-value in $[x_q, x_p]$).
 	For all points $p$, $\|(x,y)-p\| = \max \left\{h_{\RR{p}}(( x, y ) ), h_{\LL{p}}( (x, y) )\right \}$, thus we observe that:
 	\begin{align*}
 	\xdistance_{pq}(y)
 	&= \min_{x\in[x_q,x_p]} \max \left\{\|(x,y)-p\|,\|(x,y)-q\| \right\}\\
 	&= \min_{x\in[x_q,x_p]} \max \left\{h_{\LL{q}}( (x,y ) ),h_{\LL{q}}( (x,y) ),h_{\LL{p}}( (x,y ) ),h_{\RR{p}}( (x,y) ) \right\}\\
 	&= \min_{x\in[x_q,x_p]} \max \left\{h_{\LL{q}}( (x,y) ),h_{\RR{p}}( (x,y)) \right\} \textnormal{ (see Fig.~\ref{fig:distance})}\\
 	&= \min_x \max \left\{h_{\LL{q}}( (x,y) ),h_{\RR{p}}( (x,y)) \right\}\\
 	&= \xdistance_{pq}'(y).
 	\end{align*}
 \end{proof}

 \begin{figure}[tb]
 	\centering
 	\includegraphics{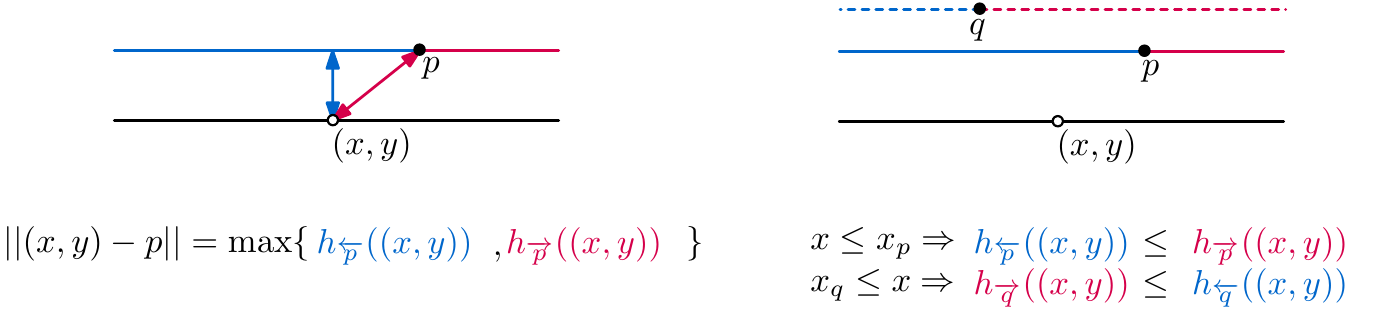}
 	\caption{An illustration of the argument of Lemma~\ref{lem:backward}. %By $d(\overleftarrow{p}, (x, y))$ we denote the distance between $\overleftarrow{p}$ and a point $(x, y)$.
 	}
 	\label{fig:distance}
 \end{figure}
 
 \noindent
 The consequence of the above lemma is that for each $(p,q) \in \mathcal{B}(P)$, $\xdistance_{pq}(y) = \xdistance_{pq}'(y)$. 
 
 Next we make an observation about pairs of points that are not a backward pair: 
 
 \begin{lemma}\label{lem:nonbackward}
 	Let $(p, q) \in \Skew$ be a pair of points with $x_p < x_q$, then  $\xdistance_{pq}'(y)=\max \left\{\xdistance_{qq}(y),\xdistance_{pp}(y) \right\}$. 
 \end{lemma}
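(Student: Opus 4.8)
The plan is to unwind the definition of $\delta'_{pq}(y) = \min_x \max\{h_{\LL{q}}((x,y)), h_{\RR{p}}((x,y))\}$ when $x_p < x_q$, and show that in this regime the two halflines $\RR{p}$ and $\LL{q}$ actually overlap in the horizontal slab between $x_p$ and $x_q$, so that the "worst" witness for each of the two terms reduces to the purely vertical distance to $p$ (resp.\ to $q$). Concretely, I would first record, using the explicit case formulas for $h_{\LL{q}}$ and $h_{\RR{p}}$ given just before Observation~\ref{obs:monotonic}, the monotonicity from Observation~\ref{obs:monotonic}: $x \mapsto h_{\RR{p}}((x,y))$ is nonincreasing and $x \mapsto h_{\LL{q}}((x,y))$ is nondecreasing. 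Hence the upper envelope $x \mapsto \max\{h_{\LL{q}}((x,y)), h_{\RR{p}}((x,y))\}$ is a unimodal (single-valley) function of $x$, and its minimum is attained either at a crossing point of the two functions or on the flat part of one of them.

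Next I would pin down where the minimum lies by exploiting $x_p < x_q$. For $x \in [x_p, x_q]$ we have $x \ge x_p$, so $h_{\RR{p}}((x,y)) = \|(x,y)-p\|$... wait, rather: for $x \ge x_p$ we are on the "$\|p-q\|$" branch; for $x \le x_q$ we have $h_{\LL{q}}((x,y)) = |y_q - y|$. So at $x = x_p$, $h_{\RR{p}}((x_p,y)) = \|(x_p,y)-p\| = |y_p - y| = \delta_{pp}(y)$, while $h_{\LL{q}}((x_p,y)) = |y_q - y| = \delta_{qq}(y)$ since $x_p \le x_q$. Symmetrically, at $x = x_q$ the term $h_{\LL{q}}$ drops to $|y_q-y|$ and $h_{\RR{p}}$ has grown to $\|(x_q,y)-p\| \ge |y_p-y|$. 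The key claim is that on the whole interval $[x_p, x_q]$ we have $h_{\RR{p}}((x,y)) \ge |y_p-y| = \delta_{pp}(y)$ and $h_{\LL{q}}((x,y)) = |y_q - y| = \delta_{qq}(y)$ (the latter because $x \le x_q$). Therefore for $x$ in this interval the envelope equals $\max\{\delta_{qq}(y), h_{\RR{p}}((x,y))\}$, which by the nonincreasing behaviour of $h_{\RR{p}}$ is minimized at $x = x_q$, giving value $\max\{\delta_{qq}(y), \|(x_q,y)-p\|\}$. I then need to check that moving $x$ outside $[x_p,x_q]$ cannot do better: for $x < x_p$, $h_{\RR{p}}$ only increases (monotonicity) while already $h_{\RR{p}}((x_p,y)) = \delta_{pp}(y)$, and for $x > x_q$, $h_{\LL{q}}$ only increases while already at $x_q$ it equals $\delta_{qq}(y)$; combined with unimodality this forces the global minimum to sit at $x=x_p$, $x=x_q$, or a crossing in between, and a short comparison of these candidate values yields exactly $\max\{\delta_{qq}(y), \delta_{pp}(y)\}$. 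I should double-check the crossing case: at a crossing $x^\*\in[x_p,x_q]$ the common value is $h_{\LL{q}}((x^\*,y)) = |y_q-y| = \delta_{qq}(y)$ on the flat branch, so it is $\ge \delta_{qq}(y)$, hence never below $\max\{\delta_{pp},\delta_{qq}\}$ either; and the endpoint candidates are $\max\{\delta_{pp}(y),\|(x_p,y)-q\|\}$ at $x=x_p$ (which is $\ge \delta_{pp}$) and $\max\{\delta_{qq}(y),\|(x_q,y)-p\|\}$ at $x=x_q$. To conclude equality I argue the minimum over all $x$ is at most $\max\{\delta_{pp}(y),\delta_{qq}(y)\}$ by exhibiting a single $x^\*$ — e.g. the crossing point, or more cleanly the point where $h_{\RR{p}}$ meets the level $\delta_{qq}(y)$ if that lies in $[x_p,x_q]$, else an endpoint — at which both terms are $\le \max\{\delta_{pp},\delta_{qq}\}$, and at least that because $h_{\LL{q}}\ge |y_q-y|=\delta_{qq}$ and $h_{\RR{p}}\ge|y_p-y|=\delta_{pp}$ everywhere.

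The main obstacle I anticipate is the casework in the last step: depending on the sign of $\delta_{pp}(y) - \delta_{qq}(y)$ and on whether $x_q$ is far enough from $x_p$, the optimal $x^\*$ is either the crossing point (when $\|(x_q,y)-p\| \ge \delta_{qq}(y) \ge \delta_{pp}(y)$, so we slide $x$ leftward from $x_q$ until $h_{\RR{p}}$ rises to $\delta_{qq}(y)$) or an endpoint of $[x_p,x_q]$ (when one term already dominates throughout). A clean way to package this, which I would prefer, is to note that both $h_{\LL{q}}((x,y)) \ge |y_q - y|$ and $h_{\RR{p}}((x,y)) \ge |y_p - y|$ hold for every $x$, giving immediately $\delta'_{pq}(y) \ge \max\{\delta_{pp}(y),\delta_{qq}(y)\}$; and for the reverse inequality, pick $x^\* = \max\{x_p, \min\{x_q, \hat x\}\}$ where $\hat x \le x_p$... — more simply, observe that since $x_p < x_q$ the function $h_{\RR{p}}$ restricted to $x \le x_q$ attains the value $|y_p-y|$ at $x = x_p$ and the flat value of $h_{\LL{q}}$ there is $|y_q-y|$, so the unimodal envelope on $[x_p,x_q]$ has minimum value $\le \max\{|y_p-y|,|y_q-y|\}$ attained somewhere in $[x_p,x_q]$ (either at $x_p$ if $|y_q-y|\ge$ the envelope there, or at the first $x$ where $h_{\RR{p}}$ drops to $|y_q-y|$), which is exactly $\max\{\delta_{pp}(y),\delta_{qq}(y)\}$. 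I will also reference Fig.~\ref{fig:distance} for the geometric picture, mirroring the style of Lemma~\ref{lem:backward}.
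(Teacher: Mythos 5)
Your final ``clean way to package this'' is correct and is in essence the paper's argument: the paper observes that $x \mapsto h_{\LL{q}}((x,y))$ is constant at its minimum $|y_q-y|$ for all $x \le x_q$, that $x \mapsto h_{\RR{p}}((x,y))$ is constant at its minimum $|y_p-y|$ for all $x \ge x_p$, and that since $x_p < x_q$ these two flat regions overlap on $[x_p,x_q]$; on that overlap the envelope is identically $\max\{|y_p-y|,|y_q-y|\}$, which together with the (trivial) lower bound gives the result with no casework at all. Your lower-bound observation ($h_{\LL{q}}\ge|y_q-y|$, $h_{\RR{p}}\ge|y_p-y|$ everywhere) together with evaluating the envelope at $x=x_p$ is one way to phrase the same thing.

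However, the exploratory middle portion of your write-up contains a real error that you should excise rather than leave as a ``route not taken.'' You repeatedly treat $h_{\RR{p}}$ as if its Euclidean branch were active on $[x_p,x_q]$, e.g.\ claiming that at $x=x_q$ ``$h_{\RR{p}}$ has grown to $\|(x_q,y)-p\|$'' and that the envelope there is $\max\{\delta_{qq}(y),\|(x_q,y)-p\|\}$. This is the behavior of $h_{\LL{p}}$, not $h_{\RR{p}}$: for $x\ge x_p$ one has $h_{\RR{p}}((x,y))=|y_p-y|$, a constant. If your stated intermediate formula for the minimum were correct, it would in general strictly exceed $\max\{\delta_{pp}(y),\delta_{qq}(y)\}$ (since $\|(x_q,y)-p\|>|y_p-y|$ whenever $x_q\neq x_p$), which would contradict the lemma. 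Likewise the ``crossing point'' and unimodality discussion is unnecessary once you recognize that both functions are flat over the entire interval $[x_p,x_q]$: there is no crossing to search for, only a plateau of the envelope. Replacing the middle with the paper's one-line observation about the overlap $(-\infty,x_q]\cap[x_p,\infty)=[x_p,x_q]$ removes the error and the casework simultaneously.
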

 
 \begin{proof}
 	By definition, $\xdistance_{pq}'(y) = \min_x \max \left\{h_{\LL{q}}( (x, y) ), h_{\RR{p}}( (x, y ) ) \right\}$. For any $y$, the function $x \mapsto h_{\LL{q}}( (x, y) )$ is minimal and constant for all $x \le x_q$.
 	Similarly, the function $x \mapsto h_{\RR{p}}( (x, y) )$ is minimal and constant for all $x_p \le x$. Since $(-\infty, x_q] \cap [x_p, \infty) = [x_p, x_q]$, it follows that:
 	\begin{align*}
 	\xdistance_{pq}'(y) &= \min_x \max \left\{h_{\LL{q}}( (x, y) ), h_{\RR{p}}( (x, y ) ) \right\} \\
 	&= \min_{x \in [x_p, x_q]} \max \left\{h_{\LL{q}}( (x, y) ), h_{\RR{p}}( (x, y ) ) \right\}  \\
 	&= \max\{ |y_q - y|, |y_p - y| \} = \max \{\xdistance_{qq}(y),\xdistance_{pp}(y)\}.
 	\end{align*}
 	Where the last equality follows from the observation in the proof of Lemma~\ref{lem:backward} that for any point $p$, the function $x \mapsto\|(x,y)-p\|$ is convex, and minimized at $x=x_p$.
 \end{proof}
 
 For all pairs of points $(p, q) \in \Skew$ either $(p,q)$ is a backward pair or $x_p < x_q$, and thus we obtain the following lemma.

\begin{lemma}
\label{lem:rewrite}
  For any polygonal curve $P$ and any $y$,
  \[
  \DB(y) = \max \left\{ \xdistance_{pq}(y) \mid (p,q)\in\mathcal{B}(P) \right\} = \max \{ \xdistance'_{p q}(y) \mid (p, q) \in \Skew \}.
  \]
\end{lemma}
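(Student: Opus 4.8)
The plan is to derive Lemma~\ref{lem:rewrite} as an immediate consequence of the three preceding lemmas, by a short case analysis on whether a pair in $\Skew$ is a backward pair. The first equality in the statement is just the definition of $\DB$, so all the work is in showing $\max\{\xdistance_{pq}(y) \mid (p,q)\in\mathcal{B}(P)\} = \max\{\xdistance'_{pq}(y) \mid (p,q)\in\Skew\}$.

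First I would establish the inequality ``$\le$''. Every backward pair $(p,q)\in\mathcal{B}(P)$ is in particular a pair in $\Skew$ with $x_q \le x_p$, so Lemma~\ref{lem:backward} gives $\xdistance_{pq}(y) = \xdistance'_{pq}(y)$. Hence each term on the left-hand maximum equals a term appearing in the right-hand maximum, and taking maxima preserves this, so the left-hand side is at most the right-hand side.

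For the reverse inequality ``$\ge$'', I would take an arbitrary pair $(p,q)\in\Skew$ and bound $\xdistance'_{pq}(y)$ by a term of the form $\xdistance_{p'q'}(y)$ with $(p',q')\in\mathcal{B}(P)$. If $(p,q)$ is a backward pair, then by Lemma~\ref{lem:backward} we have $\xdistance'_{pq}(y)=\xdistance_{pq}(y)$ and we are done with $(p',q')=(p,q)$. Otherwise $x_p < x_q$, and Lemma~\ref{lem:nonbackward} gives $\xdistance'_{pq}(y) = \max\{\xdistance_{qq}(y),\xdistance_{pp}(y)\}$. Here I need to observe that the trivial pairs $(p,p)$ and $(q,q)$ are both backward pairs (trivial backward pairs), hence both lie in $\mathcal{B}(P)$, so $\max\{\xdistance_{qq}(y),\xdistance_{pp}(y)\}$ is bounded by the left-hand maximum. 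In either case $\xdistance'_{pq}(y)$ is at most $\max\{\xdistance_{p'q'}(y)\mid(p',q')\in\mathcal{B}(P)\}$; taking the maximum over all $(p,q)\in\Skew$ yields ``$\ge$''.

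The two bullet points I should be careful about — rather than genuine obstacles — are: (i) confirming that for every pair $(p,q)\in\Skew$ the dichotomy ``$(p,q)$ is a backward pair, i.e. $x_q\le x_p$'' versus ``$x_p < x_q$'' is exhaustive (it is, since these are complementary conditions on real $x$-coordinates, and the paper's general position assumption rules out $x_p = x_q$ for $p\neq q$), and (ii) that trivial pairs $(p,p)$ are indeed included in $\mathcal{B}(P)$, which the paper states explicitly when defining backward pairs. With these in hand the proof is a two-line sandwich argument and there is no real computational content beyond what the earlier lemmas already provide.
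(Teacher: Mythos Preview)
Your proposal is correct and follows essentially the same approach as the paper: the paper also derives Lemma~\ref{lem:rewrite} immediately from Lemmas~\ref{lem:backward} and~\ref{lem:nonbackward} via the dichotomy ``backward pair vs.\ $x_p<x_q$,'' using that the trivial pairs $(p,p)$ and $(q,q)$ lie in $\mathcal{B}(P)$. Your write-up simply makes explicit the two inequalities that the paper leaves implicit.
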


%The above corollary allows us to express $\DB(y)$ without exclusively using backward pairs. However, we wish to further refine our definition of $\DB(y)$ by making the definition decomposable. To that end, we define $\DB(y)$ on pairs of subsets of $P$.
%Let $S, T$ be the vertex sets of any two subsets of $P$, we define:
%\begin{equation}
%  \DB^{S \times T}(y) = \max \left\{ \xdistance'_{pq}(y) \mid (p,q) \in (S\times T) \cap \Skew \right\}.
%  \label{eq:def_dbcross}
%\end{equation}

%\noindent
%We will later devise a divide-and-conquer algorithm (see Theorem~\ref{thm:bisectorDS}) that computes $\DB(y) = \DB^{P \times P}(y)$ by computing it for subsets of the vertex set of $P$. 
%

\subsubsection{Relating $\DB(y)$ to furthest segment Voronoi diagrams}
\label{sub:relatingVoronoiDiagrams}

In this section we devise a divide and conquer algorithm
%(see Lemma~\ref{lem:backward_pair_ds})
that computes $\DB(y)$ by computing it for subsets of 
vertices of $P$.
Lemma~\ref{lem:rewrite} allows us to express $\DB(y)$ in terms of $\Skew$ instead of $\mathcal{B}(P)$.
Next we refine the definition of $\DB(y)$ to make it decomposable.
To that end, we define $\DB(y)$ on pairs of subsets of $P$.
Let $S, T$ be any two subsets of vertices of $P$, we define:
\begin{equation*}
  \DB^{S \times T}(y) = \max \left\{ \xdistance'_{pq}(y) \mid (p,q) \in (S\times T) \cap \Skew \right\}.
\end{equation*}

We show that we can compute $\DB^{S \times T}(y)$
efficiently using the $\xdistance'$ functions. To this end, we fix a value of $y$ and show that computing $\DB^{S \times T}(y)$ is
equivalent to computing an intersection between two curves that consist of a linear number of pieces, each of constant complexity.
%\rodrigo{This reads strange, I guess what is
%  meant that each curve is formed by a linear number of pieces of
%  constant complexity each?}.
We then argue that as $y$ changes, the intersection point moves along a linear complexity curve that can be computed in $O(n\log n)$ time. 
This allows us to query $\DB(y)=\DB^{P \times P}(y)$ in $O(\log n)$ time, for any query height~$y$.

\subparagraph{From distance to intersections.} For a fixed value $y'$,
computing $\DB^{S \times T}(y')$ is equivalent to computing an intersection point between two curves:

\begin{lemma}\label{lem:functioneq}
	%  Let $y \in \R$ be a fixed height. Let %$S, T \subseteq P$
	%  $S,T$ be subsets of vertices of $P$ such that all points in $S$ precede all points in $T$.
	%   The graphs of the functions
	%   $x \mapsto h_{\RR{S}}( (x, y) )$ and
	%   $x \mapsto h_{\LL{T}}( (x, y) )$ intersect in a single point
	%   $(x^*,y)$. Moreover, $ \DB^{ S\times T }(y) = h_{\RR{S}}( (x^*, y) )  =  h_{\LL{T}}( (x^*, y) )  $.
	Let $y' \in \R$ be a fixed height, let $p$ be a point in $P$, and let
	$T$ be a subset of the vertices of $P[p,p_n]$. %be any subset of points on $P$ succeeding $p$ ($T$ may include $p$). 
	The graphs of the functions
	$x \mapsto h_{\RR{p}}( (x, y') )$ and
	$x \mapsto h_{\LL{T}}( (x, y') )$ intersect at a single point
	$(x^*,y')$. Moreover,
	$ \DB^{ \{ p \} \times T }(y') =   h_{\LL{T}}( (x^*, y') ) =    h_{\RR{p}}( (x^*, y') )$
\end{lemma}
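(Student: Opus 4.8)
The plan is to establish the claimed intersection by combining the monotonicity observations (Observation~\ref{obs:monotonic}) with the definition of $\delta'_{pq}$. First I would fix the height $y'$ and consider the two univariate functions $f(x) = h_{\RR{p}}((x,y'))$ and $g(x) = h_{\LL{T}}((x,y'))$. By Observation~\ref{obs:monotonic}, $f$ is monotonically decreasing in $x$ and $g$ is monotonically increasing in $x$ (the latter because $h_{\LL{T}}$ is an upper envelope of the individually monotonically increasing functions $h_{\LL{t}}$, $t \in T$). Moreover both functions are continuous, $f$ is bounded below by $|y_p - y'| \ge 0$ and $g \to \infty$ as $x \to \infty$, while $g$ is bounded below and $f \to \infty$ as $x \to -\infty$. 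Hence $f - g$ is strictly decreasing wherever it is not constant; I would check that it is in fact strictly monotone on the relevant range (using that for large $|x|$ one of the two is strictly changing while the other is flat), so $f - g$ has exactly one zero $x^*$, giving a single intersection point $(x^*, y')$ with $f(x^*) = g(x^*)$.

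Next I would connect this intersection value to $\DB^{\{p\}\times T}(y')$. By definition, $\DB^{\{p\}\times T}(y') = \max\{\delta'_{pq}(y') \mid q \in T,\ (p,q)\in\Skew\}$, and since $T$ consists of vertices of $P[p,p_n]$, every $q \in T$ satisfies $(p,q)\in\Skew$, so the constraint is vacuous. Now $\delta'_{pq}(y') = \min_x \max\{h_{\LL{q}}((x,y')), h_{\RR{p}}((x,y'))\}$, and the maximum of a nondecreasing and a nonincreasing function is minimized exactly where they cross (or at the boundary of the flat region); I would spell this out to get $\delta'_{pq}(y') = h_{\LL{q}}((x^*_q,y')) = h_{\RR{p}}((x^*_q,y'))$ where $x^*_q$ is the crossing abscissa for that particular $q$. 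Taking the maximum over $q \in T$, and using that $h_{\LL{T}} = \max_{q\in T} h_{\LL{q}}$, the pointwise maximum of the increasing functions, I would argue that the max over $q$ of "the crossing value of $h_{\RR{p}}$ with $h_{\LL{q}}$" equals the crossing value of $h_{\RR{p}}$ with the upper envelope $h_{\LL{T}}$: intuitively, raising the increasing side to its upper envelope pushes the crossing point to the right, and $h_{\RR{p}}$ evaluated there is the largest among all the individual crossing values. This is where I would be most careful — showing the interchange of $\max_q$ with the "intersect-with-$h_{\RR p}$" operation is valid precisely because $h_{\RR p}$ is monotone decreasing and each $h_{\LL q}$ is monotone increasing, so larger envelope value at a point forces the crossing to move in the direction along which $h_{\RR p}$ is larger.

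The main obstacle I anticipate is handling the flat portions of the two functions cleanly: $h_{\RR{p}}((x,y'))$ is constant $=|y_p-y'|$ for $x \le x_p$, and $h_{\LL{q}}((x,y'))$ is constant $=|y_q-y'|$ for $x \le x_q$, so "the intersection point" and "the minimizing $x$" may not literally coincide when the two constant plateaus overlap in the wrong order — exactly the degenerate situation analyzed in Lemma~\ref{lem:nonbackward}. I would address this by noting that the statement only asserts existence of a single intersection point with the stated common value, and that the value $\min_x \max\{\cdot,\cdot\}$ always equals that common crossing value; if the plateaus force the functions to touch along an interval I would pick $x^*$ as (say) the left endpoint of that interval, which still satisfies both equalities. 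After these checks the lemma follows by assembling the two halves: a unique $x^*$ from monotonicity, and the value identification $\DB^{\{p\}\times T}(y') = h_{\LL{T}}((x^*,y')) = h_{\RR{p}}((x^*,y'))$ from the min-max-at-the-crossing argument.
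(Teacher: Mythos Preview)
Your approach is correct and matches the paper's: both use the monotonicity from Observation~\ref{obs:monotonic} to get a unique crossing, and both reduce $\DB^{\{p\}\times T}(y')$ to that crossing value via the interchange $\max_{q\in T}\min_x\max\{h_{\RR p},h_{\LL q}\}=\min_x\max\{h_{\RR p},\max_{q}h_{\LL q}\}$ --- the paper writes this swap in a single algebraic line, whereas you spell out the underlying ``raising the envelope pushes the crossing'' justification. Two small notes: the paper disposes of your plateau worry by invoking the standing general-position assumption (distinct $y$-coordinates) rather than selecting an interval endpoint, and you have the flat side of $h_{\RR p}$ reversed (it is constant for $x\ge x_p$, not $x\le x_p$), though this does not affect the argument.
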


\begin{proof}
	Recall that $P[p,p_n]$ is the subcurve of $P$ from $p$ to $p_n$.
	In Observation~\ref{obs:monotonic} we noted that for all fixed $y'$, the function $x \mapsto h_{\LL{T}}( (x, y') )$ is monotonically increasing.
	Similarly, for any point $p$, the function  $x \mapsto h_{\RR{p}}( (x,
	y') )$ is monotonically decreasing. Meaning that the value $\min_x \max
	\left \{  h_{\RR{p}}( (x, y') ),  h_{\LL{T}}( (x, y')) \right \}$ is
	realized at $x^*$. Notice that $(x^*,y')$ is a unique point as we assume general position, i.e., no two points have the same $y$-coordinate. Next, we apply the definition of $\DB^{ \{ p \} \times T }(y')$:
	\begin{align*}
	\DB^{ \{ p \} \times T }(y') = 
	\max_{q \in T} \left\{ \delta'_{pq}(y') \right\} &=  
	\max_{q \in T} \left\{ \min_x \max \{ h_{\RR{p}}( (x, y') ), h_{\LL{q}}( (x, y') ) \} \right\} = \\
	\min_x \max \left \{  h_{\RR{p}}( (x, y') ) ,   \max_{q \in T} \{   h_{\LL{q}}( (x, y') ) \}   \right\} &= 
	\min_x \max \left \{  h_{\RR{p}}( (x, y') ),  h_{\LL{T}}( (x, y')) \right \} \Rightarrow \\
	\DB^{ \{ p \} \times T }(y') &= h_{\LL{T}}( (x^*, y') ) = h_{\RR{p}}( (x^*, y') ).
	\end{align*}
	\qedhere
\end{proof}

Lemma~\ref{lem:intersection} now follows easily from the previous.

\begin{lemma}\label{lem:intersection}
  Let $y' \in \R$ be a fixed height, and
	let %$S, T \subseteq P$
	$S,T$ be subsets of vertices of $P$ such that the vertices in $S$
    precede all vertices of $T$.
	The graphs of the functions
	$x \mapsto h_{\RR{S}}( (x, y') )$ and
	$x \mapsto h_{\LL{T}}( (x, y') )$ intersect at a single point
	$(x^*,y')$. Moreover, $ \DB^{ S\times T }(y') = h_{\RR{S}}( (x^*, y') )  =  h_{\LL{T}}( (x^*, y') )  $.
	%Let $S, T \subseteq P$ such that all points in $S$ precede all points in $T$.   For a fixed value of $y$, let $(x^*, y)$ be the point of intersection between $h_{\RR{S}}( (x, y) )$ and $h_{\LL{T}}( (x, y) )$.
	% Then:
	%   \[ 
	% \DB^{ S\times T }(y) = h_{\RR{S}}( (x^*, y) )  =  h_{\LL{T}}( (x^*, y) )
	%  \]
\end{lemma}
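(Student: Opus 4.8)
The plan is to reduce Lemma~\ref{lem:intersection} to the single-source case already handled in Lemma~\ref{lem:functioneq}, using the monotonicity facts from Observation~\ref{obs:monotonic} together with the fact that an upper envelope of monotone functions is monotone. First I would observe that, for the fixed height $y'$, the function $x \mapsto h_{\RR{S}}((x,y'))$ is the pointwise maximum over $p \in S$ of the functions $x \mapsto h_{\RR{p}}((x,y'))$, each of which is monotonically decreasing by Observation~\ref{obs:monotonic}; hence $h_{\RR{S}}((\cdot,y'))$ is itself monotonically decreasing. Symmetrically, $x \mapsto h_{\LL{T}}((x,y'))$ is a maximum of monotonically increasing functions, hence monotonically increasing. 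A continuous monotonically decreasing function and a continuous monotonically increasing function cross in exactly one point, provided the relevant limiting inequalities hold; since $h_{\RR{p}}$ tends to $\|\cdot\|\to\infty$ as $x\to-\infty$ and flattens to $|y_p-y'|$ as $x\to+\infty$ (and dually for $h_{\LL{q}}$), one gets a unique crossing point $(x^*,y')$, and general position rules out any degenerate flat overlap. This establishes the first sentence of the statement.

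For the value claim, I would note that because the vertices of $S$ precede those of $T$ along $P$, every pair $(p,q)\in S\times T$ already lies in $\Skew$, so
\[
  \DB^{S\times T}(y') = \max_{p\in S}\ \max_{q\in T}\ \delta'_{pq}(y')
                      = \max_{p\in S}\ \DB^{\{p\}\times T}(y').
\]
By Lemma~\ref{lem:functioneq}, each inner term equals the common value of $h_{\RR{p}}$ and $h_{\LL{T}}$ at their unique crossing, which is exactly $\min_x \max\{h_{\RR{p}}((x,y')), h_{\LL{T}}((x,y'))\}$. Pulling the outer $\max_{p\in S}$ inside the $\min_x$ (justified, as in the proof of Lemma~\ref{lem:functioneq}, because $\max$ over a finite set commutes past $\min_x \max$ of these monotone functions) yields
\[
  \DB^{S\times T}(y') = \min_x \max\left\{ \max_{p\in S} h_{\RR{p}}((x,y')),\ h_{\LL{T}}((x,y')) \right\}
                      = \min_x \max\left\{ h_{\RR{S}}((x,y')),\ h_{\LL{T}}((x,y')) \right\},
\]
and since $h_{\RR{S}}((\cdot,y'))$ decreases while $h_{\LL{T}}((\cdot,y'))$ increases, this $\min_x\max$ is attained exactly at the crossing point $x^*$, giving $\DB^{S\times T}(y') = h_{\RR{S}}((x^*,y')) = h_{\LL{T}}((x^*,y'))$.

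The only real subtlety I anticipate is justifying the interchange of $\max_{p\in S}$ with $\min_x \max$: one must check that the outer max is realized at the same $x$-coordinate structure, i.e. that $\max_{p} \min_x \max\{h_{\RR{p}}, h_{\LL{T}}\} = \min_x \max\{\max_p h_{\RR{p}}, h_{\LL{T}}\}$. This follows from the same monotonicity argument used in Lemma~\ref{lem:functioneq}: the left side picks, for the maximizing $p$, the unique $x$ balancing $h_{\RR{p}}$ against $h_{\LL{T}}$; at that $x$ every other $h_{\RR{p'}}$ is no larger, so $h_{\RR{S}}=h_{\RR{p}}$ there, and moving $x$ in either direction strictly increases one of the two envelopes — so this $x$ is also the balance point of $h_{\RR{S}}$ against $h_{\LL{T}}$. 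Everything else is a direct unwinding of definitions, so I expect the lemma to follow "easily from the previous," as the text promises.
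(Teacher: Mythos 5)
Your proposal is correct and follows essentially the same route as the paper: both reduce $\DB^{S\times T}(y')$ to $\max_{p\in S}\DB^{\{p\}\times T}(y')$ (valid because $S\times T\subseteq\Skew$) and then invoke Lemma~\ref{lem:functioneq} together with the monotonicity of $h_{\RR{S}}$ and $h_{\LL{T}}$. You are just more explicit than the paper about why the $\max_{p}$ may be pushed through the $\min_x\max$ — the paper leaves that step implicit — but the underlying argument is the same.
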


\begin{proof}
	If all points in $S$ precede all points in $T$, then all elements in $S \times T$ are in $\Skew$ and we note: $\DB^{S \times T}(y') = \max_{p \in S} \left\{ \, \DB^{ \{p \} \times T }(y')\right\}.$ The equality then  follows from Lemma~\ref{lem:functioneq}.
\end{proof}

\begin{figure}[tb]
  \centering
  \includegraphics{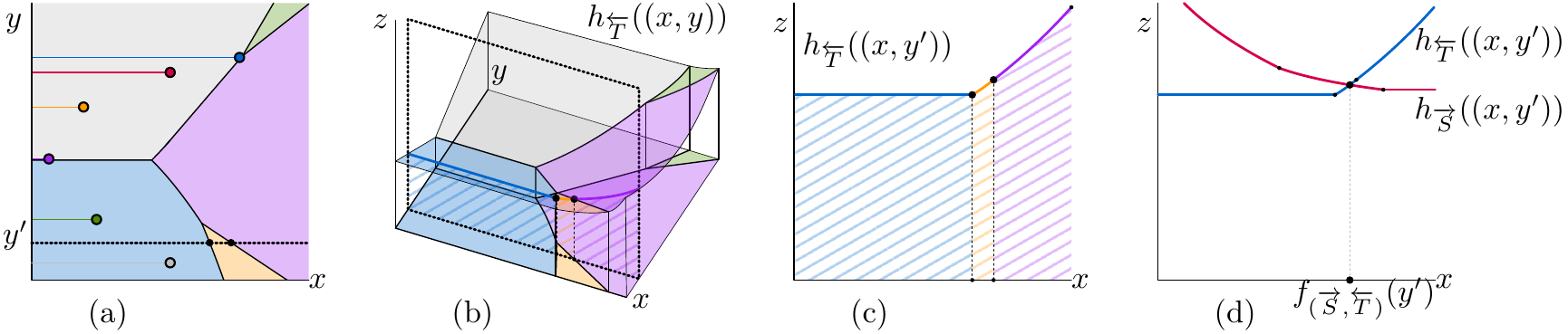}
  \caption{(a) A set $\LL{T}$ of rays arising from a set $T$ of points, with their FSVD.  
    (b) $h_{\LL{T}}( (x, y) )$ is the
    distance from $(x,y)$ to the ray corresponding to the Voronoi
    cell at $(x,y)$.  (c) For a fixed $y'$, 
     $x \mapsto h_{\LL{T}}( (x, y') )$ is monotonically
    increasing.  
    (d) The value $x$ for which
    $h_{\LL{T}}( (x, y') )=h_{\RR{S}}( (x, y') )$ corresponds to  $f_{({\RR{S},\LL{T}})}(y')$, and to $\DB^{S\times T}(y')$.}
  \label{fig:functionexample}
\end{figure}

Given such a pair $S, T$, for a fixed value $y'$, we can compute a
linear-size representation of $x \mapsto h_{\LL{T}}( (x, y') )$ in
$O(n \log n)$ time as follows (see Fig.~\ref{fig:functionexample}). We
compute the $\FSVD$ of $\LL{T}$ in
$O(n \log n)$ time.  Then, we compute the Voronoi cells intersected by
a line of height $y'$ (denoted by $\ell_{y'}$) in left-to-right order in
$O(n \log n)$ time. Suppose that a segment of $\ell_{y'}$ intersects only
the Voronoi cell belonging to a halfline $\LL{q} \in \LL{T}$, then on
this domain the function $h_{\LL{T}}( (x, y') ) = h_{\LL{q}}( (x, y') )$,
and thus it has constant complexity. A horizontal line can intersect at
most a linear number of Voronoi cells, hence the function has total
linear complexity.
Analogous arguments apply to $x \mapsto h_{\RR{S}}( (x, y') )$.
%Similarly, we can compute a linear-size representation of $x %\mapsto h_{\RR{S}}( (x, y') )$.

%Given these two
%linear-size representations we can compute the point of intersection
%$(x^*, y)$ (and with it the function value $\DB^{S \times T}(y)$) in
%linear time.\lena{linear time?} \frank{linear time should be doable;
%  it is just intersecting two FSVDs with a horizontal line right? But
%  do we really want this, I mean we need the thing for all $y$ right?
%  So I think the paragraph below is all we need.}

\subparagraph{Varying the $y$-coordinate.}
Let $f_{({\RR{S},\LL{T}})}\colon y\mapsto x^*$
be the function that for each $y$ gives the intersection point $x^*$ such that  $h_{\RR{S}}( (x^*, y) )  =  h_{\LL{T}}( (x^*, y) )$.
The intersection point $(x^*,y')$ lies on a Voronoi edge
of the $\FSVD$ of $(\LL{T}\cup \RR{S})$. More
precisely, it lies on the bichromatic bisector of the $\FSVD$ of
$\LL{T}$ and the one of $\RR{S}$ (see Fig.~\ref{fig:surface}).
When we vary the $y$-coordinate, the intersection point traces this bisector.
This implies that, given the $\FSVD$ of $\RR{S}$ and the $\FSVD$ of $\LL{T}$, the graph of $f_{({\RR{S},\LL{T}})}$ can be computed in $O(n)$ time.

\begin{figure}
	\centering
	\includegraphics{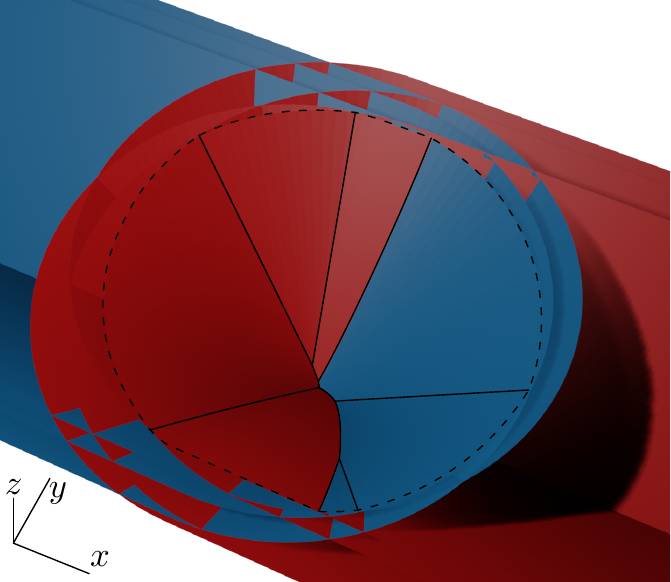}
	\caption{The $h_{\LL{T}}$ (blue) and $h_{\RR{S}}$ (red) functions
		(here clipped at $z=1$) intersect in a curve corresponding to
		$y \mapsto f_{(\LL{T},\RR{S})}(y)$.
	}
	\label{fig:surface}
\end{figure}

\begin{lemma}\label{lem:horizontal_cross_pairs_ds}
	Let $S, T$ be subsets of vertices of $P$ such that all vertices in $S$
	precede all vertices in~$T$. The function $\DB^{S \times T}$ has
	complexity $O(n)$ and can be computed in $O(n \log n)$
	time. Evaluating $\DB^{S \times T}(y)$, for some query value
	$y\in \R$, takes $O(\log n)$ time.
	% Let $S,T$ be subsets of vertices of $P$ %$S,T \subseteq P$ 
	% such that all points in $S$ precede all
	% points in $T$. The graph of the function
	% $ y\mapsto ( \DB^{S \times T}(y))$ is a $y$-monotone curve
	% consisting of $O(n)$ pieces, each of low algebraic degree,
	% and can be computed in $O(n \log n)$ time.
\end{lemma}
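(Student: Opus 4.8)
\textbf{Proof plan for Lemma~\ref{lem:horizontal_cross_pairs_ds}.}
The plan is to assemble the lemma from the structural facts established just above, treating separately (i) the complexity bound on $\DB^{S\times T}$, (ii) the preprocessing time, and (iii) the query time. By Lemma~\ref{lem:intersection}, for every fixed height $y$ we have $\DB^{S\times T}(y) = h_{\RR{S}}((x^*,y)) = h_{\LL{T}}((x^*,y))$, where $(x^*,y)$ is the unique intersection point of the two monotone curves $x\mapsto h_{\RR{S}}((x,y))$ and $x\mapsto h_{\LL{T}}((x,y))$; and by the discussion of $f_{(\RR{S},\LL{T})}$, as $y$ varies this point $x^* = f_{(\RR{S},\LL{T})}(y)$ traces the bichromatic bisector between the $\FSVD$ of $\RR{S}$ and the $\FSVD$ of $\LL{T}$ inside the $\FSVD$ of $\LL{T}\cup\RR{S}$.

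First I would handle the complexity bound. The graph of $\DB^{S\times T}$ is the image, under the map $y\mapsto h_{\RR{S}}(f_{(\RR{S},\LL{T})}(y),y)$, of the reals; since $f_{(\RR{S},\LL{T})}$ is supported by the bichromatic bisector of two planar furthest-segment Voronoi diagrams of $O(n)$ horizontal halflines each, and such a diagram has $O(n)$ total complexity with edges that are line segments, rays, or parabolic arcs~\cite{PapadopoulouD13}, the bisector itself is an $x$-monotone (indeed $y$-monotone) polygonal-parabolic curve of $O(n)$ pieces. On each such piece both $f_{(\RR{S},\LL{T})}$ and the relevant halfline of $\RR{S}$ (equivalently of $\LL{T}$) are fixed and of constant algebraic degree, so $\DB^{S\times T}$ restricted to the corresponding $y$-interval is a single constant-complexity function. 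Hence $\DB^{S\times T}$ has $O(n)$ complexity.

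Next, the preprocessing time. I would compute the $\FSVD$ of $\RR{S}$ and of $\LL{T}$, each in $O(n\log n)$ time and $O(n)$ size via~\cite{PapadopoulouD13}. As already noted in the paragraph preceding the lemma, given these two diagrams the graph of $f_{(\RR{S},\LL{T})}$ — i.e., the bichromatic bisector, together with the breakpoints at which the nearest-from-below halfline of $\RR{S}$ or the nearest-from-above halfline of $\LL{T}$ changes — can be traced in $O(n)$ time by a standard merge-like walk through the two diagrams. From this representation the piecewise description of $\DB^{S\times T}$, sorted by $y$, is read off in $O(n)$ additional time. The $O(n\log n)$ computation of the two $\FSVD$s dominates. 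For queries, I would store the $O(n)$ breakpoints of $\DB^{S\times T}$ in sorted order and, on query $y$, binary-search for the containing interval and evaluate the associated constant-complexity piece, giving $O(\log n)$ time.

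The main obstacle is making the "complexity $O(n)$'' claim fully rigorous: one must be sure that the bichromatic bisector of the two furthest-segment Voronoi diagrams is a connected, $y$-monotone curve with only $O(n)$ breakpoints, and that no additional breakpoints are introduced when we further subdivide by which halfline of $\RR{S}$ (resp.\ $\LL{T}$) realizes the value. Monotonicity follows from Observation~\ref{obs:monotonic} (the two envelopes are monotone in $x$ for every fixed $y$, so their intersection $x^*$ is well-defined and the bisector cannot turn back), and connectedness plus the $O(n)$ bound follow from the fact that the bisector is a single chain of edges of the $\FSVD$ of $\LL{T}\cup\RR{S}$, which has $O(n)$ edges in total; the extra subdivision by realizing halfline only refines along this same chain and the number of changes is bounded by the chain length, hence still $O(n)$.
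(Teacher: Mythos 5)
Your proof is correct and follows essentially the same approach as the paper: both argue that as $y$ varies the intersection $x^* = f_{(\RR{S},\LL{T})}(y)$ traces the bichromatic bisector between the two furthest-segment Voronoi diagrams, which is a $y$-monotone chain of $O(n)$ pieces of constant algebraic degree, giving an $O(n)$-complexity description of $\DB^{S\times T}$ computable in $O(n\log n)$ time and queryable by binary search in $O(\log n)$ time. You merely unpack a few of the steps that the paper leaves implicit (it relies on the paragraph immediately preceding the lemma for the bisector and complexity argument), but the decomposition, the supporting lemmas, and the data structure are identical.
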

\begin{proof}
	We consider the function $y\mapsto (f_{({\RR{S},\LL{T}})}(y),\DB^{S \times T}(y))$. The graph of this curve is $y$-monotone. 
	Each $y$ has a unique value $f_{({\RR{S},\LL{T}})}(y)$, and we need to compute $\DB^{S \times T}(y))=h_{\RR{S}}( f_{({\RR{S},\LL{T}})}(y) , y) )  =  h_{\LL{T}}( f_{({\RR{S},\LL{T}})}(y), y) )$. 
	Hence the function $y\mapsto (f_{({\RR{S},\LL{T}})}(y),\DB^{S \times T}(y))$ is a well-defined curve in $\R^3$
	parameterized only by $y$. Thus, we consider the graph of the
	function $y\mapsto (f_{({\RR{S},\LL{T}})}(y),\DB^{S \times T}(y))$
	projected into the $(y,z)$-plane. It has linear complexity and can be
	computed in $O(n \log n)$ time. By storing the breakpoints of this
	function in a balanced binary search tree we can then evaluate
	$\DB^{S\times T}(y)$, for any $y$, in $O(\log n)$ time.
\end{proof}

\subsubsection{Applying divide and conquer}
\label{sub:divide-and-conquer}

We begin by analyzing the complexity of the function $\DB(y)$.
Consider a partition of $P$ into subcurves $S$ and $T$ with at most
$\lceil n/2 \rceil$ vertices each, and with $S$ occurring before $T$
along $P$. 
Our approach relies on the following fact.

\begin{observation}
  \label{obs:db_minx}
  Let $P$ be partitioned into two subcurves $S$ and $T$ with all
  vertices in $S$ occurring on $P$ before the vertices of $T$. We have
  that
  $
    \DB(y)=\DB^{P \times P}(y) = \max \left\{ \DB^{S \times S}(y), \DB^{S
    \times T}(y), \DB^{T \times T}(y) \right\}
    $.
\end{observation}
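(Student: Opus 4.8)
The statement to prove is Observation~\ref{obs:db_minx}: that $\DB^{P\times P}(y)$ decomposes as the maximum of $\DB^{S\times S}(y)$, $\DB^{S\times T}(y)$, and $\DB^{T\times T}(y)$ when $P$ is split into a prefix $S$ and a suffix $T$. The plan is to reason purely at the level of index sets. Recall from Lemma~\ref{lem:rewrite} that $\DB(y) = \max\{\delta'_{pq}(y) \mid (p,q)\in\Skew\}$, where $\Skew$ is the set of ordered pairs $(p,q)$ with $p$ preceding or equal to $q$ along $P$, and more generally that $\DB^{A\times B}(y) = \max\{\delta'_{pq}(y)\mid (p,q)\in (A\times B)\cap\Skew\}$. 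So everything reduces to a set-theoretic identity about $\Skew$.

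**Key step.** The single thing to check is the partition identity
\[
(P\times P)\cap\Skew \;=\; \bigl((S\times S)\cap\Skew\bigr)\;\cup\;\bigl((S\times T)\cap\Skew\bigr)\;\cup\;\bigl((T\times T)\cap\Skew\bigr).
\]
For the inclusion from right to left, each of the three sets on the right is a subset of $(P\times P)\cap\Skew$ since $S,T\subseteq P$. For the inclusion from left to right, take any $(p,q)\in(P\times P)\cap\Skew$, so $p$ precedes or equals $q$ along $P$. Since $P$ is partitioned into the prefix $S$ followed by the suffix $T$, the point $p$ lies in $S$ or in $T$, and likewise $q$. This gives four cases: $(p,q)\in S\times S$, $S\times T$, $T\times S$, or $T\times T$. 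The case $(p,q)\in T\times S$ is impossible: if $p\in T$ and $q\in S$ then $q$ strictly precedes $p$ along $P$ (every vertex of $S$ occurs before every vertex of $T$), contradicting $(p,q)\in\Skew$. Hence $(p,q)$ lies in one of the three remaining sets. This establishes the set identity; applying $\max\{\delta'_{pq}(y)\mid\cdot\}$ to both sides and using that $\max$ over a union is the max of the maxes yields exactly the claimed equation, with $\DB^{P\times P}(y)=\DB(y)$ following from Lemma~\ref{lem:rewrite}.

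**Main obstacle.** There is essentially no obstacle here; the observation is a bookkeeping step whose only subtlety is remembering why the ``crossing'' block $T\times S$ contributes nothing, namely that such pairs are never in $\Skew$, so no backward pair is lost. One minor point to state carefully is that the three index sets on the right need not be disjoint (if $S$ and $T$ share a vertex at the split point, or in the degenerate partition cases), but this is harmless: overlap does not affect a union, nor the maximum over a union. I would therefore present the proof in two or three sentences: invoke Lemma~\ref{lem:rewrite} to reduce to $\Skew$, write the partition identity, dismiss the $T\times S$ case, and conclude by distributing the maximum over the union.
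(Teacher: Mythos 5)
Your proof is correct and takes essentially the same route as the paper's: reduce to the set identity on $\Skew$ via Lemma~\ref{lem:rewrite}, observe that $(T\times S)\cap\Skew=\emptyset$, and distribute the maximum over the remaining union. The paper does not spell out the argument (it states only the $(T\times S)\cap\Skew=\emptyset$ remark in a single line after the Observation), so your write-up is simply a complete expansion of the same justification; nothing to change.
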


%We have that
%$\DB(y)=\DB^{P \times P}(y) = \max \left\{ \DB^{S \times S}(y), \DB^{S
    %\times T}(y), \DB^{T \times T}(y) \right\}$. 
    
Note that we can omit
$\DB^{T \times S}$ because $(T\times S) \cap \Skew = \emptyset$.  
%We obtain the following lemma.

\begin{lemma}\label{lem:backward_pair_ds}
	Let $P$ be a polygonal curve with $n$ vertices. Function $\DB$
	has complexity $O(n\log n)$ and can be computed in $O(n\log^2 n)$
	time. Evaluating $\DB(y)$, for a given $y \in \mathbb{R}$, takes
	$O(\log n)$ time
\end{lemma}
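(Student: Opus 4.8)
The plan is to apply divide and conquer along the curve $P$, using Observation~\ref{obs:db_minx} as the recursion and Lemma~\ref{lem:horizontal_cross_pairs_ds} to handle the "cross" term $\DB^{S\times T}$ at each level. First I would set up the recursion: partition $P$ into two subcurves $S$ and $T$, each with at most $\lceil n/2\rceil$ vertices, with $S$ preceding $T$ along $P$. By Observation~\ref{obs:db_minx}, $\DB = \max\{\DB^{S\times S},\,\DB^{S\times T},\,\DB^{T\times T}\}$, so it suffices to (i) recursively compute $\DB^{S\times S}$ and $\DB^{T\times T}$, and (ii) compute the cross term $\DB^{S\times T}$ directly. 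Since all vertices of $S$ precede all vertices of $T$, Lemma~\ref{lem:horizontal_cross_pairs_ds} applies to the pair $(S,T)$ and tells us that $\DB^{S\times T}$ has complexity $O(n)$ and can be computed in $O(n\log n)$ time.

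Next I would bound the complexity of $\DB$. Let $C(n)$ denote the maximum complexity of $\DB$ over an $n$-vertex curve. The recursion gives $C(n) \le 2\,C(n/2) + O(n)$ for the two subproblems plus the $O(n)$-complexity cross term; however, the upper envelope of the three functions needs a more careful count. Each of $\DB^{S\times S}$, $\DB^{S\times T}$, $\DB^{T\times T}$ is a univariate function of $y$, so their pointwise maximum (upper envelope) has complexity at most the sum of their complexities plus the number of pairwise intersections. Since each is piecewise of bounded algebraic degree (pieces live on bisectors of constant-degree curves), by a standard Davenport–Schinzel / upper-envelope-of-functions argument the envelope complexity is linear in the total number of pieces. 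Hence $C(n) \le 2\,C(n/2) + O(n)$, which solves to $C(n) = O(n\log n)$.

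For the construction time, at each of the $O(\log n)$ levels of the recursion the total work across all nodes is: $O(n\log n)$ for all the cross-term computations at that level (each cross term on an $n_i$-vertex piece costs $O(n_i\log n_i)$, and the $n_i$ sum to $n$), plus $O(n\log n)$ to merge the three upper envelopes at each node (sorting/merging breakpoints of total size $O(n)$ per level, times a logarithmic factor, or linear-time merging of sorted envelope sequences if one is careful). Summing over $O(\log n)$ levels gives $O(n\log^2 n)$ total. Finally, for query time, I would store the $O(n\log n)$ breakpoints of the final function $\DB$ in a balanced binary search tree sorted by $y$; a query for $\DB(y)$ is then a single predecessor search taking $O(\log n)$ time, exactly as in the proof of Lemma~\ref{lem:horizontal_cross_pairs_ds}.

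The main obstacle I anticipate is making the complexity bound $C(n) = O(n\log n)$ rigorous rather than heuristic: one must argue that taking the upper envelope of the three functions at each recursion node does not blow up the complexity super-linearly in their combined size. This requires knowing that the pieces of $\DB^{S\times S}$, $\DB^{S\times T}$, $\DB^{T\times T}$ are arcs of bounded description complexity (they are, being traced out along bisectors of the relevant FSVDs per Lemma~\ref{lem:horizontal_cross_pairs_ds}), so that any two pieces cross $O(1)$ times and the envelope is a Davenport–Schinzel sequence of bounded order on $O(\text{total size})$ arcs, hence of size linear in the total. A secondary technical point is to ensure the merge step at each node runs fast enough — scanning the three sorted breakpoint lists simultaneously in linear time per node — so the overall bound stays at $O(n\log^2 n)$ and is not pushed to $O(n\log^3 n)$.
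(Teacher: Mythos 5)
Your divide-and-conquer scheme, the two recurrences $C(n)\le 2C(n/2)+O(n)$ and $R(n)=2R(n/2)+O(n\log n)$, and the balanced-BST query step all match the paper's proof exactly. Where you diverge is the justification of the complexity bound $C(n)=O(n\log n)$, and that is precisely where your proposal has a gap.

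You claim that, because the pieces of $\DB^{S\times S}$, $\DB^{S\times T}$, $\DB^{T\times T}$ are arcs of bounded description complexity crossing pairwise $O(1)$ times, the upper envelope is ``a Davenport--Schinzel sequence of bounded order \dots hence of size linear in the total.'' This last step is false as stated: a Davenport--Schinzel sequence of order $s\ge 3$ on $m$ symbols has length $\lambda_s(m)$, which is \emph{strictly super-linear} in $m$ (already $\lambda_3(m)=\Theta(m\,\alpha(m))$, and $\lambda_4(m)=\Theta(m\,2^{\alpha(m)})$). The $\xdistance_{pq}$ curves are hyperbola branches, and two of them can cross more than once, so a generic DS bound only gives you a $\lambda_s$-type bound, not a clean $O(\cdot)$. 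If your recurrence variable $C(n)$ is ``envelope complexity,'' the step $C(n)\le 2C(n/2)+O(n)$ is not justified by DS theory: merging three envelopes of total piece count $2C(n/2)+O(n)$ could leave you with $\lambda_s(2C(n/2)+O(n))$ breakpoints, and this slack compounds across $\Theta(\log n)$ recursion levels.

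The paper sidesteps this by making the recurrence count something different. Let $C(n)$ be the number of \emph{contributing backward pairs} (those whose function $\xdistance_{pq}$ appears somewhere on the final envelope). Lemma~\ref{lem:horizontal_cross_pairs_ds} says only $O(n)$ cross pairs from $S\times T$ can contribute, so $C(n)=2C(\lceil n/2\rceil)+O(n)=O(n\log n)$. Then the paper invokes a \emph{structural} result of de~Berg \etal~\cite{de2017data} --- specific to these hyperbola-type distance functions, not a generic DS bound --- that the complexity of the envelope $\DB$ is in fact linear in the number of contributing backward pairs. That is the missing ingredient in your argument: you should count contributing pairs (as you essentially do) and then cite that structural linearity result, rather than attempt to derive linearity from Davenport--Schinzel machinery, which does not deliver it. The time recurrence, the merge-by-simultaneous-scan remark, and the $O(\log n)$ query via a BST over breakpoints are all fine and agree with the paper.
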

\begin{proof}
	By Observation~\ref{obs:db_minx}, $ 
	\DB(y)=\DB^{P \times P}(y) = \max \left\{ \DB^{S \times S}(y), \DB^{S
		\times T}(y), \DB^{T \times T}(y) \right\}$, and by
	Lemma~\ref{lem:horizontal_cross_pairs_ds}, the complexity of
	$\DB^{S \times T}$ is $O(n)$. Hence, there are $O(n)$ backward pairs
	from $S \times T$ that could contribute to $\DB^{P \times P}$. Let
	$C(n)$ denote the number of backward pairs contributing to
	$\DB^{P \times P}$.  It follows that
	$C(n) = 2C(\lceil n/2 \rceil) + O(n)$, which solves to $O(n\log
	n)$. Since the complexity of $\DB = \DB^{P \times P}$ is
	linear in the number of contributing backward pairs~\cite{de2017data}, $\DB$ has
	complexity $O(n \log n)$.
	
	To compute $\DB$ we apply the same divide and conquer strategy. We recursively partition $P$ into roughly equal size subcurves
	$S$ and $T$. At each step, we compute the (graph of
	the) function $\DB^{S \times T}$, and merge it with the recursively
	computed functions $\DB^{S \times S}$ and $\DB^{T \times T}$. By
	Lemma~\ref{lem:horizontal_cross_pairs_ds}, computing
	$\DB^{S \times T}$ takes $O(n\log n)$ time. Computing the upper
	envelope of $\DB^{S \times T}$, $\DB^{S \times S}$, and
	$\DB^{T \times T}$, takes time linear in the complexity of the
	functions involved. The function $\DB^{S \times T}$ has complexity
	$O(n)$. However, $\DB^{S \times S}$, $\DB^{T \times T}$, and the output $\DB^{P \times P}$,  have complexity $O(n\log n)$. Hence,
	we spend $O(n\log n)$ time to compute $\DB^{P \times P}$. The total
	running time obeys the recurrence $R(n) = 2R(n/2) + O(n\log n)$, 
	that resolves to $O(n\log^2 n)$ time.
	
	We can easily store (the breakpoints of) $\DB^{P \times P}$ in a
	balanced binary search tree so that we can evaluate $\DB(y)$ for some
	query value $y$ in $O(\log (n \log n) ) = O(\log n)$ time.
\end{proof}
Eq.~\ref{eq:decompose_FD} together with Theorem~\ref{thm:dHtoPoint_algo} and Lemma~\ref{lem:backward_pair_ds} thus imply that
we can store $P$ in an $O(n\log n)$ size data structure so that we can
compute $\fd(P,\overline{ab})$ for some horizontal query segment
$\overline{ab}$ in $O(\log n)$ time. That is, we have established
Theorem~\ref{thm:horizontal_full_curve_ds}.

\subsection{ Querying for subcurves}
\label{sec:Horizontal:_Subtrajectories}

In this section we extend our data structure to support \frechet
distance queries to subcurves of $P$, establishing 
Theorem~\ref{thm:horizontal_subcurve_ds}. A query now consists of two
points $s$ and $t$ on $P$ and the horizontal query segment
$\overline{ab}$, and we wish to efficiently report the \frechet
distance $\fd(P[s,t],\overline{ab})$ between the subcurve $P[s,t]$,
from $s$ to $t$, and $\overline{ab}$. We show that we can support such
queries in $O(\log^3 n)$ time using $O(n\log^2 n)$ space.

We assume that given $s$ and $t$ we can determine the
edges of $P$ containing $s$ and $t$ respectively, in constant
time. 
Note that this is the case, for instance, when $s$ is given as a
pointer to its containing edge together with a location. If $s$ and
$t$ are given only as points in the plane, and $P$ is not
self-intersecting, we can find these edges in $O(\log n)$ time using a 
linear-size data structure for vertical ray-shooting~\cite{sarnak86planar_point_locat_using_persis_searc_trees}
on $P$. If $P$ does contain self-intersections this requires more
space and preprocessing time~\cite{agarwal_ray_1992}.

By Eq.~\ref{eq:decompose_FD}, $\fd(P[s,t],\overline{ab})$ can again be
decomposed into four terms, the first two of which can be trivially
computed in constant time. 
We build two separate data structures for the remaining two terms: the Hausdorff distance and the backward pair distance terms.
%
%As we show in
%Appendix~\ref{app:hausdorff-subtrajectories} there is an $O(n\log n)$
%size data structure that allows us to compute the term
%$\dhd(P[s,t],\overline{ab})$ in $O(\log^2 n)$ time. We thus focus on
%building a data structure to compute the backward pair distance term.

\subsubsection{Hausdorff distance for subcurves}
\label{sub:hausdorff-subtrajectories}

We build a data structure on $P$ such that given points $s,t$ on $P$
and $\overline{ab}$ we can report $\dhd(P[s,t],\overline{ab})$
efficiently. 
In particular, we use a two-level data structure of size $O(n \log n)$ that supports queries
in $O(\log^2 n)$ time, after $O(n \log n)$ preprocessing time, based on two observations: 

\begin{enumerate}
	\item The Hausdorff distance is decomposable in its first
	argument. That is:
	\[
	\textnormal{for all points } m \in P[s, t], \quad
	\dhd(P[s,t],\overline{ab}) =
	\max\left\{\dhd(P[s,m],\overline{ab}),\dhd(P[m,t],\overline{ab})\right\}.
	\]
	\item The Hausdorff distance $\dhd(P[s,t],\overline{ab})$ is realized
	by $s$, $t$, or a vertex $p$ of $P[s,t]$, that is:
	\[
	\textnormal{there is a vertex $p \in P$}   \textnormal{ s.t. }\dhd(P[s,t],\overline{ab}) =
	\max \{\dhd(p,\overline{ab}), \quad \dhd(s,\overline{ab}), \quad \dhd(t,\overline{ab})\}.
	\]
\end{enumerate}

\noindent
The data structure is a balanced binary search tree (essentially, a
1D-range tree) in which the leaves store the
vertices of $P$, in the order along $P$. Each internal node $\nu$
represents a \emph{canonical subcurve} $P_\nu$ and stores the vertices
of $P_\nu$ in the data structure of
Theorem~\ref{thm:dHtoPoint_algo}. Since these associated data
structures use linear space, the total space used is $O(n\log
n)$. Building the associated data structures from scratch would take
$O(n\log^2 n)$ time.  However, the following lemma immediately implies
that this time can be reduced to $O(n \log n)$:

\begin{figure}[t]
	\centering
	\includegraphics{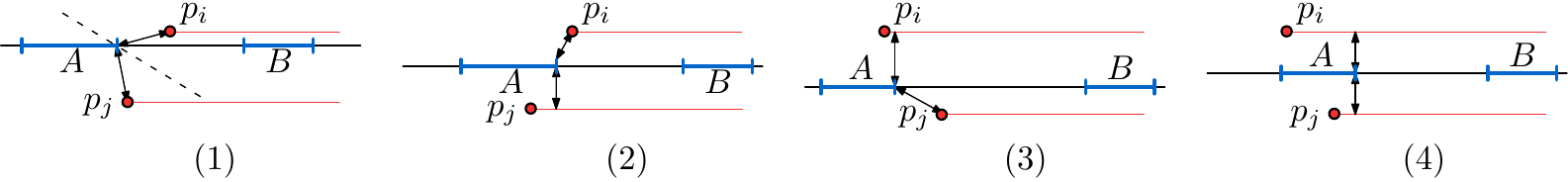}
	\caption{
		The four cases considered in the proof of Lemma~\ref{lemma:linearmerge} in order.
	}
	\label{fig:linearmerge}
\end{figure}

\begin{lemma}
	\label{lemma:linearmerge}
	Two instances of the data structure of Theorem~\ref{thm:dHtoPoint_algo}, consisting of a furthest segment Voronoi diagram preprocessed for point location, can be merged in linear time.
\end{lemma}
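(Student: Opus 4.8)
The plan is to show that two furthest segment Voronoi diagrams (FSVDs) of leftward horizontal halflines, call them $D_1 = \FSVD(\LL{S_1})$ and $D_2 = \FSVD(\LL{S_2})$, where every point of $S_1$ precedes every point of $S_2$ along $P$, can be merged into $\FSVD(\LL{S_1 \cup S_2})$ in time linear in the total size. The key structural fact to exploit is that we are not merging two arbitrary FSVDs: the halflines in $\LL{S_1}$ and $\LL{S_2}$ are all horizontal and point in the same (leftward) direction, so the combinatorial structure of each diagram is highly constrained. In particular, by Observation~\ref{obs:monotonic}, for any fixed $y$ the function $x \mapsto h_{\LL{S}}(x,y)$ is monotonically increasing, which means the FSVD of a set of leftward halflines, restricted to any horizontal line $\ell_y$, partitions $\ell_y$ into cells in an order that is determined by a single monotone sequence. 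I would first make this precise: the Voronoi cell boundaries of $\FSVD(\LL{S})$ intersected with $\ell_y$ occur in the same left-to-right order for which the corresponding halfline endpoints' contribution to $h_{\LL{S}}$ becomes dominant, and this order changes with $y$ only through the standard "beach line" type events.

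The core of the merge is a sweep. I would sweep a horizontal line $\ell_y$ downward (or upward) through both diagrams simultaneously, maintaining the left-to-right sequence of faces of $D_1$ crossed by $\ell_y$ and the left-to-right sequence of faces of $D_2$ crossed by $\ell_y$; this is possible in linear total time because the faces of each diagram, being a planar subdivision with $O(n)$ edges that are line segments, rays, or parabolic arcs, are crossed a linear number of times in total as $\ell_y$ sweeps, and we can precompute and merge-sort the breakpoints. At each height $y$, the merged diagram's restriction to $\ell_y$ is obtained by taking, at each $x$, the halfline realizing $\max\{h_{\LL{S_1}}(x,y), h_{\LL{S_2}}(x,y)\}$. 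Since both $x \mapsto h_{\LL{S_1}}(x,y)$ and $x \mapsto h_{\LL{S_2}}(x,y)$ are monotonically increasing (Observation~\ref{obs:monotonic}), their upper envelope along $\ell_y$ has a simple structure, and the "crossover" point $x^\ast(y)$ where $D_1$'s contribution overtakes $D_2$'s — which is exactly $f_{(\RR{S_2},\LL{S_1})}$-style bisector from Lemma~\ref{lem:intersection} applied in the appropriate direction (here the bichromatic bisector of $\FSVD(\LL{S_1})$ and $\FSVD(\LL{S_2})$) — is unique for each $y$ and traces a connected curve of linear complexity. So the merged diagram consists of: the part of $D_1$ to one side of this bichromatic bisector curve, the part of $D_2$ to the other side, and the bisector curve itself, which by the same reasoning used after Lemma~\ref{lem:intersection} can be computed in $O(n)$ time from $D_1$ and $D_2$. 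Gluing these pieces — clipping $D_1$ and $D_2$ along the bisector curve and stitching — is a linear-time planar-subdivision operation given that the bisector curve is $x$-monotone-ish (monotone in $y$) and we already have all three pieces in sorted order along it.

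The main obstacle I expect is arguing cleanly that the merge is genuinely a single bichromatic bisector curve and not something more complicated — i.e., that no face of $D_1$ and no face of $D_2$ gets fragmented into more than $O(1)$ pieces amortized, so that the total size stays $O(n)$ and the stitching stays linear. This boils down to the monotonicity: because on every horizontal line the $S_1$-contribution dominates exactly a left prefix and the $S_2$-contribution dominates exactly the complementary right suffix (this is the key consequence of $S_1$ preceding $S_2$ combined with Observation~\ref{obs:monotonic} — it is precisely the situation of Lemma~\ref{lem:intersection}), the region of the plane where $D_1$ "wins" is separated from the region where $D_2$ "wins" by a single $y$-monotone curve $x = f(y)$. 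Hence each diagram is clipped by one monotone curve, producing $O(n)$ total complexity, and the sweep touches each original edge and each breakpoint a constant number of times. Once that separation-by-a-single-monotone-curve claim is nailed down, the rest is a routine plane-sweep merge, and I would only need to check the low-level bookkeeping: maintaining the two active face-sequences, locating the crossover point on $\ell_y$ as $y$ varies, and recording the new Voronoi vertices and edges where the bisector curve meets edges of $D_1$ and $D_2$.
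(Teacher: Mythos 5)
The proposal has a genuine gap, and it also misses the main content of the paper's proof.

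First, the gap. You try to re-derive a linear-time merge of $\FSVD(\LL{S_1})$ and $\FSVD(\LL{S_2})$, and the crux of your argument is the claim that, for each fixed height $y$, the ``$S_1$ wins'' locus is a left prefix of the horizontal line $\ell_y$ and the ``$S_2$ wins'' locus is the complementary right suffix, so that the bichromatic bisector of the two diagrams is a single $y$-monotone curve. You justify this by saying ``it is precisely the situation of Lemma~\ref{lem:intersection}.'' But Lemma~\ref{lem:intersection} (and Observation~\ref{obs:monotonic}) apply to one set of \emph{leftward} halflines and one set of \emph{rightward} halflines: then $x \mapsto h_{\RR{S}}(x,y')$ is decreasing and $x \mapsto h_{\LL{T}}(x,y')$ is increasing, so their graphs cross exactly once. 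Here both diagrams are built from leftward halflines, so both $x \mapsto h_{\LL{S_1}}(x,y)$ and $x \mapsto h_{\LL{S_2}}(x,y)$ are increasing, and two increasing (indeed convex increasing) functions can cross more than once. Nothing in Observation~\ref{obs:monotonic} forces the winner along $\ell_y$ to be a single prefix/suffix split, and the fact that $S_1$ precedes $S_2$ along $P$ imposes no spatial structure whatsoever on the points themselves. Your single-merge-chain claim is therefore unsupported, and the stitching procedure built on top of it has no foundation. The paper does not attempt this derivation at all: it simply cites that two furthest-segment Voronoi diagrams can be merged in linear time~\cite{PapadopoulouD13}.

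Second, and more importantly, you do not address the other half of the data structure. The structure of Theorem~\ref{thm:dHtoPoint_algo} is an FSVD \emph{together with} a Sarnak--Tarjan-style point-location structure on top of it. Even granting a linear-time FSVD merge, one must still rebuild the point-location structure on the merged subdivision in linear time, and for a general planar subdivision this costs $O(n\log n)$. This is exactly where the substance of the paper's proof lies: it shows that every cell of the FSVD of $\RR{S}$ (symmetrically, of $\LL{S}$) is $y$-monotone, via a four-case contradiction argument, so that the point-location structure can be built in $O(n)$ time using the algorithm of~\cite{edelsbrunner_optimal_1986} for $y$-monotone subdivisions. Your proposal neither proves nor even mentions this monotonicity, which is the load-bearing step of the lemma.
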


\begin{proof}
	Since two furthest segment Voronoi diagrams can be merged in linear time~\cite{PapadopoulouD13}, it remains to show that the associated point location data structure can also be constructed in linear time.
	This is a well-known result if the regions are  $y$-monotone~\cite{edelsbrunner_optimal_1986}. 
	
	We prove that the furthest segment Voronoi diagram of $\RR{S}$ has $y$-monotone cells (the case of  $\LL{S}$ is symmetric).
	%  Suppose that there is some horizontal line $\ell_y$ at height $y$, and a ray $\RR{p_i} \in \RR{S}$ such that the furthest Voronoi region of $\RR{p_i}$ has two components that intersect $\ell_y$.
	
	Suppose for the sake of contradiction  that there is some horizontal line $\ell_y$ at height $y$, and a ray $\RR{p_i} \in \RR{S}$, such that the furthest Voronoi region of $\RR{p_i}$, intersected by $\ell_y$, has at least two maximal disjoint intervals $A$ and $B$  where $A$ is left of $B$ (these intervals contain the boundary of the associated Voronoi region).
	Consider the rightmost point $x$ in $A$.  Since $x$ coincides with the right boundary of $A$, there is a ray $\RR{p_j} \in \RR{S}$ such that $d((x,y), \RR{p_i} ) = d((x,y),\RR{p_j})$, and for an arbitrary small $\varepsilon > 0$, $d((x + \varepsilon,y), \RR{p_i} ) < d((x + \varepsilon,y),\RR{p_j})$.
	We make a distinction based on whether the distance from $(x, y)$ to $\RR{p_i}$ is realized by the distance  to the point $p_i$ or by the vertical distance to the line supporting $\RR{p_i}$  (similarly for the distance from $(x, y)$ to $\RR{p_j}$).
	Refer to Fig.~\ref{fig:linearmerge}.
	The first two cases show that the interval $B$ is empty and the last two show that the interval $A$ does not end at $x$.

	\paragraph{Case 1: $d((x,y), \RR{p_i} )  = d((x,y), p_i )$ and $d((x,y), \RR{p_j} )  = d((x,y), p_j )$.}
	In this case, $x$ lies on the bisector between $p_i$ and $p_j$ and the interval $A$ lies left of this bisector. Hence $p_i$ must lie right of $p_j$. 
	It follows that for all $x' > x$, $d( (x', y), \RR{p_i}) < d( (x', y), \RR{p_j})$. 
	%\lena{We need strictly smaller not $\le$, right?} \ivor{yes}
	Indeed if for such $x'$, $d ((x', y),  \RR{p_j} ) = d( (x', y), p_j)$ then it must be that  $d ((x', y),  \RR{p_i} ) = d( (x', y), p_i)$. Since $x'$ lies right of the bisector between $p_i$ and $p_j$ it follows that $d( (x', y), \RR{p_i}) = d( (x', y), p_i) < d( (x', y), \RR{p_j}) = d( (x', y), p_j)$.
	%\lena{Why can we follow this immediately? And if we can, why do we need the sentence before?}\lena{Or should this be $d( (x', y),p_i) < d( (x', y), p_j)$} \ivor{added the extra $= d ( (x', y), p_i)$ for clarity}
	For all $x' > x$ where $d((x', y),  \RR{p_j} ) \neq d( (x', y), p_j)$, the value $d((x', y), \RR{p_j})$ remains constant whilst the value $d((x', y), \RR{p_i}))$ decreases or stays constant. This contradicts the assumption that the farthest Voronoi cell of $\RR{p_i}$ intersects $\ell_y$ right of $x$.

	%\rodrigo{Shouldn't $i$ and $j$ be exchanged in the following sentence?}\frank{agreed, but I also think we should say more about why that is the case: i.e. since at $x_j$ $d((x_j,y),p_j) = d((x_j,y),\RR{p_j})$ is still larger than $d((x_j,y),p_i)=d((x_j,y),\RR{p_i}$, and from there on the distance to $\RR{p_j}$ is constant (as a function of $x$ (the variable, not the point) whereas the distance to $\RR{p_i}$ is decreasing. (This is essentially case 2).}\rodrigo{I agree, we can explain a bit more, can't hurt}\lena{I also agree.}\ivor{changed the above text to reflect the comments}\rodrigo{Looks good to me, thanks!}
	%It follows that for all $x' > x$, $d( (x', y), \RR{p_i}) \ge d( (x', y), \RR{p_j})$ which contradicts the assumption that the farthest Voronoi cell of $\RR{p_j}$ intersects $\ell_y$ right of $x$. 

	\paragraph{Case 2: $d((x,y), \RR{p_i} )  = d((x,y), p_i )$  and $d((x,y), \RR{p_j} )  \neq d((x,y), p_j )$.}
	
	In this case, for all $x' > x$, $d((x',y), \RR{p_j}) = d((x,y), \RR{p_j} )$ and $d((x',y), \RR{p_i} ) < d((x,y), \RR{p_i} )$. This contradicts the assumption that the farthest Voronoi cell of $\RR{p_i}$ intersects $\ell_y$ right of $x$. 
	
	\paragraph{Case 3: $d((x,y), \RR{p_i} )  \neq d((x,y), p_i )$  and $d((x,y), \RR{p_j} )  = d((x,y), p_j )$}
	In this case, for all $x' > x$, $d( (x', y), \RR{p_j}) < d((x,y), \RR{p_j} )$ and $d((x',y), \RR{p_i} ) = d((x,y), \RR{p_i} )$ which contradicts the assumption that for an arbitrary small $\varepsilon > 0$, $d((x + \varepsilon,y), \RR{p_i} ) < d((x + \varepsilon,y),\RR{p_j})$.

	%the farthest Voronoi cell of $\RR{p_j}$ intersects $\ell_y$ right of $x$. 
	%\rodrigo{$\RR{p_i}$?}\frank{no, $\RR{p_j}$ should be correct; i.e. we have a contradiction that there was a point in between $A$ and $B$ at which $\RR{p_i}$ was not furthest (and by our assumption in particular there should have been such a point where $\RR{p_j}$ was furthest). Again; adding an additional sentence to this end would not hurt. }
	%\rodrigo{Ah, I see. Sure. What confused me is the expression ``contradicts the assumption'', since the only assumption that we made, at least explicitly, is about $\RR{p_i}$ (in the ''supose for the sake of contradiction'').}\lena{We should maybe state more clear that this contradicts the assumption that $d((x + \varepsilon,y), \RR{p_i} ) < d((x + \varepsilon,y),\RR{p_j})$. }

	\paragraph{Case 4: $d((x,y), \RR{p_i} )  \neq d((x,y), p_i )$  and $d((x,y), \RR{p_j} )  \neq d((x,y), p_j )$}
	In this case, for all $x' > x$, $d((x',y), \RR{p_j} ) = d((x,y), \RR{p_j} )$ and $d((x',y), \RR{p_i} ) = d((x,y), \RR{p_i} )$
	which contradicts the assumption that for an arbitrary small $\varepsilon > 0$, $d((x + \varepsilon,y), \RR{p_i} ) < d((x + \varepsilon,y),\RR{p_j})$.
	%
	%\rodrigo{I'm not sure it contradicts this (or at least I don't see it very direct). For me at least, it is easier to see that it contradicts the fact that $x$ is the border of $A$, since the point $x-\varepsilon$ also has to have $p_i$ and $p_j$ at the same distance, for $\varepsilon>0$ small enough}\lena{We could also write that it contradicts the fact that $d((x + \varepsilon,y), \RR{p_i} ) < d((x + \varepsilon,y),\RR{p_j})$. Maybe this is clearer.}
	%\rodrigo{Yes, indeed, that would be better for me}
\end{proof}

Let $s,t$ be two points on $P$, and let $s'$ and $t'$ be the first
vertex succeeding $s$ and preceding $t$, respectively.  The terms
$\dhd(P[s,s'],\overline{ab})$ and $\dhd(P[t',t],\overline{ab})$ can be
computed in $O(1)$ time.  There are $O(\log n)$ internal nodes in our
range tree, whose canonical subcurves together form $P[s', t']$ (see
e.g.~\cite[Chapter 5]{de1997computational}). For each such node $\nu$,
corresponding to a subcurve $P_\nu$, we query its associated data
structures in time logarithmic in the number of vertices in $P_\nu$ to
compute $\dhd(P_\nu,\overline{ab})$. We report the maximum distance
found and conclude:

\begin{lemma}
	\label{lem:hausdorff_term_subtrajectories}
	Let $P$ be a polygonal curve in $\R^2$ with $n$ vertices. In
	$O(n\log n)$ time we can construct a data structure of size
	$O(n\log n)$ so that given a horizontal query segment
	$\overline{ab}$, and two points $s,t$ on $P$,
	$\dhd(P[s,t],\overline{ab})$ can be computed in $O(\log^2 n)$ time.
\end{lemma}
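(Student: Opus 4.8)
The plan is to assemble the two-level structure sketched just above and verify the three claimed bounds in turn; most ingredients are already in hand, so the argument is largely an organization of Theorem~\ref{thm:dHtoPoint_algo}, Lemma~\ref{lemma:linearmerge}, and the two observations stated at the start of this subsection. I would take a balanced binary search tree (a 1D range tree) whose leaves hold $p_1,\dots,p_n$ in the order along $P$, so that every internal node $\nu$ owns a contiguous block of vertices forming a canonical subcurve $P_\nu$; at $\nu$ I store the vertex set of $P_\nu$ in the data structure of Theorem~\ref{thm:dHtoPoint_algo} (a furthest segment Voronoi diagram of the leftward halflines and one of the rightward halflines, each preprocessed for planar point location). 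Each associated structure has size $O(|P_\nu|)$, and since the vertices at each of the $O(\log n)$ levels of the tree sum to $n$, we get $\sum_\nu |P_\nu| = O(n\log n)$, so the total size is $O(n\log n)$.

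\emph{Preprocessing.} Building each associated structure from scratch costs $O(|P_\nu|\log|P_\nu|)$, which sums to $O(n\log^2 n)$ and is too slow. Instead I would build the associated structures bottom-up: the structure at an internal node is obtained by merging the structures of its two children. By Lemma~\ref{lemma:linearmerge} this merge---of the two furthest segment Voronoi diagrams together with their point-location layers, using that the cells are $y$-monotone---takes $O(|P_\nu|)$ time, so the work per level is $O(n)$ and the total preprocessing time is $O(n\log n)$ (the leaves are trivial base cases, and constructing the top-level tree is within this bound since the order along $P$ is given).

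\emph{Queries.} Given $s$, $t$, and $\overline{ab}$, I locate the edges of $P$ containing $s$ and $t$ in $O(1)$ time (by the standing assumption), let $s'$ be the first vertex of $P$ succeeding $s$ and $t'$ the last vertex preceding $t$, and compute $\dhd(P[s,s'],\overline{ab})$ and $\dhd(P[t',t],\overline{ab})$ in $O(1)$ time each---each being the directed Hausdorff distance from a single segment to a horizontal segment, which is realized at an endpoint of that segment; the second observation is what guarantees that no non-vertex interior point of these partial edges can force a larger value, so this $O(1)$ treatment suffices. It remains to compute $\dhd(P[s',t'],\overline{ab})$: the standard range-tree search returns $O(\log n)$ internal nodes whose canonical subcurves concatenate to $P[s',t']$, and for each such node $\nu$ I query its associated structure in $O(\log|P_\nu|)=O(\log n)$ time to obtain $\dhd(P_\nu,\overline{ab})$ via Theorem~\ref{thm:dHtoPoint_algo}, here using that the directed Hausdorff distance from a polygonal curve to a horizontal segment is attained at a vertex, so it equals that of the stored vertex set. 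Taking the maximum of these $O(\log n)$ values together with the two endpoint terms gives $\dhd(P[s,t],\overline{ab})$ by the first observation (decomposability of $\dhd$ in its first argument, applied to $P[s,s']$, $P[s',t']$, $P[t',t]$ and then to the canonical pieces of $P[s',t']$), in $O(\log^2 n)$ time in total.

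\emph{Main obstacle.} There is no real obstacle left: the one delicate ingredient---that two preprocessed furthest segment Voronoi diagrams merge in linear time---is precisely Lemma~\ref{lemma:linearmerge}, already established, and the decomposition and vertex-realization facts are the two itemized observations. The only point needing care is the routine bookkeeping of the $O(\log n)$ canonical pieces and the two partial first/last edges (cf.~\cite[Chapter 5]{de1997computational}).
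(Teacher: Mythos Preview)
Your proposal is correct and follows essentially the same approach as the paper: the same 1D range tree with the Theorem~\ref{thm:dHtoPoint_algo} structure at each node, the same bottom-up construction via Lemma~\ref{lemma:linearmerge} to get $O(n\log n)$ preprocessing, and the same query decomposition into the two partial edges plus $O(\log n)$ canonical subcurves. There is nothing to add.
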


Note that since we are given pointers to $s$ and $t$ we can make the
query time sensitive to the complexity $|P[s,t]|$ of the query
subcurve. Let $\nu_1,\dots,\nu_k$ be the nodes whose canonical subcurves
(ordered along $P$) make up $P[s',t']$. It is well known that a prefix
of $\nu_1,\dots,\nu_\ell$ are (a subset of the) right children on the
path connecting the leaf representing $s'$ to the lowest common
ancestor (LCA) $\mu$ of this leaf and the leaf representing $t'$,
whereas the remaining nodes $\nu_{\ell+1},\dots,\nu_k$ are left children
along the path from $\mu$ to the leaf representing $t'$. By
preprocessing our tree for $O(1)$ time LCA queries~\cite{rmq} and
storing a pointer from every internal node to its last leaf, we can
find these $k$ nodes in $O(k)$ time: given $\nu_i$, jump to the last
leaf in its subtree. This leaf represents some vertex $p_j$. We can
then compute the LCA of $p_{j+1}$ and $\nu_i$. The next node
$\nu_{i+1}$ is the right child of the node found. Moreover, since the
sizes of successive nodes in $\nu_1,\dots,\nu_\ell$ (at least) double
there are at most $\ell = O(\log |P[s,t]|)$ such nodes. Symmetrically
$k-\ell = O(\log |P[s,t]|)$, hence $k = O(\log |P[s,t]|)$. Since each
of the canonical subcurves has size at most $|P[s,t]|$, querying their
associated data structures also takes at most $O(\log |P[s,t]|)$
time. Hence, our queries take only $O(\log^2 |P[s,t]|)$ time in total.

\subsubsection{Backward pair distance for subcurves}
\label{sub:backward_pair_subcurve_queries}

In this section we describe how to store $P$ so that given points $s$
and $t$ on $P$ and the horizontal query segment $\overline{ab}$ we can
compute the backward pair distance $\DB^{P[s,t]\times P[s,t]}(y_a)$
efficiently.
The main idea to
support subcurve queries is to store all intermediate results of the
divide and conquer algorithm from
Section~\ref{sub:divide-and-conquer}. Hence, our main data structure
is a 1D-range tree whose leaves store the vertices of $P$, ordered
along $P$. Each internal node $\nu$ corresponds to some subcurve
$P_\nu$ of $P$, and will store the function
$\DB^{P_\nu \times P_\nu}(y)$ (Lemma~\ref{lem:backward_pair_ds}) as
well as the functions $h_{\LL{P_\nu}}(x, y)$ and
$h_{\RR{P_\nu}}(x, y)$ represented by two new data structures that we
will denote by $\Delta_{\LL{P_\nu}}(y)$ and $\Delta_{\RR{P_\nu}}(y)$.

We first sketch our query approach, in the following subsections we show how to construct the specific data structure and queries.
We are
given $y'$, and $s,t \in P$. We then identify the vertex $s'$ succeeding
$s$ and the vertex $t'$ preceding $t$, and define $P_0=P[s,s']$ and $P_{k+1}=P[t',t]$. 
There are $k=O(\log n)$ internal nodes whose canonical
subcurves $P_1, P_2,\dots$ together form $P[s', t']$.
We prove that $\DB^{P[s, t] \times P[s, t]}(y')$ is the maximum over two terms:
\[
%\DB^{P[s, s'] \times P[s, s']}(y'),\quad
%\DB^{P[t', t] \times P[t', t]}(y'),\quad
\DB^{P_i \times P_i}(y'), \text{ for all $i$, and }\quad
\DB^{P_i \times P_j}(y') \text{ for $i < j$.}
\]
\noindent
%The first two terms can be computed in constant time. 
The $O(\log n)$ values of the first term
can be computed in $O(\log^2 n)$ total time, using the  data
structures of Lemma~\ref{lem:backward_pair_ds} stored in each node. The second
term contains $O(\log^2 n)$ values, and we show how to compute each
value in $O(\log n)$ time, using the new data structures, for a total
query time of $O(\log^3 n)$.
This query time then dominates the time
it takes to compute the maximum of all these terms.

\subsubsection{Using furthest segment Voronoi diagrams to compute $\DB^{S \times T}(y')$}
\label{sub:cross_terms}

Let $S,T$ be two contiguous subcurves of $P$, with $S$ occurring strictly
before $T$ on $P$.
We first study how to compute $\DB^{S \times T}(y')$ efficiently for any given $y'$.
For ease of exposition, we first show how to construct a linear-size
data structure on $T$ such that given a query point $p$ that precedes $T$ along $P$ and a value $y'$, we can compute
$\DB^{\{p\} \times T}(y')$ in $O(\log |T|)$ time. By
Lemma~\ref{lem:functioneq} this amounts to computing the point of
intersection between the functions $x \mapsto h_{\RR{p}}( (x , y') )$
and $x \mapsto h_{\LL{T}}( (x, y') )$. 

%\begin{figure}[tb]
%  \centering
%  \includegraphics{}
%  \caption{
%  (1) a  point $p$ in pink and $\LL{T}$. Square points never contribute to the furthest segment Voronoi diagram. (2) The set $\LL{T}$ has a corresponding piecewise monotone function $x \mapsto h_{\LL{T}}( (x, y'))$. (3) We can compute the furthest segment Voronoi diagram of $\LL{T}$, and for a line at height $y$ compute the edges which are intersected in left-to-right order. (4) These edges are stored in a red-black tree.
%  }
%  \label{fig:voronoiapproach}
%\end{figure}

Our global approach is illustrated in
Fig.~\ref{fig:functionexample}. Suppose for the ease of exposition,
that $y'$ is fixed. Given the query $p$,  we can compute the
convex function $x \mapsto h_{\RR{p}}( (x, y') )$ in constant time. For
the set $\LL{T}$, we note that Observation~\ref{obs:fsvd} shows that
the (graph of the) function $(x, y) \mapsto h_{\LL{T}}( (x, y) )$ is
an upper envelope whose maximization diagram is the furthest segment
Voronoi diagram of the set of halflines $\LL{T}$. For any fixed $y'$
and $\LL{T}$, we can construct a data structure $\Delta_{\LL{T}}(y')$ that
stores the Voronoi cell edges that are intersected by a horizontal
line of height $y'$ in their left-to-right order. Specifically, we
store a red-black tree where each node stores a consecutive pair of
edges (in this way, a node in the tree corresponds to a unique Voronoi
cell in the diagram). Since a furthest segment Voronoi diagram has a
linear number of edges, $\Delta_{\LL{T}}(y')$ has $O(|T|)$ size.

\begin{lemma}
	Let $p \in P$ be a query point, let $y'$ be some given value, and let $T$ be a sequence of points of $P$ that succeed $p$.
	Given $\Delta_{\LL{T}}(y')$ we can compute the point $(x^*, y')$ where $h_{\RR{p}}( (x^*, y') ) = h_{\LL{T}}( (x^*, y') )$ in $O(\log |T|)$ time. 
\end{lemma}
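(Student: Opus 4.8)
The key observation is that, for a fixed $y'$, the function $x \mapsto h_{\RR{p}}((x,y'))$ is monotonically decreasing (Observation~\ref{obs:monotonic}) while $x \mapsto h_{\LL{T}}((x,y'))$ is monotonically increasing (Observation~\ref{obs:monotonic} again), and by Lemma~\ref{lem:functioneq} these two graphs intersect in a single point $(x^*, y')$, which is exactly the breakpoint where the decreasing curve drops below the increasing one. So the task reduces to a binary search: at a Voronoi cell $c$ of the $\FSVD$ of $\LL{T}$ stored in $\Delta_{\LL{T}}(y')$, I evaluate $h_{\LL{T}}((\cdot,y'))$ on the $x$-interval $[\ell_c, r_c]$ where $\ell_y$ meets $c$ — on this interval $h_{\LL{T}}((x,y')) = h_{\LL{q}}((x,y'))$ for the single halfline $\LL{q}$ owning $c$, a function of constant complexity — and compare it against $h_{\RR{p}}((\cdot,y'))$, also of constant complexity. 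In $O(1)$ time I can decide, for that cell, whether $x^*$ lies to the left of $\ell_c$, inside $[\ell_c, r_c]$, or to the right of $r_c$, by checking the sign of $h_{\RR{p}} - h_{\LL{q}}$ at the two cell endpoints (using that $h_{\RR{p}}$ decreases and $h_{\LL{q}}$ increases across the cell, so the sign can change at most once).

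\textbf{Carrying it out.} I traverse the red-black tree $\Delta_{\LL{T}}(y')$ from the root. Each node of the tree corresponds to a Voronoi cell $c$ (equivalently, to the consecutive pair of Voronoi edges bounding it), and from the stored edges I recover in $O(1)$ time the $x$-range $[\ell_c, r_c]$ in which the horizontal line at height $y'$ crosses $c$, together with the halfline $\LL{q}$ owning $c$. I evaluate the gap function $g(x) = h_{\RR{p}}((x,y')) - h_{\LL{q}}((x,y'))$ at $x = \ell_c$ and $x = r_c$. Since $g$ is strictly decreasing on $[\ell_c, r_c]$ (decreasing term minus increasing term), if $g(\ell_c) \le 0$ the crossing is at or left of this cell, so I recurse into the left subtree; if $g(r_c) \ge 0$ I recurse into the right subtree; otherwise the sign of $g$ changes within $[\ell_c, r_c]$, so $x^*$ is the unique root of $g$ in that interval, which I compute in $O(1)$ time by solving the constant-complexity equation $h_{\RR{p}}((x,y')) = h_{\LL{q}}((x,y'))$. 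Returning $(x^*, y')$ and the common value $h_{\RR{p}}((x^*, y'))$ gives $\DB^{\{p\}\times T}(y')$ by Lemma~\ref{lem:functioneq}. The search visits $O(\log |T|)$ nodes, each processed in $O(1)$ time, for a total of $O(\log |T|)$ time.

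\textbf{Main obstacle.} The delicate point is arguing that the comparison at each tree node is decisive, i.e.\ that the monotonicity of $h_{\RR{p}}$ and $h_{\LL{T}}$ lets us prune exactly one subtree — equivalently, that restricted to a single Voronoi cell the function $h_{\LL{T}}((\cdot, y'))$ genuinely coincides with $h_{\LL{q}}((\cdot, y'))$ over the full crossing interval $[\ell_c, r_c]$ and is monotone there (so the $\le$/$\ge$ tests at the two endpoints correctly locate $x^*$ relative to the cell). This is exactly the content of Observation~\ref{obs:fsvd} together with Observation~\ref{obs:monotonic}, plus the fact, already used in Section~\ref{sub:relatingVoronoiDiagrams}, that a horizontal line crosses the Voronoi cells in a well-defined left-to-right order that matches the in-order traversal of the tree; I would just need to spell out that the stored tree really does encode that order. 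No new geometric ingredient is required beyond what the preceding lemmas supply; the argument is essentially a clean binary search on a monotone step description of $h_{\LL{T}}$.
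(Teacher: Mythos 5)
Your proof is correct and takes essentially the same approach as the paper: a top-down binary search over the red-black tree $\Delta_{\LL{T}}(y')$, using the monotonicity of $h_{\RR{p}}$ (decreasing) and $h_{\LL{T}}$ (increasing) at the fixed height $y'$ to decide in $O(1)$ time at each node, via the local constant-complexity function $h_{\LL{q}}$ of the visited Voronoi cell, whether $x^*$ lies in that cell, to its left, or to its right. The only cosmetic difference is that the paper computes the intersection $x'$ of $h_{\RR{p}}$ with $h_{\LL{q}}$ and tests whether $x'$ falls in the cell's $x$-interval, whereas you test the sign of $h_{\RR{p}} - h_{\LL{q}}$ at the two interval endpoints; these are equivalent.
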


\begin{proof}
	Consider the two edges stored at the root of $\Delta_{\LL{T}}(y')$. 
	These two edges partially bound a Voronoi cell corresponding to a halfline $\LL{r} \in \LL{T}$. 
	In constant time, we can compute the point of intersection $(x', y')$ between $x \mapsto h_{\RR{p}}( (x, y') )$ and $x \mapsto h_{\LL{r}}( (x, y') )$.
	
	Using the two edges, we can compute the $x$-interval of the Voronoi cell of $\LL{T}$ that intersects the horizontal line of height $y'$ in constant time. 
	If the value $x'$ lies in this interval, then we have found the unique point of intersection between $h_{\RR{p}}$ and $h_{\LL{T}}$ (in other words, $(x', y') = (x^*, y')$). 
	If $x'$ lies left of this $x$-interval, we can disregard all nodes in the right subtree of the root. This is because for all $x \geq x'$, 
	$h_{\LL{T}}(x,y') \geq h_{\LL{r}}(x,y') > h_{\RR{p}}(x,y')$.
	A symmetrical property holds  when $x'$ lies right of this $x$-interval. 
	Hence at every step, we can discard at least half of the remaining candidate halflines in $\LL{T}$ and thus we can compute the halfline in $\LL{T}$ that forms the intersection with $x \mapsto h_{\RR{p}}(x,y')$ in logarithmic time. 
\end{proof}

Now, we extend our approach from products of points and sets, to products between sets.
Recall that for any $\RR{S}$, the function $x \mapsto h_{\RR{S}}(x, y')$ is  piecewise monotone, where each piece is a constant-complexity segment coinciding with $x \mapsto h_{\RR{s}}(x, y')$ for some $\RR{s} \in \RR{S}$.

\begin{lemma}
	\label{lem:oracle}
	Let $S, T \subset P$ such that all points in $S$ precede all points in
	$T$ and let $y'$ be some given value. Let $\sigma_1, \sigma_2$ be two points
	on $x \mapsto h_{\RR{S}}(x, y')$ that bound a curve
	segment with $\sigma_1$ left of $\sigma_2$. 
	Moreover, let $\tau_1, \tau_2$ be defined analogously.
	If the segments bounded by $(\sigma_1, \sigma_2)$ and $(\tau_1, \tau_2)$ do not intersect then the intersection between $h_{\RR{S}}(x, y')$ and $h_{\LL{T}}(x, y')$ lies either: left of $\sigma_1$, right of $\sigma_2$, left of $\tau_1$, or right of $\tau_2$, and we can identify the case in $O(1)$ time.
\end{lemma}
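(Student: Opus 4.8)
The key structural facts are that $x \mapsto h_{\RR{S}}(x,y')$ is monotonically decreasing while $x \mapsto h_{\LL{T}}(x,y')$ is monotonically increasing (Observation~\ref{obs:monotonic}), so by Lemma~\ref{lem:intersection} their graphs cross in a single point $(x^*,y')$ with $h_{\RR{S}}(x^*,y') = h_{\LL{T}}(x^*,y') = \DB^{S\times T}(y')$. I would exploit this crossing together with the fact that on the segment between $\sigma_1$ and $\sigma_2$ the function $h_{\RR{S}}$ agrees with $h_{\RR{s}}$ for a single $\RR{s}\in\RR{S}$ (and analogously $h_{\LL{T}} = h_{\LL{t}}$ on the piece between $\tau_1$ and $\tau_2$).

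\emph{Main argument.} Let $[u_\sigma,v_\sigma]$ be the $x$-interval spanned by $\sigma_1,\sigma_2$ and $[u_\tau,v_\tau]$ that spanned by $\tau_1,\tau_2$. First I would observe that because $h_{\RR{S}}$ is decreasing and $h_{\LL{T}}$ is increasing, $x^*$ lies strictly right of every $x$ where $h_{\RR{S}}(x,y') > h_{\LL{T}}(x,y')$ and strictly left of every $x$ where $h_{\RR{S}}(x,y') < h_{\LL{T}}(x,y')$. So the whole question is: on which side of $x^*$ do the endpoints of the two given pieces fall? I would split into the two ways the $x$-intervals can be ordered. If $[u_\sigma,v_\sigma]$ and $[u_\tau,v_\tau]$ are disjoint in $x$, say $v_\sigma < u_\tau$, then on $[v_\sigma,u_\tau]$ the two curves are defined by the single pieces $h_{\RR{s}}$ and $h_{\LL{t}}$ respectively, and since these pieces were assumed not to intersect (as curve segments), on this common $x$-range one of them lies entirely above the other; I would compare, say, the values of $h_{\RR{s}}$ and $h_{\LL{t}}$ at $v_\sigma$ (or at any single probe point) to decide which, and this tells us whether $x^*$ is left of $\sigma_1$ or right of $\tau_2$ — more precisely, if $h_{\RR{S}} > h_{\LL{T}}$ throughout, then $x^*$ is right of $v_\sigma$, hence right of $\sigma_2$; if $h_{\RR{S}} < h_{\LL{T}}$, then $x^*$ is left of $u_\tau$, hence left of $\tau_1$. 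The remaining case is that the $x$-intervals overlap; then on the overlap both functions are single pieces $h_{\RR{s}}, h_{\LL{t}}$ that (as segments) do not intersect, so again one dominates the other there, and the same comparison at one point of the overlap pins $x^*$ to the left of $\sigma_1$ (equivalently $\tau_1$, whichever is the left endpoint of the overlap region on the $h_{\RR{S}}$ side) or to the right of $\sigma_2$/$\tau_2$. In every case the verdict follows from a constant number of evaluations of the two constant-complexity functions $h_{\RR{s}}$ and $h_{\LL{t}}$ at a single probe $x$-coordinate, so the case is identified in $O(1)$ time.

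\emph{Formatting the conclusion.} I would phrase the four outcomes exactly as in the statement: "left of $\sigma_1$" and "left of $\tau_1$" collapse to the case $h_{\RR{S}}<h_{\LL{T}}$ at the probe, and "right of $\sigma_2$" and "right of $\tau_2$" to $h_{\RR{S}}>h_{\LL{T}}$, choosing among the $\sigma$- and $\tau$-labelled bounds according to which of the two pieces is the one whose $x$-range contains (or is adjacent to) the probe point. The monotonicity then upgrades a pointwise inequality at the probe to a sidedness statement about $x^*$, because a decreasing function and an increasing function can only switch order once.

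\emph{Expected obstacle.} The routine part is the pointwise comparison; the only real care needed is the bookkeeping that the non-intersection hypothesis is about the \emph{bounded curve segments}, not the full unbounded pieces, so one must first argue that on the relevant common $x$-interval (either the gap $[v_\sigma,u_\tau]$ or the overlap) the two curves do coincide with those single pieces and inherit the strict ordering — and then make sure the chosen probe $x$-coordinate actually lies in that common interval so that "$h_{\RR{S}}$ vs. $h_{\LL{T}}$" at the probe really is "$h_{\RR{s}}$ vs. $h_{\LL{t}}$". That dovetails the four cases of the conclusion with the two geometric configurations; getting the labels $\sigma_i$ versus $\tau_i$ matched to the correct side is the fiddly bit, but it is entirely mechanical.
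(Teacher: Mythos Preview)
Your overlap case is fine and is exactly the paper's Case~3: on the common $x$--range both envelopes coincide with their single pieces, those pieces do not cross, so one dominates, and a single evaluation tells you on which side of the overlap $x^*$ lies.

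The gap is in your disjoint $x$--interval case. You claim that on the gap $[v_\sigma,u_\tau]$ ``the two curves are defined by the single pieces $h_{\RR{s}}$ and $h_{\LL{t}}$''. This is false: outside $[u_\sigma,v_\sigma]$ the envelope $h_{\RR{S}}$ may be realized by a different halfline, and likewise for $h_{\LL{T}}$ outside $[u_\tau,v_\tau]$. So probing at a point of the gap and comparing $h_{\RR{s}}$ with $h_{\LL{t}}$ tells you nothing reliable about $h_{\RR{S}}$ versus $h_{\LL{T}}$ there. Note also that when the $x$--ranges are disjoint the non-intersection hypothesis is vacuous (the two bounded segments share no $x$--coordinate, so of course they do not meet); you cannot lean on it in this case, and your ``expected obstacle'' paragraph does not repair this.

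The paper avoids this by a different case split, in the $(x,z)$--plane rather than just in $x$: it asks whether some vertical line meets both segments (your overlap case), whether some horizontal line does, or neither. In the latter two situations one does \emph{not} evaluate the pieces in the gap at all; instead one compares only the given endpoint heights and uses monotonicity of the full envelopes. For instance, if $v_\sigma<u_\tau$ and $z_{\sigma_2}>z_{\tau_1}$, then $h_{\RR{S}}(v_\sigma)=z_{\sigma_2}>z_{\tau_1}=h_{\LL{T}}(u_\tau)\ge h_{\LL{T}}(v_\sigma)$, so $x^*$ lies right of $\sigma_2$; if $z_{\sigma_2}<z_{\tau_1}$ the symmetric inequality places $x^*$ left of $\tau_1$. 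That endpoint comparison is the missing ingredient in your disjoint case.
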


\begin{figure}[tb]
	\centering
	\includegraphics{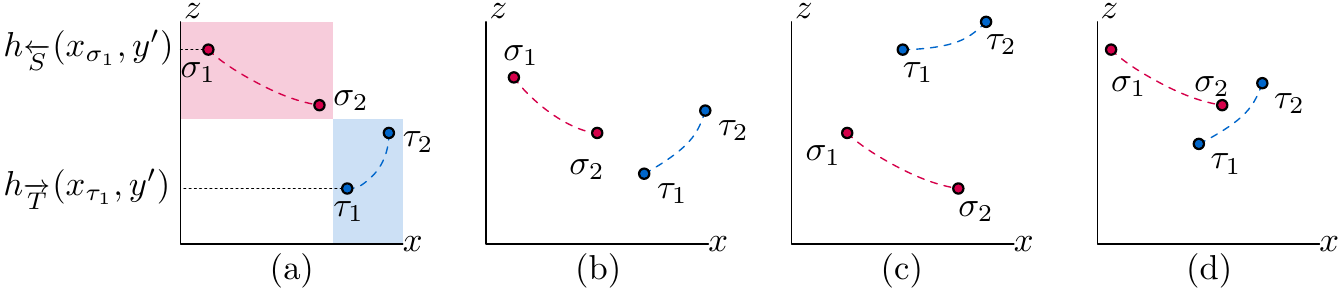}
	\caption{
		Cases in the proof of Lemma~\ref{lem:doublesearch}. In (d) both case 2 and 3 apply.
	}
	\label{fig:cases}
\end{figure}

\begin{proof}
	Recall that $x \mapsto h_{\RR{S}}((x, y'))$ and
	$x \mapsto h_{\LL{T}}((x, y'))$ are monotonically decreasing and
	increasing functions, respectively.
	The proof is a case distinction, illustrated by Fig.~\ref{fig:cases}.
	
	\subparagraph{Case 1: there is no vertical or horizontal line that
		intersects both segments.}  In this case, there is a separation of
	the $(x,z)$-plane into four quadrants, such that one segment lies in a top
	quadrant and the other segment in the opposite bottom
	quadrant. Let the segment of $h_{\RR{S}}$ lie in the top
	left quadrant, then all points on $h_{\RR{S}}$ left of this segment,
	are higher than $\sigma_1$. All points left of
	$h_{\RR{T}}$ must be lower than $\tau_1$. Hence all segments of $h_{\RR{S}}$ left $\sigma_1$
	cannot form the point of intersection and can be discarded. 
	Given $(\sigma_1, \sigma_2, \tau_1, \tau_2)$ we can identify this case in constant time.
	The three remaining (sub-)cases are symmetrical. 
	
	\subparagraph{Case 2: At least one horizontal line intersects both segments.}
	If the segments between $(\sigma_1, \sigma_2)$ and $(\tau_1, \tau_2)$ do not intersect, all horizontal lines intersect these two segments in the same order.
	Via the same argument as above we note that if the segment of $h_{\RR{S}}$ is intersected first, then all segments on $h_{\RR{S}}$ left of $\sigma_1$ can be discarded. Similarly then all segments of $h_{\LL{T}}$ right of $\tau_2$ can be discarded. The argument for when the order is reversed is symmetrical and we can compute the intersections with a horizontal line in $O(1)$ time.
	
	\subparagraph{Case 3: At least one vertical line intersects both segments. }
	In this case, each vertical line that intersects both segments must intersect them in the same order. We note that if the segment between $(\sigma_1, \sigma_2)$ is intersected first, then all segments on $h_{\RR{S}}$ right of $\sigma_2$ can be discarded, along with all segments right of $\tau_2$.
	The argument for when the order is reversed is symmetrical
	and we can compute the intersections with a vertical line in $O(1)$ time.
\end{proof}

\begin{lemma}
	\label{lem:doublesearch}
	Let $S, T \subset P$, all points in $S$ precede all points
	in $T$ and let $y'$ be some given value. Given $\Delta_{\RR{S}}(y)$ and $\Delta_{\LL{T}}(y')$, we can
	compute $\DB^{S \times T}(y)$ in $O(\log |S| + \log |T|)$ time.
\end{lemma}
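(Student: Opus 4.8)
The plan is to mimic the single-point search from the previous lemma, but now perform a \emph{simultaneous} binary search over the pieces of $x \mapsto h_{\RR{S}}(x,y')$ (stored in $\Delta_{\RR{S}}(y')$) and the pieces of $x \mapsto h_{\LL{T}}(x,y')$ (stored in $\Delta_{\LL{T}}(y')$). Recall that by Lemma~\ref{lem:intersection} the two monotone functions $x \mapsto h_{\RR{S}}(x,y')$ and $x \mapsto h_{\LL{T}}(x,y')$ cross in a single point $(x^*,y')$, and $\DB^{S\times T}(y') = h_{\RR{S}}(x^*,y') = h_{\LL{T}}(x^*,y')$, so it suffices to locate that crossing. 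Each tree $\Delta_{\RR{S}}(y')$, $\Delta_{\LL{T}}(y')$ is a balanced search tree whose nodes correspond to the Voronoi cells cut by $\ell_{y'}$, i.e.\ to the constant-complexity monotone pieces of the respective envelope, in left-to-right order.

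First I would maintain a ``current candidate piece'' in each tree: let $[\sigma_1,\sigma_2]$ be the piece at the current node of $\Delta_{\RR{S}}(y')$ and $[\tau_1,\tau_2]$ the piece at the current node of $\Delta_{\LL{T}}(y')$, both obtained in $O(1)$ time from the two stored Voronoi edges (as in the proof of the preceding lemma). At each step I test in $O(1)$ time whether these two constant-complexity segments intersect. If they do, their intersection point is exactly $(x^*,y')$ and I output $\DB^{S\times T}(y') = h_{\RR{p}}$-value there and stop. If they do not intersect, I invoke Lemma~\ref{lem:oracle}: in $O(1)$ time it tells me that $x^*$ lies either left of $\sigma_1$, right of $\sigma_2$, left of $\tau_1$, or right of $\tau_2$. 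In the first two cases I descend one level in $\Delta_{\RR{S}}(y')$ (to the left, resp.\ right, subtree), discarding at least half of the remaining pieces of $h_{\RR{S}}$; in the last two cases I descend one level in $\Delta_{\LL{T}}(y')$, discarding at least half of the remaining pieces of $h_{\LL{T}}$. Since monotonicity of the two envelopes (Observation~\ref{obs:monotonic}) guarantees that the discarded side genuinely cannot contain the unique crossing $x^*$, this is correct.

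Each iteration either terminates or strictly decreases the depth of the current node in one of the two trees, so after at most $O(\log|S|)$ descents in $\Delta_{\RR{S}}(y')$ and $O(\log|T|)$ descents in $\Delta_{\LL{T}}(y')$ we reach a leaf in one of the trees; at that point the corresponding piece is forced, and at most $O(\log|T|)$ (resp.\ $O(\log|S|)$) further descents in the other tree pin down the crossing piece there as well, after which the two pieces must intersect and we read off $\DB^{S\times T}(y')$ in $O(1)$ time. The total running time is therefore $O(\log|S| + \log|T|)$. The one subtle point — the part I expect to need the most care — is arguing that the ``discard a side'' decisions from Lemma~\ref{lem:oracle} remain globally consistent across a sequence of steps: because each envelope is globally monotone and the surviving candidate ranges in the two trees only ever shrink (and always still contain $x^*$), no step can discard the piece actually containing the crossing, and the search cannot get stuck in a state where both current pieces are leaves yet still disjoint. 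Handling the boundary situations where $x^*$ coincides with a breakpoint of either envelope is routine given the general-position assumption on $P$.
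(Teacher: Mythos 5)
Your proof is correct and follows essentially the same approach as the paper: start at the roots of $\Delta_{\RR{S}}(y')$ and $\Delta_{\LL{T}}(y')$, extract the constant-complexity pieces of the two envelopes at the current nodes, and use Lemma~\ref{lem:oracle} as an $O(1)$-time oracle to discard half of one of the two remaining candidate ranges, descending one level in the corresponding tree until the intersecting pieces are pinned down, giving the $O(\log|S|+\log|T|)$ bound. Your additional remark that the discard decisions remain globally consistent (because Lemma~\ref{lem:oracle} makes a statement about the location of the unique global crossing $x^*$, not a merely local decision) is a correct elaboration of a point the paper leaves implicit.
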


\begin{proof}
	
	Consider the Voronoi edges stored at the root of
	$\Delta_{\RR{S}}(y')$, these are intersected by a horizontal line of height $y'$ in the $x$-coordinates $x_1$ and $x_2$ respectively.
	The edges of the Voronoi cell point to the cell $\RR{s} \in \RR{S}$ where we know that for all $x'$ with $x_1 < x' < x_2$, $h_{\RR{S}}(x', y') = h_{\RR{s}}(x', y')$.
	Given the pointer to $\RR{s} \in \RR{S}$ we compute the points $\sigma_1 = h_{\RR{S}}(x_1, y')= h_{\RR{s}}(x_1, y')$ and $\sigma_2 = h_{\RR{S}}(x_1, y') = h_{\RR{s}}(x_1, y')$ in constant time and the segment of $h_{\RR{S}}(x, y')$ connecting them.
	Similarly, we
	can compute a point $\tau_1$ and point $\tau_2$ for the edges stored at the root of
	$\Delta_{\LL{T}}(y')$, and the segment of $h_{\LL{T}}(x, y')$ connecting $(\tau_1, \tau_2)$.
	
	Then we can apply Lemma~\ref{lem:oracle} to the  pair  $(\sigma_1, \sigma_2)$, $(\tau_1, \tau_2)$, to discard at least all segments left/right of $(\sigma_1, \sigma_2)$ or all segments left/right of $(\tau_1, \tau_2)$. 
	Repeating this recursively on the remaining subtrees of $\Delta_{\RR{S}}(y')$ and $\Delta_{\LL{T}}(y')$, we can find the two segments defining the point of intersection in  $O(\log |S| + \log |T|)$ time.
\end{proof}

All that remains is to show that we can apply
Lemma~\ref{lem:doublesearch} not only to a given $y'$, but to any $y$.
To this end we sweep the plane with a horizontal line at height $y$
while maintaining $\Delta_{\LL{T}}(y)$ as a partially persistent red
black
tree~\cite{sarnak86planar_point_locat_using_persis_searc_trees}. We do
the same for $\Delta_{\RR{S}}(y)$.

\begin{lemma}\label{lem:persistent_sweep_ds}
	Let $S$ and $T$ be two sets of vertices of $P$.
	We can store $S$ in a data structure of size $O(|S|)$, and
	$T$ in a data structure of size $O(|T|)$ such that,
	if all vertices in $S$ precede all vertices
	in $T$, we can compute $\DB^{ S \times {T}}(y')$ for any fixed
	$y' \in \mathbb{R}$ in $O(\log (|S| + |T|))$ time.
	Building these data structures takes
	$O(|S|\log|S|)$ and $O(|T|\log|T|)$ time, respectively.
\end{lemma}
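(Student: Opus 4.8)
The plan is to build, for $S$ and for $T$ separately, a persistent horizontal sweep of the appropriate furthest segment Voronoi diagram, and then to answer a query for a height $y'$ by retrieving the two sweep versions that are active at $y'$ and feeding them to Lemma~\ref{lem:doublesearch}.

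First I would compute the $\FSVD$ of the rightward halflines $\RR{S}$ and the $\FSVD$ of the leftward halflines $\LL{T}$; by~\cite{PapadopoulouD13} these take $O(|S|\log|S|)$ and $O(|T|\log|T|)$ time and have complexity $O(|S|)$ and $O(|T|)$, respectively. Recall from the proof of Lemma~\ref{lemma:linearmerge} that the cells of both diagrams are $y$-monotone, so a horizontal line at any height crosses each cell in a single interval and hence crosses the whole diagram in a well-defined left-to-right sequence of cells. I would then sweep the $\FSVD$ of $\LL{T}$ with a horizontal line moving upward, maintaining in a red-black tree the ordered sequence of diagram edges crossed by the line --- equivalently, the ordered sequence of crossed cells, each represented by the consecutive pair of its bounding edges. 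This is precisely the structure $\Delta_{\LL{T}}(y)$ of Section~\ref{sub:cross_terms} that Lemma~\ref{lem:doublesearch} expects. The combinatorial structure of this sequence changes only at a discrete set of events --- the vertices of the diagram and the $y$-extrema of its edges (edges are segments, rays, or parabolic arcs, so each contributes $O(1)$ such extrema) --- of which there are $O(|T|)$, and each event triggers $O(1)$ insertions and deletions. Making the red-black tree partially persistent via the technique of Sarnak and Tarjan~\cite{sarnak86planar_point_locat_using_persis_searc_trees} therefore yields a structure of total size $O(|T|)$, built in $O(|T|\log|T|)$ time, together with a sorted list of the $O(|T|)$ versions indexed by the heights at which they become active. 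I would perform the symmetric construction on $\RR{S}$, obtaining $\Delta_{\RR{S}}(\cdot)$ in $O(|S|)$ space and $O(|S|\log|S|)$ time.

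To answer a query at a fixed $y'$, I would binary-search the two version lists to obtain (the roots of) $\Delta_{\RR{S}}(y')$ and $\Delta_{\LL{T}}(y')$ in $O(\log|S|)$ and $O(\log|T|)$ time; since all vertices of $S$ precede all vertices of $T$, the hypothesis of Lemma~\ref{lem:doublesearch} holds, and applying it to these two structures computes $\DB^{S\times T}(y')$ in a further $O(\log|S|+\log|T|)$ time, for a total of $O(\log(|S|+|T|))$, as claimed.

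I expect the only genuinely delicate point to be the linear (rather than $O(n\log n)$) space bound: one must argue carefully that the horizontal sweep of each $\FSVD$ of halflines really has only $O(n)$ events, each inducing $O(1)$ structural change --- combining the linear complexity of such a diagram, the $y$-monotonicity of its cells established in Lemma~\ref{lemma:linearmerge}, and the low algebraic degree of its edges --- and must check that the version returned by the binary search is exactly the red-black tree described in Section~\ref{sub:cross_terms}, namely one whose nodes are in bijection with the cells crossed by the line $\ell_{y'}$, in left-to-right order. The remaining steps are routine assemblies of these known tools.
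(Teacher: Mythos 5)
Your proposal is correct and follows essentially the same route as the paper: compute the two FSVDs, sweep each with a horizontal line while maintaining the crossed-cell sequence in a partially persistent red-black tree \`a la Sarnak--Tarjan, and at query time retrieve the two versions at $y'$ and hand them to Lemma~\ref{lem:doublesearch}. Your explicit appeal to the $y$-monotonicity of the cells (from Lemma~\ref{lemma:linearmerge}) and to the low algebraic degree of the edges to bound the number of sweep events by $O(|T|)$ is a welcome elaboration of a step the paper's own proof states more tersely.
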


\begin{proof}
	Consider a horizontal sweepline at height $y'$, and consider the
	continuously changing data structure $\Delta_{\LL{T}}(y')$.  For any
	set $\LL{T}$, the FSVD has linear
	complexity~\cite{PapadopoulouD13}. Hence, for any $y'$,
	$\Delta_{\LL{T}}(y')$ contains at most a $O(|T|)$ edges. Moreover,
	there are only $O(|T|)$ $y$-coordinates at which the combinatorial
	structure of $\Delta_{LL{T}}(y)$ changes. At each such an event we
	make a constant number of updates to $\Delta_{\LL{T}}$. Since
	$\Delta_{\LL{T}}$ is a (partially persistent) red black tree, these
	changes take $O(\log |T|)$ time, and $O(1)$
	space~\cite{sarnak86planar_point_locat_using_persis_searc_trees}. We
	use the same preprocessing for $\RR{S}$. Finally, we observe that we
	can obtain $\Delta_{\LL{T}}(y')$ and $\Delta_{\RR{S}}(y')$ for a
	given $y' \in \R$ in $O(\log |T| + \log |S|)$ time.
\end{proof}
\subsubsection{Divide and conquer for subcurves} 
We denote by $\Delta_{\LL{T}}(y)$ the partially persistent red-black
tree on the furthest-segment Voronoi diagram of $\LL{T}$.  We now show
how for any two vertices $s, t \in P$ and height $y'$, we can compute
$\DB^{P[s, t] \times P[s, t]}(y')$ efficiently through repeated
application of Observation~\ref{obs:intersection_finding_decomposable}
and Lemma~\ref{lem:persistent_sweep_ds}.

\begin{observation}
	\label{obs:intersection_finding_decomposable}
	Let $S$ and $T$ be two sets of vertices of $P$ such that all points in $S$ precede all points in
	$T$ and $S = S' \cup S''$ and $T=T'\cup T''$. Then for all $y \in \R$: 
	\[
	\DB^{S \times T}(y) = \max \left\{ \DB^{S' \times T}(y), \DB^{S'' \times
		T}(y) \right\}
	\text{ and }
	\DB^{S \times T}(y) = \max \left\{ \DB^{S \times T'}(y), \DB^{S \times
		T''}(y) \right\}.
	\]
\end{observation}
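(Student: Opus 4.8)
The plan is to unwind the definition of $\DB^{S\times T}$ and exploit the hypothesis that every vertex of $S$ precedes every vertex of $T$ along $P$. Recall that $\DB^{S\times T}(y)=\max\{\delta'_{pq}(y)\mid (p,q)\in(S\times T)\cap\Skew\}$. The first step is to observe that here the intersection with $\Skew$ is vacuous: since all points in $S$ precede all points in $T$, every ordered pair $(p,q)\in S\times T$ already satisfies ``$p$ precedes $q$ along $P$'', hence lies in $\Skew$. Therefore $\DB^{S\times T}(y)$ is simply the maximum of $\delta'_{pq}(y)$ over the full product $S\times T$, with no side condition.

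Next I would use the set identity $S\times T=(S'\cup S'')\times T=(S'\times T)\cup(S''\times T)$ together with the fact that a maximum over a union of two index sets equals the larger of the two individual maxima. This yields $\DB^{S\times T}(y)=\max\{\max_{(p,q)\in S'\times T}\delta'_{pq}(y),\ \max_{(p,q)\in S''\times T}\delta'_{pq}(y)\}$. Finally, since $S'\subseteq S$ and $S''\subseteq S$, the vertices of $S'$ (resp.\ $S''$) also all precede the vertices of $T$, so by the same ``$\Skew$ is vacuous'' argument the two inner maxima equal exactly $\DB^{S'\times T}(y)$ and $\DB^{S''\times T}(y)$. This establishes the first claimed equality; the second follows by the symmetric argument, splitting $T$ instead of $S$, where the precedence hypothesis again guarantees $(S\times T')\cap\Skew=S\times T'$ and $(S\times T'')\cap\Skew=S\times T''$.

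I do not expect a genuine obstacle here: the statement is a bookkeeping consequence of (i) the definition of $\DB^{S\times T}$, (ii) distributivity of the Cartesian product over union, and (iii) distributivity of $\max$ over a union of index sets. The only point requiring a moment's care is verifying that the constraint ``$(p,q)\in\Skew$'' may be dropped uniformly for $S\times T$, for $S'\times T$ and $S''\times T$, and (in the symmetric case) for $S\times T'$ and $S\times T''$ — and this is precisely where the hypothesis that $S$ precedes $T$ is used. Without it the decomposition could break, since pairs straddling the split need not lie in $\Skew$.
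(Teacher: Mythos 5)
Your proof is correct, and it reaches the same conclusion as the paper, but via a slightly different route. The paper does not drop the $\cap\,\Skew$ constraint; it simply observes that intersection distributes over union, so $(S\times T)\cap\Skew = \bigl((S'\times T)\cap\Skew\bigr)\cup\bigl((S''\times T)\cap\Skew\bigr)$, and then uses that $\max$ decomposes over a union of index sets. That argument is purely set-theoretic and does not invoke the precedence hypothesis at all. You instead use the precedence hypothesis to argue that the $\Skew$ filter is vacuous on $S\times T$ (and on each sub-product), which is also valid and perhaps makes the resulting maxima more visibly ``unconstrained.'' Both arguments are short and correct; the paper's is marginally more general. One small caveat on your closing remark: you say the decomposition ``could break'' without the precedence hypothesis because straddling pairs need not lie in $\Skew$. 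That is not actually the case --- the paper's version keeps the $\cap\,\Skew$ filter on every term, so the identity holds for arbitrary $S,T$ (with the usual convention for a maximum over an empty set). The precedence hypothesis is needed elsewhere (for example in Lemma~\ref{lem:intersection} and Lemma~\ref{lem:persistent_sweep_ds}), but not for this particular bookkeeping observation.
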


\begin{proof}
	Recall that
	$\DB^{S \times T}(y) = \max \left\{ \xdistance'_{pq}(y) \mid (p,q)
	\in (S\times T) \cap \Skew \right\}$. Using that $S = S' \cup S''$
	we then get $S \times T = (S' \times T) \cup (S'' \times T)$, and
	thus also
	$(S\times T) \cap \Skew = ((S' \times T) \cap \Skew) \cup ((S''
	\times T) \cap \Skew)$. Since computing a maximum is decomposable we
	therefore get
	$\DB^{S \times T}(y) = \max\left\{\DB^{S' \times T}(y),\DB^{S''
		\times T}(y)\right\}$. Analogously, $\DB^{S \times T}(y) = \max \left\{ \DB^{S \times T'}(y), \DB^{S
		\times T''}(y) \right\}$.
\end{proof}

\begin{lemma}\label{lem:bisectorSubDS}
	Let $P$ be a polygonal curve in $\R^2$ with $n$ vertices. 
	We can build an $O(n\log^2 n)$ size data structure in $O(n\log^2 n)$ time, such that given any query $(s,t,y)$, for $s, t \in P$ (not necessarily vertices) we can
	compute $\DB^{P[s,t]\times P[s,t]}(y)$ in $O(\log^3n)$
	time.
\end{lemma}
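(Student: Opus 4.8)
The plan is to materialize the entire divide-and-conquer recursion of Lemma~\ref{lem:backward_pair_ds} as a $1$D-range tree and to augment it with the persistent furthest-segment Voronoi structures of Lemma~\ref{lem:persistent_sweep_ds}. Concretely, build a balanced binary search tree whose leaves are the vertices of $P$ in the order along $P$; an internal node $\nu$ represents the canonical subcurve $P_\nu$ and stores (i) the breakpoint representation of $\DB^{P_\nu\times P_\nu}$ produced by Lemma~\ref{lem:backward_pair_ds}, and (ii) the two partially persistent red-black trees $\Delta_{\RR{P_\nu}}$ and $\Delta_{\LL{P_\nu}}$ of Lemma~\ref{lem:persistent_sweep_ds} (we need both, since a node may play either the ``left'' or the ``right'' role in a cross term). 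The key observation for the resource bounds is that the recursion in the proof of Lemma~\ref{lem:backward_pair_ds} \emph{is} exactly this tree traversed bottom-up, so running it once and keeping every intermediate function $\DB^{P_\nu\times P_\nu}$ costs $O(n\log^2 n)$ total; likewise, building $\Delta_{\RR{P_\nu}},\Delta_{\LL{P_\nu}}$ costs $O(|P_\nu|\log|P_\nu|)$ per node, which over one level of the tree is $O(n\log n)$ and over all $O(\log n)$ levels is $O(n\log^2 n)$. For space, $\DB^{P_\nu\times P_\nu}$ has complexity $O(|P_\nu|\log|P_\nu|)$ and each $\Delta$-structure has size $O(|P_\nu|)$, so summing over a level gives $O(n\log n)$ and over all levels $O(n\log^2 n)$.

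For a query $(s,t,y)$, first locate the edges of $P$ containing $s$ and $t$ ($O(1)$ time given pointers, or $O(\log n)$ via vertical ray shooting, exactly as in Section~\ref{sub:hausdorff-subtrajectories}), let $s',t'$ be the vertices succeeding $s$ and preceding $t$, and set $P_0=P[s,s']$ and $P_{k+1}=P[t',t]$, each contributing only the ``vertices'' $\{s,s'\}$ and $\{t',t\}$. As in the Hausdorff subcurve case there are $k=O(\log n)$ internal nodes whose canonical subcurves $P_1,\dots,P_k$, ordered along $P$, together cover $P[s',t']$, and they are found in $O(\log n)$ time using the LCA-based argument of Section~\ref{sub:hausdorff-subtrajectories}. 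The vertex set of $P[s,t]$ is the union of the vertex sets of $P_0,\dots,P_{k+1}$ (up to shared endpoints, whose only effect is duplicating trivial self-pairs), and since the $P_i$ are ordered along $P$, any pair $(p,q)\in\Skew$ with $p\in P_i$, $q\in P_j$ forces $i\le j$. Hence, by repeated use of Observation~\ref{obs:intersection_finding_decomposable},
\[
  \DB^{P[s,t]\times P[s,t]}(y)\;=\;\max_{0\le i\le j\le k+1}\DB^{P_i\times P_j}(y).
\]

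It remains to evaluate the $O(k^2)=O(\log^2 n)$ terms on the right-hand side. A diagonal term $\DB^{P_i\times P_i}(y)$ with $1\le i\le k$ is obtained by querying the stored function of Lemma~\ref{lem:backward_pair_ds} at node $P_i$ in $O(\log n)$ time; for $i\in\{0,k+1\}$ the term ranges over $O(1)$ pairs and is computed directly. An off-diagonal term $\DB^{P_i\times P_j}(y)$ with $1\le i<j\le k$ satisfies the hypothesis of Lemma~\ref{lem:persistent_sweep_ds} (all vertices of $P_i$ precede those of $P_j$): we recover $\Delta_{\RR{P_i}}(y)$ and $\Delta_{\LL{P_j}}(y)$ from the persistent trees at nodes $P_i$ and $P_j$ by locating $y$ among their $O(n)$ versions, and then run the double binary search of Lemma~\ref{lem:doublesearch}, all in $O(\log n)$ time; when $i=0$ (resp.\ $j=k+1$) the set $P_i$ (resp.\ $P_j$) has size $O(1)$ and we instead use the single-point intersection query of Lemma~\ref{lem:functioneq} against $\Delta_{\LL{P_j}}(y)$ (resp.\ $\Delta_{\RR{P_i}}(y)$), again $O(\log n)$. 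Thus each of the $O(\log^2 n)$ terms costs $O(\log n)$, for a total query time of $O(\log^3 n)$.

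The main obstacle I expect is twofold. First, keeping preprocessing at $O(n\log^2 n)$ rather than $O(n\log^3 n)$ relies entirely on the realization that the recursion of Lemma~\ref{lem:backward_pair_ds} already computes all per-node functions $\DB^{P_\nu\times P_\nu}$ in that time, so one should not recompute them node by node. Second, the bookkeeping at the boundary pieces $P_0,P_{k+1}$: since $s,t$ may lie in the interior of edges of $P$, one must check carefully that treating them as the extra vertices $s,t$ — with trivial self-pairs $\xdistance'_{ss},\xdistance'_{tt}$ and cross-pairs handled by single-point intersection queries — correctly recovers $\DB^{P[s,t]\times P[s,t]}$, and that the identification of the $O(\log n)$ canonical nodes (and of $s',t'$) can be done in $O(\log n)$ time, which we inherit from Section~\ref{sub:hausdorff-subtrajectories}.
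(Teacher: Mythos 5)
Your proposal matches the paper's proof essentially line for line: a range tree whose nodes store both the $\DB^{P_\nu\times P_\nu}$ envelope (obtained for free from the divide-and-conquer of Lemma~\ref{lem:backward_pair_ds}, since its recursion tree is the range tree) and the persistent structures $\Delta_{\RR{P_\nu}},\Delta_{\LL{P_\nu}}$; the query decomposes $\DB^{P[s,t]\times P[s,t]}(y)$ into the $O(\log^2 n)$ cross terms $\DB^{P_i\times P_j}(y)$ with $i\le j$, each evaluated in $O(\log n)$ time via Lemma~\ref{lem:persistent_sweep_ds}. Your version is slightly more explicit than the paper about the boundary pieces $P_0,P_{k+1}$ and the LCA-based node selection, and the minor parenthetical about shared endpoints duplicating only trivial self-pairs is a little off (the shared vertex $s'$ also appears in some cross pairs), but since $\DB$ is a maximum this redundancy is harmless and the argument stands.
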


\begin{proof}
	Our main data structure is a range tree whose leaves store the vertices of $P$, ordered
	along $P$. 
	Each internal node $\nu$ corresponds to some
	subcurve $P_\nu = P[p, q]$ for two vertices $p, q \in P$. We assume that $P_\nu$ consists of $m$ vertices.
	Each node will store the representation of function
	$\DB^{P_\nu \times P_\nu}(y)$ as an upper envelope. By Lemma~\ref{lem:backward_pair_ds}, this envelope has complexity $O(m \log m)$. The algorithm to compute this envelope is a divide and conquer algorithm whose recursion tree matches our range tree. Hence, when we compute $\DB^{P_\nu \times P_\nu}(y)$ for the root of our tree in $O(n\log^2 n)$ time we actually also construct all the envelopes associated with the other nodes. In addition, each node will store 
	$\Delta_{\LL{P_\nu}}(y)$ and $\Delta_{\RR{P_\nu}}(y)$
	which require $O(m)$ space and can be constructed by a divide and conquer approach in $O(m \log m)$ time. 
	Hence, the resulting data structure requires $O(n \log^2 n)$ space and can be constructed in $O(n \log^2 n)$ total time.
	
	Let $s,t$ be two points on $P$, and $s'$ and $t'$ be the first vertex
	succeeding $s$ and preceding~$t$, resp. There are $O(\log n)$ subtrees in
	our range search tree, whose subcurves $P[p,q]$ together form
	$P[s', t']$. We can identify the nodes bounding these subtrees in
	$O(\log n)$ time.  We denote these nodes by $P_1,..,P_k$, where
	the index matches the order along $P$. Furthermore, we define $P_0=P[s,s']$ and $P_{k+1}=P[t',t]$
	and observe:
	\begin{equation}
	\label{eq:decomp}
	\DB^{P[s,t]\times P[s,t]}(y)=\max_{0 \leq i \leq j \leq k+1}  
	%\left\{
	\DB^{P_i\times P_j}(y) %\mid i \le j
	%\cup 
	%\{\DB^{P[s, s'] \times P_j }(y) \} \cup  \{\DB^{P_i \times P[t', t] }(y) \}  
	%\right\}.
	\end{equation}
	\noindent
	Indeed, by Observation~\ref{obs:intersection_finding_decomposable},
	$\DB^{P[s,t]\times P[s,t]}(y)$ is decomposable. By
	repeated application of this observation we have that
	$\DB^{P[s,t]\times P[s,t]}(y) = \max_{i,j} \DB^{P_i\times
		P_j}(y)$. 
	Moreover, we only need to consider pairs with $i \leq j$,
	since if $j < i$, we have $P_i \times P_j \not\subseteq \Skew$.
	% \[ \DB^{P_i\times P_j} = \max  \left\{ \xdistance'_{pq}(y) \mid (p,q)
	%                          \in (P_i\times P_j) \cap \Skew \right\}
	%                        = \max \left\{ \xdistance'_{pq}(y) \mid (p,q)
	%                          \in \emptyset \right\} = -\infty.
	%                         \]
	Thus, we can compute $\DB^{P[s, t] \times P[s, t] }(y')$ by only computing the values $\DB^{P_i \times P_j }(y')$ with $i \le j$. % The decomposition of $\DB^{P[s,t]\times P[s,t]}$ follows by the same (but more verbose) argument.
	
	By construction, for each of the $O(\log^2 n)$ pairs $S, T$ in our
	decomposition (Eq.~\ref{eq:decomp}) it holds that all points in $S$ precede
	all points in $T$. Thus we can compute $\DB^{S \times T}(y')$ in $O(\log (|S|+|T|) )=O(\log n)$ time by Lemma~\ref{lem:persistent_sweep_ds}.
	Computing this value for each pair takes $O(\log^3 n)$ total
	time.  For each subcurve $P_i$, we compute $\DB^{P_i \times P_i}(y)$ in $O(\log n)$ time
	using the data structure for $\DB^{P_\nu \times P_\nu}(y)$ of Lemma~\ref{lem:backward_pair_ds} stored in
	the node $\nu$ corresponding to $P_i$ (or in $O(1)$ time if $i \in \{0,k+1\}$).
	%The terms $\dhd(P[s,s'],\overline{ab})$ and $\dhd(P[t',t],\overline{ab})$ are computable in constant time.\fran{why do we care?}
	The lemma follows by taking the maximum of these $O(\log^2 n)$ values.
\end{proof}

As with the Hausdorff term we can make the query time sensitive to
the complexity of $P[s,t]$, and obtain a query time of
$O(\log^3 |P[s,t]|)$. Theorem~\ref{thm:horizontal_subcurve_ds} now
follows.

\section{Arbitrary orientation queries}
\label{sec:Arbitrary_Query_Orientation}

In this section we extend our results to arbitrarily
oriented query segments, proving
Theorem~\ref{thm:arbitrary_full_curve_ds}. Let $\alpha$ be the slope
of the line containing the query segment $\overline{ab}$, and let $\beta$ be its
intercept (note that vertical query segments can be handled by
a rotated version of our data structure for horizontal queries).
We again consider the case where $a$ is left of $b$; the other case
is symmetric. 
Following Eq.~\ref{eq:decompose_FD}, we can write
$\fd(P,\overline{ab})$ as the maximum of four terms: $\|p_1-a\|$,
$\|p_n-b\|$, $\dhd(P,\overline{ab})$, and the \emph{backward pair
  distance} $\DB(\alpha,\beta)$ with respect to $\alpha$. The
backward pair distance is now defined~as
\begin{align*}
  \DB(\alpha,\beta) &= \max\left\{ \xdistance_{pq}(\alpha,\beta) \mid
                      (p,q)\in\mathcal{B}(P)\right\}, & \text{ where } \\
  \xdistance_{pq}(\alpha,\beta) &= \min_x \max \left\{\|(x,\alpha x +
  \beta)-p\|,\|(x,\alpha x + \beta)-q\| \right\}.
\end{align*}

In
Section~\ref{sub:arb_orientations_hausdorff} we present an $O(n\log n)$ size data structure that supports querying
$\dhd(P,\overline{ab})$ in $O(\log^2 n)$ time. 
The key insight is that we can 
use furthest \emph{point} Voronoi diagrams instead of furthest \emph{segment}
Voronoi diagrams. In Section~\ref{sub:arb_orient_backward_pair} we present a
data structure that efficiently supports querying
$\DB(\alpha,\beta)$. In
Section~\ref{sec:arb_Subcurve_Queries} we extend our results to
support queries against subcurves of $P$ as well. This combines our
insights from the horizontal queries with our results from
Sections~\ref{sub:arb_orientations_hausdorff}
and~\ref{sub:arb_orient_backward_pair}. Finally, in Section~\ref{sub:Space_vs_Query_time_tradeoff} we then show how this also leads to a space-time trade off.

\subsection{The Hausdorff distance term}
\label{sub:arb_orientations_hausdorff}

For any point $p$ and slope $\alpha$ we denote by $\LLa{p}$ the ray with slope $\alpha$ that points in the leftward direction. 
Similarly, for any point set
$T$, we define $\LLa{T} = \left\{ \LLa{p} \mid p \in T
\right\}$. Furthermore, we define $h_{\LLa{p}}(x, y)$ to be the
directed Hausdorff distance from $(x,y)$ to the ray $\LLa{p}$, and
$h_{ \LLa{T} } (x, y) = \max \{ h_{\LLa{p}}(x, y) \mid p \in T \}$.
Let $\CH(T)$ be the convex hull of $T$. % , and let $\LH_\alpha(T)$ be the
% left hull of $T$ with respect to slope
% $\alpha$. \ivor{Drop the notation $\LH_\alpha(T)$, and just use convex hulls.}
For a given slope $\alpha$ and a point $r$, we denote by $\ell_r$
the line that is perpendicular to a line with slope $\alpha$ that goes through $r$.
Observe that $\LLa{\ell_r}$ is the halfplane to the left of $\ell_r$.

\begin{lemma}
	\label{lem:convexhalf}
	Let $\LLa{T}$ be a collection of halflines containing $\LLa{p}$ as a
	topmost and $\LLa{q}$ as a bottommost halfline,
	respectively (always with respect to $\alpha$). We have that
	\[
	h_{\LLa{T} } (x, y) = \max  \{ h_{\LLa{p}}(x, y),
	h_{\LLa{q}}(x, y),
	\max \{ ||s - (x, y) || \mid s \in \CH(T) \cap \LLa{\ell_{(x,y)}} \} \}
	\] 
\end{lemma}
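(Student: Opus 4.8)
The plan is to analyze, for a fixed query point $w=(x,y)$ and fixed slope $\alpha$, which halfline $\LLa{r}\in\LLa{T}$ realizes the maximum distance $h_{\LLa{T}}(w)=\max_{r\in T}h_{\LLa{r}}(w)$, by splitting $T$ according to whether the distance from $w$ to $\LLa{r}$ equals the distance to the endpoint $r$ or the (perpendicular) distance to the supporting line of $\LLa{r}$. Recall the structure (as in the horizontal case, Observation~\ref{obs:fsvd} and the case analysis before it): $h_{\LLa{r}}(w)=\|r-w\|$ when $w$ lies ``behind'' $r$ along direction $\alpha$, i.e.\ on the same side of $\ell_r$ as the halfline points away from, and $h_{\LLa{r}}(w)=\mathrm{dist}(w,\ell_r)$ (the perpendicular distance) otherwise, i.e.\ when $w\in\LLa{\ell_r}$. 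Here $\ell_r$ is the line through $r$ perpendicular to slope $\alpha$, and $\LLa{\ell_r}$ is the closed halfplane into which $\LLa{r}$ extends.

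First I would partition $T$ into $T_{\text{pt}}=\{r\in T: w\notin\LLa{\ell_r}\}$ (endpoint case) and $T_{\text{line}}=\{r\in T: w\in\LLa{\ell_r}\}$ (perpendicular-distance case), and handle the two contributions separately, showing
\[
h_{\LLa{T}}(w)=\max\Bigl\{\max_{r\in T_{\text{pt}}}\|r-w\|,\ \max_{r\in T_{\text{line}}}\mathrm{dist}(w,\ell_r)\Bigr\}.
\]
For the line term: among the lines $\ell_r$ with $r\in T_{\text{line}}$, all are parallel (perpendicular to $\alpha$), so $\mathrm{dist}(w,\ell_r)$ is maximized by whichever $r$ is extreme in the direction $-\alpha$ (the bottommost halfline w.r.t.\ $\alpha$), i.e.\ by $q$ — provided $q\in T_{\text{line}}$, that is, provided $w\in\LLa{\ell_q}$. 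If $w\notin\LLa{\ell_q}$, then $w$ lies behind $q$, the set $T_{\text{line}}$ may be empty or the extreme line in it is still bounded by $h_{\LLa{q}}(w)=\|q-w\|$ which is itself $\le\max\{h_{\LLa{p}}(w),h_{\LLa{q}}(w)\}$; either way the line term is dominated by $\max\{h_{\LLa{p}}(w),h_{\LLa{q}}(w)\}$. So the line term is always $\le\max\{h_{\LLa{p}}(w),h_{\LLa{q}}(w)\}$, and these two terms appear explicitly on the right-hand side.

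For the endpoint term $\max_{r\in T_{\text{pt}}}\|r-w\|$: a farthest point of a finite set from $w$ is always a vertex of $\CH(T)$, but we must restrict to those $r$ with $w\notin\LLa{\ell_r}$, equivalently $r\in\RRa{\ell_w}$ where $\ell_w$ is the perpendicular line through $w$; that is, $r$ lies in the halfplane $\CH(T)\cap\LLa{\ell_{(x,y)}}$ only if... — here I need to be careful about orientation, and the cleanest route is: the farthest point over all of $\CH(T)$ from $w$ is a hull vertex $r^\star$; if $r^\star\notin\LLa{\ell_w}$ we may have $r^\star\notin T_{\text{pt}}$, but then the genuine maximizer is $p$ or $q$ (the extreme halflines in the $\alpha$ direction are the ``last'' ones to switch from endpoint-mode to line-mode), and one checks $\|r^\star-w\|\le\max\{h_{\LLa{p}}(w),h_{\LLa{q}}(w)\}$ in that case by a convexity/extremality argument. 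When $r^\star$ does lie in the relevant halfplane it contributes exactly the stated term $\max\{\|s-w\|: s\in\CH(T)\cap\LLa{\ell_{(x,y)}}\}$, and conversely that term is always a lower bound for $h_{\LLa{T}}(w)$ since every such hull vertex $s$ is a convex combination of points of $T$ and the farthest of those points is at least as far. Assembling the three bounds — and noting each of $h_{\LLa{p}}(w)$, $h_{\LLa{q}}(w)$, and $\max\{\|s-w\|:s\in\CH(T)\cap\LLa{\ell_{(x,y)}}\}$ is individually $\le h_{\LLa{T}}(w)$, while their max is $\ge h_{\LLa{T}}(w)$ — gives the equality.

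The main obstacle I expect is the bookkeeping around the ``boundary'' case where the maximizing halfline is neither $p$, $q$, nor a hull vertex in the correct halfplane: one has to argue that the switch between endpoint-distance and perpendicular-distance regimes happens monotonically as $r$ ranges over $T$ sorted by the $\alpha$-coordinate, so that if $w$ is behind some $r\in T$ then it is behind one of the two extremes $p,q$, pinning the contribution of all ``behind'' points to $\max\{h_{\LLa{p}}(w),h_{\LLa{q}}(w)\}$. Getting the halfplane orientation conventions ($\LLa{\ell_{(x,y)}}$ vs.\ its complement) consistent with the direction in which the halflines $\LLa{T}$ emanate is the fiddly part; everything else is the standard ``farthest point lies on the convex hull'' fact plus the two-case formula for distance to a halfline.
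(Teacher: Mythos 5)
Your overall plan---split $T$ by whether the distance from $w=(x,y)$ to $\LLa{r}$ is the endpoint distance $\|r-w\|$ or the perpendicular distance---is the same case split the paper uses (it conditions on whether the realizing point $t$ lies to the right or to the left of $\ell_{(x,y)}$). However, there is a genuine error in the key geometric characterization, and it propagates.

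You write that for $r$ in the perpendicular-distance regime, $h_{\LLa{r}}(w)=\mathrm{dist}(w,\ell_r)$. This is wrong: $\ell_r$ is the line through $r$ \emph{perpendicular} to the direction $\alpha$, whereas the distance from $w$ to the halfline $\LLa{r}$ in this regime is the distance to the \emph{supporting line} of $\LLa{r}$, which has slope $\alpha$. (Compare the horizontal case in the paper: when $x_p\ge x_q$, $h_{\LL{p}}(q)=|y_p-y_q|$, the distance to the horizontal line through $p$, not to the vertical line $\ell_p$.) These two families of lines vary in orthogonal directions, so the mistake is not merely notational: your argument that the line term ``is maximized by whichever $r$ is extreme in the direction $-\alpha$'' reasons about the $\alpha$-coordinate of $r$, while the quantity that actually controls $h_{\LLa{r}}(w)$ in this regime is the coordinate perpendicular to $\alpha$, so the relevant extremes are the topmost and bottommost $r$ with respect to $\alpha$. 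You land on the right candidates ($p$, $q$) by assertion, but the chain of reasoning as written does not establish it, and it only mentions $q$, while $p$ can equally well dominate depending on which side of the supporting lines $w$ sits on.

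For the endpoint term, the ``fiddly'' bookkeeping you worry about is unnecessary: every $r\in T_{\text{pt}}$ satisfies $r\in T\subseteq\CH(T)$ and $r\in\LLa{\ell_{(x,y)}}$, so $T_{\text{pt}}\subseteq\CH(T)\cap\LLa{\ell_{(x,y)}}$ and the bound $\max_{r\in T_{\text{pt}}}\|r-w\|\le\max\{\|s-w\|\mid s\in\CH(T)\cap\LLa{\ell_{(x,y)}}\}$ is immediate; the detour through a global hull maximizer $r^\star$ that may or may not lie in the halfplane is not needed. What \emph{does} need care, and which your sketch and the paper's proof both treat lightly, is the reverse inequality for the third term: a vertex of $\CH(T)\cap\LLa{\ell_{(x,y)}}$ may lie on $\ell_{(x,y)}$ itself (an edge of $\CH(T)$ clipped by the halfplane) rather than be a vertex of $T$, and one must check its distance to $w$ is still bounded by $h_{\LLa{T}}(w)$; this follows from convexity of $\|\cdot-w\|$ along the edge, but deserves a sentence. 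Finally, for comparison, the paper's argument in the ``$t$ left of $\ell_{(x,y)}$'' case is cleaner: assuming $t\notin\{p,q\}\cup(\CH(T)\cap\LLa{\ell_{(x,y)}})$, it takes the leftmost hull edge $\overline{cd}$ in the halfplane that $\LLa{t}$ crosses, shows both $c$ and $d$ lie inside the disk $D$ centered at $w$ through $t$, so $\overline{cd}\subset D$, while $\LLa{t}$ beyond $t$ lies outside $D$---a contradiction. I'd recommend either fixing the two issues above in your decomposition, or switching to this contradiction-on-the-realizer approach.
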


\begin{proof}
	Let $t \in T$ be the point realizing
	$h_{\LLa{T} } (x, y) = \max \{ h_{\LLa{s}}(x, y) \mid s \in T \}$,
	for some query point $(x,y)$. Without loss of generality, we can
	assume $\alpha = 0$, and hence the line $\ell_{(x,y)}$ is
	vertical. 
	We distinguish two cases depending on the position of $t$.
	
	\begin{description}
		\item[Case $t$ right of $\ell_{(x,y)}$.] 
		In this case, $h_{\LLa{t}}(x,y)$  must be given by the vertical distance $|y_t-y|$, therefore we must have that $|y_t-y| = \max\{|y_p-y|,|y_q-y|\}$.
		Thus $\LLa{t}$ must be a topmost or bottommost halfline.
		
		%Let $\LLa{z}$ be the
		% halfline furthest away from $(x,y)$ among $\LLa{p}$ and
		%$\LLa{q}$. We have that $h_{\LLa{z}}(x,y) \geq
		%|y_z-y|$. 
		%Since $\LLa{p}$ and $\LLa{q}$ are the top and bottommost halflines, our general position assumption implies that $\max\{|y_p-y|,|y_q-y|\} > |y_t-y| = h_{\LLa{t}}(x,y)$, a contradiction.
		\item[Case $t$ left of $\ell_{(x,y)}$.] Observe that for all points
		$u \in T$ left of $\ell_{(x,y)}$ we have that
		$h_{\LLa{u}}(x,y)=\|u-(x,y)\|$. Now assume by contradiction that
		$t \not\in \{p,q\}\cup ( \CH(T) \cap \LLa{\ell_{(x,y)}})$, and let
		$\overline{cd}$ be the (leftmost) edge of
		$\CH(T) \cap \LLa{\ell_{(x,y)}}$ hit by $\LLa{t}$. Since $t$
		realizes $h_{\LLa{T}}(x,y)$ we claim that $c$ lies inside the disk
		$D$ centered at $(x,y)$ that has $t$ on its boundary. If $c \in T$
		and outside $D$ this would immediately contradict that $t$
		realizes $h_{\LLa{T}}(x,y)$. If $c$ is the intersection point of
		an edge of $\CH(T)$ with $\ell_{(x,y)}$ and lies outside $D$ the
		endpoint $c'$ of this edge that lies right of $\ell_{(x,y)}$ would
		have
		$h_{\LLa{c'}}(x,y) > \|c-(x,y)\| > \|t-(x,y)\|=h_{\LL{t}}(x,y)$,
		contradicting that $t$ realizes $h_{\LLa{T}}(x,y)$. Hence, $c$
		lies inside $D$.
		
		Via the same argument as above $d$ lies inside $D$. So, by
		convexity, the line segment $\overline{cd}$ is completely
		contained in this disk $D$. However, since (by definition of
		$h_{\LLa{t}}(x,y)$) $t$ is the point on $\LLa{t}$ closest to
		$(x,y)$, the remainder of this halfline is outside $D$. Hence,
		$\LLa{t}$ does not intersect $\overline{cd}$. Contradiction.\qedhere
	\end{description}
	%Since both cases end in contradiction, the lemma follows.
\end{proof}

Our data structure will store $\CH(T)$ in a 1D-range tree whose
internal nodes store FPVDs. This allows us to evaluate
$h_{\LLa{T}}(x,y)$ and $h_{\RRa{T}}(x,y)$ for a query point $(x,y)$
and slope $\alpha$.
% %
% In particular, a data structure of size $O(n\log n)$ can be built in $O(n\log n)$ time  so that given $(x,y)$ and $\alpha$, we can compute
%   $h_{\LLa{T}}(x,y)$ in $O(\log^2 n)$ time (see Lemma~\ref{lem:querying_h_arbitrary_orient}). 

\begin{lemma}
	\label{lem:querying_h_arbitrary_orient}
	Let $T$ be a set of $n$ points in $\R^2$. In $O(n\log n)$ time we
	can construct a data structure of size $O(n\log n)$ so that given a
	query point $(x,y)$ and query slope $\alpha$ we can compute
	$h_{\LLa{T}}(x,y)$ in $O(\log^2 n)$ time.
\end{lemma}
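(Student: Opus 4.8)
The plan is to turn Lemma~\ref{lem:convexhalf} into an algorithm. Fix a query $((x,y),\alpha)$, write $z=(x,y)$ and $\ell=\ell_{z}$. By Lemma~\ref{lem:convexhalf}, $h_{\LLa{T}}(z)$ is the maximum of (i) $h_{\LLa{p}}(z)$ for the topmost halfline $\LLa{p}$ w.r.t.\ $\alpha$, (ii) $h_{\LLa{q}}(z)$ for the bottommost halfline $\LLa{q}$, and (iii) $\max\{\|s-z\|\mid s\in\CH(T)\cap\LLa{\ell}\}$. A convex region's farthest point from $z$ is attained at a vertex, and the vertices of the convex region $\CH(T)\cap\LLa{\ell}$ are either original vertices of $\CH(T)$ lying in $\LLa{\ell}$, or one of the (at most two) points where $\ell$ crosses $\partial\CH(T)$. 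Any such crossing point $c$ lies on $\ell$, hence (taking $\alpha=0$ w.l.o.g., so $\ell$ is vertical) $\|c-z\|=|y_c-y|$, and since $c\in\partial\CH(T)$ we have $y_q\le y_c\le y_p$; therefore $\|c-z\|\le\max(|y_p-y|,|y_q-y|)\le\max(h_{\LLa{p}}(z),h_{\LLa{q}}(z))$. Consequently term (iii) may be replaced by the maximum of $\|v-z\|$ over the \emph{vertices $v$ of $\CH(T)$ that lie in $\LLa{\ell}$}, without changing the overall maximum. So it suffices to (a) find the topmost and bottommost halflines w.r.t.\ $\alpha$, and (b) compute the farthest vertex of $\CH(T)$ from $z$ among those in the halfplane $\LLa{\ell}$.

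For (a): the topmost (bottommost) halfline corresponds to the vertex of $\CH(T)$ extreme in the direction perpendicular to $\alpha$; given $\CH(T)$ in convex-position order such an extreme-vertex query takes $O(\log n)$ time, and then $h_{\LLa{p}}(z)$ and $h_{\LLa{q}}(z)$ are $O(1)$. For (b) I would use the standard range-tree-of-Voronoi-diagrams: compute $\CH(T)$ in $O(n\log n)$ time, lay out its vertices (doubled, to cope with cyclic wraparound) in a balanced binary tree in boundary order, and at every internal node $\nu$ store the farthest-point Voronoi diagram (FPVD) of its vertex set $T_\nu$, preprocessed for planar point location. Because $T_\nu$ is a contiguous arc of $\CH(T)$, it is in convex position and already sorted; hence its FPVD can be built in $O(|T_\nu|)$ time and the point-location structure in $O(|T_\nu|)$ additional space and time. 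Summing over the tree, total space and construction time are $O(n\log n)$ (plus the $O(n\log n)$ for $\CH(T)$).

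To answer a query: using $\CH(T)$, find in $O(\log n)$ time the (at most two) hull edges stabbed by $\ell$; the vertices of $\CH(T)$ in $\LLa{\ell}$ then form a contiguous boundary arc, which decomposes into $O(\log n)$ canonical nodes of the tree. For each such node $\nu$, point-locate $z$ in the FPVD of $T_\nu$ in $O(\log|T_\nu|)=O(\log n)$ time to obtain the vertex of $T_\nu$ farthest from $z$; take the maximum of these $O(\log n)$ distances to get term (iii), and return the maximum with (i) and (ii). This is $O(\log^2 n)$ time. A mirrored structure handles $h_{\RRa{T}}$, which is all the subsequent sections need.

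The part needing the most care is the efficiency bookkeeping: keeping the construction time at $O(n\log n)$ rather than $O(n\log^2 n)$ relies on the fact that the FPVD of points \emph{in convex position}, given in order, is computable in linear time; and the wraparound of the ``vertices on one side of $\ell$'' arc must be handled by the usual doubling trick so that it still maps to $O(\log n)$ canonical nodes. Everything else --- the extreme-vertex and line-stabbing queries on a convex polygon, and the point-location queries in the stored FPVDs --- is standard, and correctness is immediate from Lemma~\ref{lem:convexhalf} together with the chord-domination observation above.
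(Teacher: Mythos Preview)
Your proposal is correct and follows essentially the same construction as the paper: a range tree over the (doubled) boundary vertices of $\CH(T)$ with a linear-time FPVD at each node, queried via $O(\log n)$ canonical arcs after locating the extreme vertices and the chord $\ell$. Your explicit domination argument for the two chord endpoints $c$ is a detail the paper leaves implicit, but otherwise the data structure, the bookkeeping (linear-time FPVD for convex-position input, doubling for wraparound), and the query procedure coincide.
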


\begin{proof}
	Consider a clockwise traversal of the convex hull of $T$ that visits
	every vertex twice. Let $t_1,\dots,t_{2k}$ denote the vertices in this
	order (so $t_{i+k}=t_i$). We store these vertices 
	$t_1,\dots,t_{2k}$ in the leaves of a range tree. Each internal node
	$\nu$ corresponds to some contiguous subsequence $T_\nu = t_i,\dots,t_j$
	of these vertices, and stores the furthest point Voronoi diagram (FPVD) of $T_\nu$. 
	Since the FPVD has
	linear size and the points are in convex position, it can be computed in linear time~\cite{AggarwalGSS89}.
	Thus our data structure has
	size $O(n\log n)$ and can be computed in $O(n\log n)$ time.
	
	To answer a query, we first find the bottom- and topmost points of
	$\CH(T)$ (and thus of $T$) with respect to slope $\alpha$. Let $q$ and
	$p$ be these points, respectively. We now find a contiguous
	subsequence $t_i,\dots,t_j$ of the vertices of $\CH(T)$ in the halfplane
	$\LLa{\ell_{(x,y)}}$. Note that since $t_1,\dots,t_{2k}$ traverses
	$\CH(T)$ twice such a contiguous sequence exists.  More specifically,
	we only compute the first vertex $t_i$ and the last vertex $t_j$.  We
	then query the data structure to obtain $O(\log n)$ internal nodes
	$\nu$ whose associated sets $T_\nu$ together represent $t_i,\dots,t_j$,
	and query their furthest point Voronoi diagrams to find the point $s$
	in $t_i,\dots,t_j$ furthest from $(x,y)$. We report the maximum of
	$h_{\LLa{p}}(x, y)$, $h_{\LLa{q}}(x, y)$, and $\|s-(x,y)\|$. By
	Lemma~\ref{lem:convexhalf} this is $h_{\LLa{T}}(x,y)$.
	
	Finding $p$, $q$, $t_i$, and $t_j$ takes $O(\log n)$ time. This is
	dominated by the $O(\log^2 n)$ time to query all FPVDs. Thus, we can
	compute $h_{\LLa{T}}(x,y)$ in $O(\log^2 n)$ time. Symmetrically, we
	can query $h_{\RRa{T}}(x,y)$ in $O(\log^2 n)$ time.
\end{proof}

By using similar ideas to those of range minimum queries~\cite{halfplane_queries,rmq}
we can achieve $O(\log n)$ query time using $O(n^2)$ space. Then
following Corollary~\ref{lem:dHtoPoint} we can also compute
$\dhd(P,\overline{ab})$ by querying either of these data structures
with points $a$ and $b$. Hence, we obtain:

\begin{lemma}\label{lem:querying_h_arbitrary_orient_large_space}
	Let $T$ be a set of $n$ points in $\R^2$. In $O(n^2)$ time we
	can construct a data structure of size $O(n^2)$ so that given a
	query point $(x,y)$ and query slope $\alpha$ we can compute
	$h_{\LL{T}}(x,y)$ in $O(\log n)$ time.
\end{lemma}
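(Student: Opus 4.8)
The plan is to reuse the data structure of Lemma~\ref{lem:querying_h_arbitrary_orient} and only speed up the single step in which it finds the hull vertex of $T$ furthest from the query point among those lying in a halfplane. That step is what costs $\Theta(\log^2 n)$ there: the relevant hull vertices form a contiguous arc that is covered by $\Theta(\log n)$ canonical nodes of a range tree, each requiring a point location. I would instead precompute enough furthest point Voronoi diagrams (FPVDs) that this arc is always the union of just \emph{two} precomputed pieces --- exactly the sparse-table idea behind $O(1)$-time range minimum queries.

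Concretely, as in Lemma~\ref{lem:querying_h_arbitrary_orient}, walk clockwise around $\CH(T)$ visiting every vertex twice to get $t_1,\dots,t_{2k}$ with $t_{i+k}=t_i$. For every start index $i$ and every $\ell\ge 0$ with $2^\ell\le k$ and $i+2^\ell-1\le 2k$, I store the FPVD of the sub-arc $t_i,\dots,t_{i+2^\ell-1}$, preprocessed for logarithmic-time point location. Because these points are in convex position and in convex order, each such FPVD has size $O(2^\ell)$ and is computable in $O(2^\ell)$ time~\cite{AggarwalGSS89}, and (being a tree-like planar subdivision of linear size) it can be equipped with a logarithmic-time point-location structure in $O(2^\ell)$ additional time. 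There are $O(k)$ start indices for each of the $O(\log k)$ values of $\ell$, so the total space and construction time are $\sum_{\ell}O(k)\cdot O(2^\ell)=O(k^2)=O(n^2)$.

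A query $(x,y),\alpha$ then runs exactly as in Lemma~\ref{lem:querying_h_arbitrary_orient}: in $O(\log n)$ time I find the topmost and bottommost vertices $p,q$ of $\CH(T)$ with respect to $\alpha$, and I find indices $i\le j$ with $j-i<k$ such that $t_i,\dots,t_j$ are precisely the hull vertices inside the halfplane $\LLa{\ell_{(x,y)}}$ (if this set is empty the answer is simply $\max\{h_{\LLa{p}}(x,y),h_{\LLa{q}}(x,y)\}$). Setting $\ell=\lfloor\log_2(j-i+1)\rfloor$, the two stored sub-arcs $t_i,\dots,t_{i+2^\ell-1}$ and $t_{j-2^\ell+1},\dots,t_j$ together cover $t_i,\dots,t_j$, so point-locating $(x,y)$ in their two FPVDs yields the furthest hull vertex $s$ of the arc in $O(\log n)$ time; the remaining vertices of the arc are dominated since the maximum of two FPVD answers equals the FPVD answer of the union. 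By Lemma~\ref{lem:convexhalf} I return $\max\{h_{\LLa{p}}(x,y),\,h_{\LLa{q}}(x,y),\,\|s-(x,y)\|\}$, where --- just as in the proof of Lemma~\ref{lem:querying_h_arbitrary_orient} --- the at most two points in which $\ell_{(x,y)}$ crosses $\partial\CH(T)$ are dominated by the $p$- or $q$-term and may be ignored. The structure for $h_{\RRa{T}}$ is symmetric, and $\dhd(P,\overline{ab})$ follows by querying at $a$ and $b$ (Corollary~\ref{lem:dHtoPoint}).

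The step I expect to need the most care is keeping the point-location preprocessing of the $\Theta(k\log k)$ precomputed FPVDs within $O(n^2)$ total time, i.e. arguing that the FPVD of points in convex position genuinely admits a linear-time-constructible, logarithmic-time point-location structure rather than needing generic $O(s\log s)$ preprocessing; if one is content with $O(n^2\log n)$ construction this becomes a non-issue. A minor point is to check that a contiguous arc of the doubled sequence of length $<k$ always occurs as a genuine (non-wrapping) block $t_i,\dots,t_j$ inside $t_1,\dots,t_{2k}$, so that the sparse-table covering by two power-of-two sub-arcs applies; this holds by choosing the copy of the arc whose start index lies in $\{1,\dots,k\}$.
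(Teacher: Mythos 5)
Your proposal is correct and takes essentially the same approach as the paper: an RMQ-style sparse table of furthest point Voronoi diagrams over all power-of-two sub-arcs of $\CH(T)$, with each query arc covered by two precomputed overlapping pieces whose answers are combined. Your caveat about whether FPVD point location can be preprocessed in linear time for points in convex position is a reasonable concern, but it is one the paper's own proof also leaves implicit, and as you note settling for $O(n^2\log n)$ construction would sidestep it without affecting the query bound.
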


\begin{proof}
	For every vertex $t_i$ on $\CH(T)$, and every $\ell \in 1,\dots,\log n$
	we store the FPVD of $t_i,..t_{i+2^{\ell}}$. This takes a total of
	$\sum_{\ell=1}^{\log n} O(2^\ell) = O(2^{\log n})=O(n)$ space per
	vertex, and thus $O(n^2)$ space in total. Building the FPVDs takes
	linear time, so the total construction time is $O(n^2)$ as well.
	
	To evaluate $h_{\LL{T}}(x,y)$ for some query point $(x,y)$ and query
	slope $\alpha$ we again have to find the furthest point among some
	interval $t_i,\dots,t_j$ along $\CH(T)$. We compute the largest value
	$\ell$ such that $2^\ell \leq j-i$. This allows us to decompose the
	``query range'' $t_i,\dots,t_j$ into two overlapping intervals
	$t_i,\dots,t_{i+2^\ell}$ and $t_{j-2^\ell},\dots,t_j$ for which we have
	pre-stored the FPVD. We can thus query both these FPVDs, in
	$O(\log n)$ time, and report the furthest point found. Computing
	$\ell$ and finding the points $t_{i+2^\ell}$ and
	$t_{j-2^\ell}$ takes $O(\log n)$ time as well, by a binary search on
	$t_i,...,t_j$. Hence the total query time is $O(\log n)$.
\end{proof}

\begin{theorem}
  \label{thm:hausdorff_term_arbitrary}
  Let $P$ be a polygonal curve in $\R^2$ with $n$ vertices.

  \begin{itemize}[nosep]
  \item In $O(n\log n)$ time we can construct a data structure of size
    $O(n\log n)$ so that given a query segment $\overline{ab}$,
    $\dhd(P,\overline{ab})$ can be computed in $O(\log^2 n)$ time.
  \item In $O(n^2)$ time we can construct a data structure of size
    $O(n^2)$ so that given a query segment $\overline{ab}$,
    $\dhd(P,\overline{ab})$ can be computed in $O(\log n)$ time.
  \end{itemize}
\end{theorem}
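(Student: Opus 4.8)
The plan is to reduce the directed Hausdorff distance $\dhd(P,\overline{ab})$ to evaluations of the functions $h_{\LLa{T}}$ and $h_{\RRa{T}}$, exactly as in the horizontal case, and then simply invoke the two data structures of Lemma~\ref{lem:querying_h_arbitrary_orient} and Lemma~\ref{lem:querying_h_arbitrary_orient_large_space}. First I would observe the analogue of Lemma~\ref{lem:dHtoSegment} for an arbitrarily oriented segment: if we set $T$ to be the vertex set of $P$ (using that the directed Hausdorff distance from a polygonal curve to a line segment is attained at a vertex of the curve, as noted in~\cite{de2017data}), then $\dhd(P,\overline{ab}) = \max\{h_{\LLa{T}}(a), h_{\RRa{T}}(b)\}$, where $\alpha$ is the slope of $\overline{ab}$. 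This is a routine rotation of the horizontal argument: rotating coordinates so that $\overline{ab}$ becomes horizontal turns $\LLa{p}$ into $\LL{p}$ and $\RRa{p}$ into $\RR{p}$, and the partition of $T$ into points ``before $a$'', ``after $b$'', and ``in the slab'' goes through verbatim. So I would either state this as a short corollary or fold the rotation remark into the proof.

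Given that reduction, the first bullet follows immediately: build the $O(n\log n)$-size structure of Lemma~\ref{lem:querying_h_arbitrary_orient} twice — once to support $h_{\LLa{T}}$ queries and once (symmetrically) for $h_{\RRa{T}}$ — on $T$ the vertex set of $P$. A query with $\overline{ab}$ of slope $\alpha$ evaluates $h_{\LLa{T}}(a)$ and $h_{\RRa{T}}(b)$, each in $O(\log^2 n)$ time, and returns the maximum; construction is $O(n\log n)$ and space is $O(n\log n)$. The second bullet is identical but uses Lemma~\ref{lem:querying_h_arbitrary_orient_large_space} instead, giving $O(\log n)$ query time at the cost of $O(n^2)$ space and $O(n^2)$ construction time.

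There is essentially no hard step here — the theorem is a packaging of the two preceding lemmas together with the vertex-attainment fact and the decomposition of $\dhd$ into a ``leftward'' and a ``rightward'' piece. The only thing to be careful about is the decomposition lemma itself: one must check that the proof of Lemma~\ref{lem:dHtoSegment} does not secretly use horizontality beyond what a rigid rotation preserves. It does not — the argument only uses that $h_{\LL{p}}$ is monotone along the direction of the segment and constant on the far side, both of which are rotation-invariant statements — so a one-sentence ``rotate so that $\alpha = 0$ and apply Lemma~\ref{lem:dHtoSegment}'' suffices. Hence the proof is short: invoke the decomposition, then cite Lemma~\ref{lem:querying_h_arbitrary_orient} for the first item and Lemma~\ref{lem:querying_h_arbitrary_orient_large_space} for the second.
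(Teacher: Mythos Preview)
Your proposal is correct and matches the paper's approach exactly: the paper derives the theorem in one line by combining the (rotated) decomposition $\dhd(P,\overline{ab})=\max\{h_{\LLa{T}}(a),h_{\RRa{T}}(b)\}$ with Lemma~\ref{lem:querying_h_arbitrary_orient} for the first item and Lemma~\ref{lem:querying_h_arbitrary_orient_large_space} for the second. Your care in justifying the rotation of Lemma~\ref{lem:dHtoSegment} is appropriate, and in fact slightly more explicit than the paper, which simply cites the horizontal decomposition without spelling out the rotation.
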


\subsection{The backward pair distance term}
\label{sub:arb_orient_backward_pair}

Let $(p_i,p_j) \in \Skew$ be an ordered pair. 
We
restrict  $\xdistance_{pq}(\alpha,\beta)$ to the interval
of $\alpha$ values for which $(p_i,p_j)$ is a backward pair with
respect to orientation $\alpha$. Hence, each $\xdistance_{pq}$ is a
partially defined, constant algebraic degree, constant complexity,
bivariate function. The backward pair distance $\DB$ is the upper
envelope of $O(n^2)$ such functions. 
This envelope has
complexity $O(n^{4+\eps})$, for some arbitrarily small $\eps > 0$, and
can be computed in $O(n^{4+\eps})$
time~\cite{sharir1994almost}. Evaluating $\DB(\alpha,\beta)$ for some
given $\alpha, \beta$ takes $O(\log n)$
time. The following lemma, together with Theorem~\ref{thm:hausdorff_term_arbitrary} then gives an $O(n^{4+\eps})$ size data structure that supports $O(\log n)$ time \frechet distance queries.

\begin{lemma}
  \label{lem:backward_pair_arb}
  Let $P$ be an $n$-vertex polygonal curve in $\R^2$.
  In $O(n^{4+\eps})$ time we can construct a size $O(n^{4+\eps})$
  data structure so that given a query segment $\overline{ab}$,  $\DB(\overline{ab})$ can be computed in
  $O(\log n)$ time.
\end{lemma}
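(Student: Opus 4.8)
The plan is to realize $\DB$ directly as an upper envelope of $O(n^2)$ well-behaved partial bivariate functions, construct that envelope explicitly, and preprocess its minimization diagram for point location. First I would pin down that, for a fixed ordered pair $(p,q)\in\Skew$, the function $\xdistance_{pq}(\alpha,\beta)$ is a surface patch of constant description complexity over a simple domain. Writing the query line as $y=\alpha x+\beta$, the minimizer realizing $\min_x\max\{\|(x,\alpha x+\beta)-p\|,\|(x,\alpha x+\beta)-q\|\}$ is attained either at the foot of the perpendicular from $p$ (or from $q$) onto this line, or at the intersection of the line with the perpendicular bisector of $p$ and $q$; in each of these $O(1)$ regimes $\xdistance_{pq}(\alpha,\beta)$ equals a fixed algebraic expression in $\alpha,\beta$ of bounded degree, and the regimes are separated by constant-degree algebraic curves. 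Moreover, $(p,q)$ is a backward pair with respect to orientation $\alpha$ exactly when $(x_q-x_p)+\alpha(y_q-y_p)\le 0$, which is linear in $\alpha$ and independent of $\beta$; hence the domain of $\xdistance_{pq}$ is the $\alpha$-side of a vertical line in the $(\alpha,\beta)$-plane. Thus each $\xdistance_{pq}$ decomposes into $O(1)$ genuine surface patches, each a single bounded-degree algebraic function over a domain of constant complexity.

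Next, by the definition of $\DB$ given above, $\DB(\alpha,\beta)=\max\{\xdistance_{pq}(\alpha,\beta)\mid (p,q)\in\mathcal{B}(P)\}$ is the upper envelope of the $|\mathcal{B}(P)|=O(n^2)$ such patches. I would then invoke Sharir's bound~\cite{sharir1994almost} on upper envelopes of $m$ surface patches of constant description complexity in $\R^3$: with $m=O(n^2)$ the envelope has complexity $O(m^{2+\eps})=O(n^{4+\eps})$ and can be computed within the same time bound by his algorithm. Its minimization diagram is a planar subdivision of the $(\alpha,\beta)$-plane of complexity $O(n^{4+\eps})$ whose edges are arcs of bounded-degree algebraic curves, and each face carries a label: the pair $(p,q)\in\mathcal{B}(P)$ whose patch attains the maximum there (together with which regime of that patch applies).

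Finally I would preprocess this subdivision for point location. Splitting every edge into $O(1)$ $x$-monotone arcs of low algebraic degree, the persistent-search-tree point-location structure of~\cite{sarnak86planar_point_locat_using_persis_searc_trees} (which, as already noted in Section~\ref{sub:Horizontal:_The_Hausdorff_Term}, extends to such curved subdivisions) uses $O(n^{4+\eps})$ space, is built in $O(n^{4+\eps})$ time, and answers a location query in $O(\log n)$ time. On a query segment $\overline{ab}$ (with $a$ left of $b$), we compute its slope $\alpha$ and intercept $\beta$, locate $(\alpha,\beta)$ to obtain the labeling pair $(p,q)$ and regime, and evaluate $\xdistance_{pq}(\alpha,\beta)$ in $O(1)$ time, for $O(\log n)$ total. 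This is $\DB(\overline{ab})$; combined with the $\|p_1-a\|$, $\|p_n-b\|$, and $\dhd$ terms via Eq.~\ref{eq:decompose_FD} and Theorem~\ref{thm:hausdorff_term_arbitrary}, it yields the claimed \frechet distance oracle, proving the lemma.

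The main obstacle I anticipate is not algorithmic but a careful verification: confirming that $\xdistance_{pq}$, restricted to its backward-pair domain, is a bona fide constant-complexity surface patch so that the cited upper-envelope and point-location machinery applies verbatim — in particular, handling the boundaries between the perpendicular-foot and bisector regimes, the case where the unconstrained minimizer in $\min_x$ falls outside the relevant interval, and the usual degeneracies (near-vertical query lines, $y_p=y_q$, coincident projections). Once this local structure is nailed down, the rest is a direct application of known envelope-complexity bounds and planar point location.
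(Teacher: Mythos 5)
Your proposal is correct and takes essentially the same route as the paper: the paper's argument (given in the paragraph immediately preceding the lemma statement) likewise observes that each $\xdistance_{pq}$ restricted to the $\alpha$-range where $(p,q)$ is a backward pair is a partially defined, constant algebraic degree, constant complexity bivariate function, takes the upper envelope of the $O(n^2)$ such functions, and invokes Sharir's $O(m^{2+\eps})$ envelope bound and algorithm with $m=O(n^2)$ followed by point location. Your write-up just makes explicit the details the paper treats as routine -- the $O(1)$ algebraic regimes for $\xdistance_{pq}$, the linearity in $\alpha$ of the backward-pair condition, and the persistent-search-tree point location on the minimization diagram.
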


\section{Arbitrary Orientation Subcurve queries}
\label{sec:arb_Subcurve_Queries}

%While the data structure from
%Theorem~\ref{thm:arbitrary_full_curve_ds} allows us to efficiently
%compute $\fd(P,\overline{ab})$, it does not yet support . 
Next, we show how to support querying against subcurves $P[s,t]$ of $P$ in $O(\log^4 n)$ time as well. We use the same approach as for the horizontal query segment case: we store the vertices of $P$ into the leaves of a range tree where 
each internal node $\nu$ corresponds to some canonical subcurve
$P_\nu$,
so that any subcurve $P[s,t]$ can be represented by
$O(\log n)$ nodes.
%and such that a query subcurve $P[s,t]$ can be represented by
%the canonical subcurves of $O(\log n)$ nodes.

\subparagraph{The Hausdorff distance term.} Since computing the
directed Hausdorff distance is decomposable, using this approach with
the data structure of Theorem~\ref{thm:hausdorff_term_arbitrary} immediately gives us a data structure that allows us to compute
$\dhd(P[s,t],\overline{ab})$ in $O(\log^2 n)$ time. Since the space
usage satisfies the recurrence $S(n) = 2S(n/2) + O(n^2)$, this uses
$O(n^2)$ space in total.

\subparagraph{The backward pair distance term.} By storing
the data structure of Lemma~\ref{lem:backward_pair_arb} at every node
of the tree, we can efficiently compute the contribution of the
backward pairs inside each of the $O(\log n)$ canonical subcurves that
make up $P[s,t]$. However, as in
Section~\ref{sec:Horizontal:_Subtrajectories}, we are still
missing the contribution of the backward pairs from different
canonical subcurves. We again store additional data structures at
every node of the tree that allow us to efficiently compute this
contribution at query time.

Let $S$ and $T$ be (the vertices of) two such canonical subcurves,
with all vertices of $S$ occurring before $T$ along $P$. Analogous to
Section~\ref{sec:Horizontal:_Subtrajectories} we will argue that
for some given $\alpha$ and $\beta$ the functions
$x \mapsto h_{\LLa{T}}(x,\alpha x + \beta)$ and
$x \mapsto h_{\RRa{S}}(x,\alpha x + \beta)$ 
are monotonically increasing and decreasing, respectively, and that the
intersection point of (the graphs of) these functions corresponds to
the contribution of the backward pairs in $S \times T$. So, our goal
is to build data structures storing $S$ and $T$ that given a query
$\alpha,\beta$ allow us to efficiently compute the intersection point
of these functions. As we will argue next, we can use the data
structure of Lemma~\ref{lem:querying_h_arbitrary_orient_large_space}
to support such queries in $O(\log^2 n)$ time.

% So, given a
% representation of these functions that allows us to efficiently access
% the $i^\mathrm{th}$ segment/edge, for any index $i$, we can find the
% intersection point of $x \mapsto h_{\LLa{T}}(x,\alpha x + \beta)$ and
% $x \mapsto h_{\RRa{S}}(x,\alpha x + \beta)$ using a binary search. Our
% goal is thus to build data structures storing $S$ and $T$ such that
% given a query $\alpha, \beta$ we can get such a representation. While
% the data structure we obtain is slightly weaker than stated above
% (i.e. we cannot exactly access the $i^\mathrm{th}$ segment), it will
% allow us to efficiently find the intersection point of the two
% functions in $O(\log^c n)$ time.\frank{fill in the c.}  \frank{if
%   there is a nice way of stating this directly without going through
%   the indexing that might be nicer. However, note that I didn't want
%   to state something like ``is binary searchable'' as that is too
%   vague.}

We generalize some of our earlier geometric observations to
arbitrary orientations. Let $p,q$ be vertices of $P$, and let
$S$ and $T$ be subsets of vertices of $P$.
We define
\begin{align*}
\delta'_{pq}(\alpha,\beta) &=
\min_x \max \left\{ h_{\LLa{q}}( (x, \alpha x + \beta) ),
h_{\RRa{p}}( (x, \alpha x + \beta ) ) \right\}
& \text{ and } \\
\DB^{S \times T}(\alpha,\beta) &= \max \left\{ \xdistance'_{pq}(\alpha,\beta) \mid (p,q) \in (S\times T) \cap \Skew \right\}.
\end{align*}
and prove the following lemmas (by essentially rotating the plane so that the query segment becomes horizontal, and applying the appropriate lemmas from earlier sections): 

\begin{lemma}\label{lem:db_minx}
	Let $P$ be partitioned into two subcurves $S$ and $T$ with all
	vertices in $S$ occurring on $P$ before the vertices of $T$. We have
	that
	\[
	\DB(\alpha,\beta) = \DB^{P \times P}(\alpha,\beta) =
	\max \left\{
	\DB^{S \times S}(\alpha,\beta), \DB^{T \times T}(\alpha,\beta),
	\DB^{S \times T}(\alpha,\beta)
	\right\}.   \]
\end{lemma}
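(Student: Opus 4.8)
The plan is to reduce this statement to its horizontal counterpart, Observation~\ref{obs:db_minx}, by a rotation argument. First I would fix the slope $\alpha$ and consider the rotation $\rho_\alpha$ of $\R^2$ that maps a line of slope $\alpha$ to a horizontal line. Under $\rho_\alpha$, the query segment $\overline{ab}$ (with supporting line of slope $\alpha$ and intercept $\beta$) becomes a horizontal segment at some height $y = y(\alpha,\beta)$, the curve $P$ becomes a rotated curve $P' = \rho_\alpha(P)$ with vertices $p_i' = \rho_\alpha(p_i)$, and the leftward/rightward $\alpha$-halflines $\LLa{p}$, $\RRa{p}$ become the standard horizontal halflines $\LL{p'}$, $\RR{p'}$. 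The key point is that $\rho_\alpha$ is an isometry, so it preserves all the distances appearing in the definitions: $h_{\LLa{q}}((x,\alpha x+\beta)) = h_{\LL{q'}}(\rho_\alpha(x,\alpha x+\beta))$, and minimizing over $x$ on the line of slope $\alpha$ through the plane corresponds exactly to minimizing over the $x'$-coordinate along the horizontal line at height $y$. Hence $\delta'_{pq}(\alpha,\beta) = \delta'_{p'q'}(y)$, where the right-hand side is the horizontal pair-distance function from Section~\ref{sub:Horizontal:_Backward_Pairs} applied to $P'$.

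Next I would observe that the combinatorial data are unchanged by $\rho_\alpha$: the order of vertices along $P$ equals the order along $P'$, so $(S \times T) \cap \Skew$ computed for $P$ coincides (under $\rho_\alpha$) with $(S' \times T') \cap \Skew$ computed for $P'$, where $S' = \rho_\alpha(S)$ and $T' = \rho_\alpha(T)$. Therefore $\DB^{S\times T}(\alpha,\beta)$ for $P$ equals $\DB^{S'\times T'}(y)$ for $P'$, and likewise for the $S\times S$, $T\times T$, and $P\times P$ terms. Since $S$ and $T$ partition $P$ with $S$ before $T$ along $P$, the images $S'$ and $T'$ partition $P'$ with $S'$ before $T'$ along $P'$, so the hypothesis of Observation~\ref{obs:db_minx} is met for $P'$. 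Applying that observation to $P'$ at height $y$ gives
\[
\DB^{P'\times P'}(y) = \max\left\{ \DB^{S'\times S'}(y), \DB^{T'\times T'}(y), \DB^{S'\times T'}(y) \right\},
\]
and translating each term back through $\rho_\alpha^{-1}$ yields the claimed identity for $P$ at $(\alpha,\beta)$.

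The only genuinely delicate point is making sure the rotation argument is clean and that $\alpha$ is treated as fixed throughout (the identity is pointwise in $(\alpha,\beta)$, so there is no issue with $\alpha$ ranging). One should also note the mild caveat that the rotated curve $P'$ may not satisfy the exact general-position assumption (distinct $x$-coordinates) used earlier; this is harmless because the statement being invoked is a pointwise equality of envelope values, which is continuous in the vertex positions and therefore survives a limiting argument, or alternatively one perturbs and notes the identity is an algebraic identity in the coordinates. I expect this bookkeeping — confirming that every lemma from Section~\ref{sub:Horizontal:_Backward_Pairs} transfers verbatim under an isometry and that the $\Skew$ relation is preserved — to be the main (though routine) obstacle; the geometric content is entirely inherited from the horizontal case.
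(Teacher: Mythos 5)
Your proposal is correct and matches the paper's own proof: the paper likewise rotates the plane so that the slope-$\alpha$ line becomes horizontal, notes that this isometry preserves the curve's vertex order (hence $\Skew$ and the backward pairs) and all distances, and then invokes the horizontal-case results (Lemma~\ref{lem:rewrite} for the first equality, and the same $P\times P = (S\times S)\cup(T\times T)\cup(S\times T)\cup(T\times S)$ with $(T\times S)\cap\Skew=\emptyset$ decomposition for the second). Your extra remark about general position after rotation is a sensible bit of care that the paper elides.
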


\begin{proof}
	%Note that if $\alpha = 0$ then for all $\beta \in \R$, $\DB(\beta) = \DB(0, \beta)$ per definition.
	Consider the rotation that transforms the line $y=\alpha x + \beta$ into a horizontal line, keeping $a$ to the left of $b$.
	Applying this rotation to $P$ and $\overline{ab}$ produces an instance of the horizontal query problem.
	Since the same rotation is applied to all vertices of $P$, the set of backward pairs and the relative distances from the query segment to them remain unchanged.
	Therefore, we can apply the results for the horizontal case, in
	particular Lemma~\ref{lem:rewrite}, to obtain that
	$\DB(\alpha,\beta) = \DB^{P \times P}(\alpha,\beta)$.
	
	For the second equality we use some basic set theory together with
	the facts that (i) $S$ and $T$ are a partition of $P$ and thus
	$P \times P = (S \times S) \cup (T \times T) \cup (S \times T) \cup
	(T \times S)$ and (ii) that all vertices of $S$ occur before $T$ and
	thus $(T \times S) \cap \Skew = \emptyset$.
\end{proof}

\begin{lemma}\label{lem:intersect_is_opt_arbitrary_orient}
	Let $S$ and $T$ be subsets of vertices of $P$, with $S$ occurring
	before $T$ along $P$ and let $\alpha,\beta$ denote some query
	parameters. The function $x \mapsto h_{\LLa{T}}(x,\alpha x + \beta)$
	is monotonically increasing, whereas
	$x \mapsto h_{\RRa{S}}(x,\alpha x + \beta)$ is monotonically
	decreasing. These functions intersect at a point
	$(x^*,\alpha x^* + \beta)$, for which
	$\DB^{S \times T}(\alpha,\beta) = h_{\LLa{T}}(x^*,\alpha x^* +
	\beta) = h_{\RRa{S}}(x^*,\alpha x^* + \beta)$.
\end{lemma}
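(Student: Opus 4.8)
The plan is to reduce Lemma~\ref{lem:intersect_is_opt_arbitrary_orient} to its horizontal counterpart (Lemma~\ref{lem:intersection}, together with Observation~\ref{obs:monotonic}) by applying the rigid rotation that maps the line $y = \alpha x + \beta$ to a horizontal line, exactly as was done in the proof of Lemma~\ref{lem:db_minx}. Let $\rho$ be the rotation about the origin (composed with nothing else, since we only need slope to change) that sends the direction of slope $\alpha$ to the horizontal direction, oriented so that $a$ stays left of $b$. Writing $P' = \rho(P)$, $S' = \rho(S)$, $T' = \rho(T)$, the image of the ray $\LLa{q}$ is exactly the leftward horizontal halfline $\LL{\rho(q)}$, and similarly $\rho(\RRa{p}) = \RR{\rho(p)}$; since $\rho$ is an isometry, $h_{\LLa{q}}(z) = h_{\LL{\rho(q)}}(\rho(z))$ for every point $z$, and likewise for $h_{\RRa{p}}$. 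Taking maxima over $q \in T$ (resp. $p \in S$) gives $h_{\LLa{T}}(z) = h_{\LL{T'}}(\rho(z))$ and $h_{\RRa{S}}(z) = h_{\RR{S'}}(\rho(z))$.

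First I would spell out that the curve $x \mapsto (x, \alpha x + \beta)$ in the plane is mapped by $\rho$ to a horizontal line $\ell_{y'}$ at some height $y'$ (depending on $\alpha,\beta$), parameterized monotonically in the first coordinate. Composing, the function $x \mapsto h_{\LLa{T}}(x, \alpha x + \beta)$ equals $x' \mapsto h_{\LL{T'}}(x', y')$ up to the monotone reparameterization $x \mapsto x'$ induced by $\rho$ restricted to the line; since $\rho$ preserves left-to-right order along a line oriented consistently with $a$ before $b$, monotone increase is preserved. By Observation~\ref{obs:monotonic}, $x' \mapsto h_{\LL{T'}}(x', y')$ is monotonically increasing and $x' \mapsto h_{\RR{S'}}(x', y')$ is monotonically decreasing, which yields the two monotonicity claims. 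Because the vertices of $S$ precede those of $T$ along $P$, and $\rho$ reorders nothing along $P$, the vertices of $S'$ still precede those of $T'$ along $P'$, so Lemma~\ref{lem:intersection} applies: the graphs of $x' \mapsto h_{\RR{S'}}(x', y')$ and $x' \mapsto h_{\LL{T'}}(x', y')$ meet at a unique point $(x'^*, y')$, and $\DB^{S' \times T'}(y') = h_{\RR{S'}}(x'^*, y') = h_{\LL{T'}}(x'^*, y')$.

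It then remains to transport this back. Setting $x^*$ to be the preimage (under the induced map) of $x'^*$, so that $\rho(x^*, \alpha x^* + \beta) = (x'^*, y')$, the isometry gives $h_{\LLa{T}}(x^*, \alpha x^* + \beta) = h_{\LL{T'}}(x'^*, y')$ and $h_{\RRa{S}}(x^*, \alpha x^* + \beta) = h_{\RR{S'}}(x'^*, y')$, so these two values coincide, establishing the intersection claim. Finally, by the definitions, $\DB^{S \times T}(\alpha,\beta) = \max\{\delta'_{pq}(\alpha,\beta) \mid (p,q) \in (S \times T) \cap \Skew\}$, and each $\delta'_{pq}(\alpha,\beta) = \min_x \max\{h_{\LLa{q}}(x,\alpha x+\beta), h_{\RRa{p}}(x,\alpha x+\beta)\}$ equals $\min_{x'} \max\{h_{\LL{\rho(q)}}(x',y'), h_{\RR{\rho(p)}}(x',y')\} = \delta'_{\rho(p)\rho(q)}(y')$ by the same isometry argument; hence $\DB^{S \times T}(\alpha,\beta) = \DB^{S' \times T'}(y')$, which we have just shown equals the common value $h_{\LLa{T}}(x^*,\alpha x^* + \beta) = h_{\RRa{S}}(x^*,\alpha x^* + \beta)$.

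I do not expect a serious obstacle here: the whole argument is an instance of the ``rotate to horizontal'' reduction already used for Lemma~\ref{lem:db_minx}. The only point requiring a little care is the bookkeeping of orientations — checking that the rotation chosen to keep $a$ left of $b$ indeed preserves the left-to-right order used in the monotonicity statements and in Lemma~\ref{lem:intersection}, and that $\LLa{}$/$\RRa{}$ halflines map to $\LL{}$/$\RR{}$ halflines rather than being swapped. Handling the degenerate case where the rotated segment is vertical is excluded at the start of Section~\ref{sec:Arbitrary_Query_Orientation} (vertical queries are treated by a rotated copy of the horizontal data structure), so it need not be revisited.
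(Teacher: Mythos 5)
Your proof is correct and follows essentially the same route as the paper: rotate so the query line becomes horizontal, then invoke Observation~\ref{obs:monotonic} and Lemma~\ref{lem:intersection} and transport the conclusion back via the isometry. You simply spell out the bookkeeping (images of the $\LLa{}$/$\RRa{}$ halflines, preservation of ordering, and the identity $\DB^{S\times T}(\alpha,\beta)=\DB^{S'\times T'}(y')$) more explicitly than the paper does.
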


\begin{proof}
	Consider the rotation that transforms the line $y=\alpha x + \beta$ into a horizontal line, keeping $a$ to the left of $b$.
	Note again that this preserves distances.
	%We then   have $\alpha = 0$. 
	It follows then from
	Observation~\ref{obs:monotonic} that
	$x \mapsto h_{\LLa{T}}(x,\alpha x + \beta)$ and
	$x \mapsto h_{\RRa{S}}(x,\alpha x + \beta)$ are monotonically
	increasing and decreasing, respectively. Furthermore, by
	Lemma~\ref{lem:intersection} they intersect in a single point
	$(x^*,y)=(x^*,\alpha x^* + \beta)$, at which
	$\DB^{S \times T}(\alpha,\beta) = h_{\RRa{S}}=(x^*,y) =
	h_{\RRa{S}}(x^*,y)$.
	%   Consider the coordinate system with axes $\lambda$ and $\kappa$, in
	%   which the $\lambda$-axis is parallel to $\overline{ab}$, and let $f$
	%   be a function mapping a point $(\lambda,\kappa)$ to the appropriate
	%   point in the $x,y$-space. By Observation~\ref{obs:monotonic}, for
	%   any $\kappa$, the functions
	%   $\lambda \mapsto h_{\LLa{T}}(f(\lambda,\kappa))$ and
	%   $\lambda \mapsto h_{\RRa{S}}(f(\lambda,\kappa))$ are monotonically
	%   increasing, and monotonically decreasing. Similarly, by
	%   Corollary~\ref{cor:intersection} they intersect in a point, and
	%   realize $\DB^{S \times T}$. Since $a$ is left of $b$, the
	%   $\lambda$-coordinate is monotonically increasing in $x$, and hence
	%   it follows that
	%
	%   Now observe that the
	%   $\lambda$-coordinate is monotonically increasing in $x$, and thus 
	%   linear function of $x$, and thus
	%   $x \mapsto h_{\LLa{T}}(x,\alpha x + \beta)$ is monotonically
	%   increasing, whereas $x \mapsto h_{\RRa{S}}(x,\alpha x + \beta)$ is
	%  
	%
	% Note that Corollary~\ref{cor:intersection} guarantees that for any fixed $\alpha, \beta$, the function value $\DB^{S \times T}(\alpha, \beta)$ is realized by the point of intersection between $x \mapsto  h_{\LLa{T}} (x, \alpha x + \beta)$ and $x \mapsto h_{(\RR{S}, \alpha)} (x, \alpha x + \beta)$\rodrigo{That corollary is for the horizontal case. Do we want to add ``after a suitable rotation'' or something like that?}.
\end{proof}

\subparagraph{Querying $\DB^{S \times T}(\alpha,\beta)$.} % By 
% Lemma~\ref{lem:intersect_is_opt_arbitrary_orient} we can compute
% $\DB^{S \times T}(\alpha,\beta)$, for some query $(\alpha,\beta)$ by
% computing the intersection point of
% $x \mapsto h_{\LLa{T}}(x,\alpha x + \beta)$ and
% $x \mapsto h_{\RRa{S}}(x,\alpha x + \beta)$. 
Consider the predicate
$Q(x) = h_{\LLa{T}}(x,\alpha x + \beta) < h_{\RRa{S}}(x,\alpha x +
\beta)$. It follows from
Lemma~\ref{lem:intersect_is_opt_arbitrary_orient} that there is a
single value $x^*$ so that $Q(x) = \textsc{False}$ for all $x < x^*$
and $Q(x) = \textsc{True}$ for all $x \geq x^*$. Moreover, $x^*$
realizes $\DB^{S \times T}(\alpha,\beta)$. By storing $S$ and $T$,
each in a separate copy of the data structure of
Lemma~\ref{lem:querying_h_arbitrary_orient_large_space}, we can
evaluate $Q(x)$, for any value $x$, in $O(\log n)$ time. We then
use parametric search~\cite{megiddo1983parametric} to find $x^*$ in
$O(\log^2 n)$ time.

\begin{lemma}\label{lem:query_intersect}
	Let $S, T$ be subsets of vertices of $P$ such that all vertices in $S$
	precede all vertices in $T$, stored in
	the data structure of
	Lemma~\ref{lem:querying_h_arbitrary_orient_large_space}. For any
	query $\alpha,\beta$ we can compute
	$\DB^{S \times T}(\alpha,\beta)$ in $O(\log^2 n)$ time.
\end{lemma}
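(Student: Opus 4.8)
The plan is to locate the single $x$-value $x^{*}$ at which the two monotone functions of Lemma~\ref{lem:intersect_is_opt_arbitrary_orient} cross, by parametric search over $x$, using the $O(n^{2})$-space, $O(\log n)$-query structure of Lemma~\ref{lem:querying_h_arbitrary_orient_large_space} as a decision oracle. First I set up the monotone decision problem. By Lemma~\ref{lem:intersect_is_opt_arbitrary_orient} the functions $g_{T}(x) := h_{\LLa{T}}(x,\alpha x+\beta)$ and $g_{S}(x) := h_{\RRa{S}}(x,\alpha x+\beta)$ are, respectively, monotonically increasing and decreasing, and they meet in a single point $(x^{*},\alpha x^{*}+\beta)$ with $\DB^{S\times T}(\alpha,\beta)=g_{T}(x^{*})=g_{S}(x^{*})$. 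Hence the predicate $Q(c)$ that is true exactly when $g_{T}(c)\le g_{S}(c)$ is monotone in $c$: it is false for $c<x^{*}$ and true for $c\ge x^{*}$. Since both $S$ and $T$ are stored in the data structure of Lemma~\ref{lem:querying_h_arbitrary_orient_large_space}, for any real $c$ we evaluate $g_{T}(c)$ and $g_{S}(c)$, and hence resolve $Q(c)$, in $O(\log n)$ time; if ever $g_{T}(c)=g_{S}(c)$ we have found $x^{*}=c$ and stop.

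Next I apply parametric search~\cite{megiddo1983parametric} with the \emph{generic algorithm} being the evaluation procedure of Lemma~\ref{lem:querying_h_arbitrary_orient_large_space} run on the (unknown) query point $(x^{*},\alpha x^{*}+\beta)$, with query slope $\alpha$. Unrolling that procedure for, say, $g_{T}$: $(i)$ compute the topmost and bottommost hull vertices of $T$ with respect to $\alpha$ --- these depend only on $\alpha$, not on $x^{*}$; $(ii)$ by a binary search on $\CH(T)$, find the contiguous run of hull vertices lying in the halfplane $\LLa{\ell_{(x^{*},\alpha x^{*}+\beta)}}$; $(iii)$ split this run into two precomputed dyadic subsequences and perform two furthest-point Voronoi diagram point locations with the point $(x^{*},\alpha x^{*}+\beta)$. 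Because the query point moves \emph{linearly} in $x$ along the line $y=\alpha x+\beta$, every branch in steps $(ii)$ and $(iii)$ is a sign test of a constant-degree polynomial in $x^{*}$ (a side-of-line or side-of-bisector test), so it reduces to comparing $x^{*}$ against $O(1)$ explicitly computable thresholds, and the whole procedure makes $O(\log n)$ such branches. I resolve each comparison ``is $x^{*}<c$?'' by one call to $Q(c)$, using its monotonicity, and maintain an interval that still contains $x^{*}$; each resolution costs $O(\log n)$. Running the generic algorithm for both $g_{T}$ and $g_{S}$, I end with a final interval on which $g_{T}$ coincides with a single constant-degree algebraic function $f_{T}(x)$ (the distance to one fixed hull vertex, or to one fixed bounding halfline) and $g_{S}$ coincides with some $f_{S}(x)$. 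Then $x^{*}$ is the unique root of $f_{T}(x)-f_{S}(x)$ inside that interval, computed in $O(1)$ time, and $\DB^{S\times T}(\alpha,\beta)=f_{T}(x^{*})$.

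For the running time: the topmost/bottommost computations take $O(\log n)$, the generic algorithm makes $O(\log n)$ branches each resolved with $O(1)$ oracle calls of $O(\log n)$ time, and the final equation solve is $O(1)$, for $O(\log^{2} n)$ in total. The main obstacle is the bookkeeping of the parametric search: checking that the evaluation procedure of Lemma~\ref{lem:querying_h_arbitrary_orient_large_space} really branches only $O(\log n)$ times, that each branch is a constant-degree test in the parameter $x$ (so $O(1)$ oracle calls suffice per branch), and that the monotone predicate supplied by Lemma~\ref{lem:intersect_is_opt_arbitrary_orient} is a valid resolver for exactly those branches. The remaining geometry --- that $g_{T}$ and $g_{S}$ are piecewise constant-degree so the last step is a routine root finding --- follows from Lemma~\ref{lem:convexhalf} and the structure of furthest-point Voronoi diagrams already established.
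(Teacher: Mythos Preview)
Your proof is correct and follows essentially the same approach as the paper: parametric search with the $O(\log n)$-time evaluation procedure of Lemma~\ref{lem:querying_h_arbitrary_orient_large_space} serving as both the generic algorithm (run on the unknown $x^{*}$) and the decision oracle, so that the $O(\log n)$ constant-degree branches are each resolved in $O(\log n)$ time. One small slip: with $g_{T}$ increasing and $g_{S}$ decreasing, your predicate $Q(c)=[g_{T}(c)\le g_{S}(c)]$ is true for $c\le x^{*}$ and false for $c>x^{*}$, the reverse of what you wrote---but this is immaterial to the argument, and in fact the paper's own statement of the analogous predicate has the same harmless sign inversion.
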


\begin{proof}
	We treat $x^*$ as a variable, and evaluate $Q$ on the
	(unknown) value $x^*$. While doing so, we maintain an interval that is
	known to contain $x^*$. Initially this interval is $\R$ itself. When
	the algorithm to evaluate $Q(x^*)$ reaches a comparison involving
	$x^*$, we obtain a low degree polynomial in $x^*$ (as all comparisons
	in the query algorithm of
	Lemma~\ref{lem:querying_h_arbitrary_orient_large_space} test if the
	query point lies left or right of some line, or compare the Euclidean
	distance between two pairs of points). We compute the (constantly
	many) roots of this polynomial, and evaluate $Q$ again at each
	root. This shrinks the interval known to contain $x^*$, and allows the
	evaluation of $Q(x^*)$ to proceed. When the evaluation of $Q(x^*)$
	finishes, the interval known to contain $x^*$ has shrunk to a single
	point, $x^*$, or $x^*$ can be computed from it by solving one more
	equation in constant time. Evaluating $Q$ takes $O(\log n)$ time, and
	thus encounters at most $O(\log n)$ comparisons. For each such
	comparison we again evaluate $Q$ at a constant number of roots, taking
	$O(\log n)$ time each. Hence the total time required to compute $x^*$
	is $O(\log^2 n)$. Given $x^*$ we can obtain
	$\DB^{S \times T}(\alpha,\beta) = h_{\LLa{T}}(x^*,\alpha x + \beta)$
	in $O(\log n)$ time.
\end{proof}

Note that the parametric search approach we used here is an
$O(\log n)$ factor slower compared to the approach we used for
horizontal queries only (Lemma~\ref{lem:persistent_sweep_ds}).

For every node $\nu$ of the recursion tree on $P$ we store: (i)
the data structure of Lemma~\ref{lem:backward_pair_arb} built on its
canonical subcurves $P_\nu$, and (ii) the data structure of
Lemma~\ref{lem:querying_h_arbitrary_orient_large_space} built on the
vertices of $P_\nu$. The total space usage of the data structure
follows the recurrence $S(n)=2S(n/2) + O(n^{4+\eps})$, which solves to
$O(n^{4+\eps})$.
To query the data structure with some subcurve $P[s,t]$ from some
vertex $s$ to a vertex $t$ we again find the $O(\log n)$ nodes whose
canonical subcurves together define $P[s,t]$, query the
Lemma~\ref{lem:backward_pair_arb} data structure for each of them, and
run the algorithm from Lemma~\ref{lem:query_intersect} for each
pair. The running time is dominated by this last step, as this
requires $O(\log^2 n)$ time for each pair, and we have $O(\log^2 n)$
pairs to consider. Hence, the total running time is $O(\log^4 n)$. As
before, the procedure can be easily extended to the case where
$s$ and $t$ lie on the interior of an edge. We conclude:

\begin{lemma}\label{lem:backward_pair}
	Let $P$ be a polygonal curve in $\R^2$ with $n$ vertices. There is
	an $O(n^{4+\eps})$ size data structure that can be built
	in $O(n^{4+\eps})$ time such that given an arbitrary query segment
	$\overline{ab}$ and two query points $s$ and $t$ on $P$ it can
	report $\DB^{P[s,t]\times P[s,t]}(\alpha,\beta)$ in $O(\log^4 n)$ time. Here, $\eps > 0$
	is an arbitrarily small constant.
\end{lemma}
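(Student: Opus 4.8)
The plan is to mirror, almost verbatim, the construction used for horizontal subcurve queries in Lemma~\ref{lem:bisectorSubDS}, swapping each horizontal-specific component for its arbitrary-orientation counterpart. First I would build a range tree whose leaves store the vertices of $P$ in order along $P$; every internal node $\nu$ represents a canonical subcurve $P_\nu$, and any subcurve $P[s,t]$ between two vertices is the concatenation of the canonical subcurves of $O(\log n)$ such nodes. At each node $\nu$ I store two associated structures: (i) the data structure of Lemma~\ref{lem:backward_pair_arb} built on $P_\nu$, which handles the backward pairs that lie entirely within $P_\nu$; and (ii) the data structure of Lemma~\ref{lem:querying_h_arbitrary_orient_large_space} built on the vertex set of $P_\nu$, which supports $O(\log n)$-time evaluation of $h_{\LLa{P_\nu}}$ and $h_{\RRa{P_\nu}}$ for any slope $\alpha$. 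The cost of the associated structures at one level of the tree is $\sum_\nu O(|P_\nu|^{4+\eps}) = O(n^{4+\eps})$, so the total space and preprocessing time obey $S(n) = 2S(n/2) + O(n^{4+\eps})$, which solves to $O(n^{4+\eps})$.

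For a query $(\overline{ab},s,t)$, let $\alpha,\beta$ be the slope and intercept of the line through $\overline{ab}$. In $O(\log n)$ time I identify the vertex $s'$ following $s$ and the vertex $t'$ preceding $t$, set $P_0=P[s,s']$ and $P_{k+1}=P[t',t]$, and let $P_1,\dots,P_k$ be the $O(\log n)$ canonical subcurves composing $P[s',t']$, indexed in order along $P$. By the arbitrary-orientation analog of Observation~\ref{obs:intersection_finding_decomposable} (which follows from the same rotation argument used for Lemma~\ref{lem:db_minx}), together with Lemma~\ref{lem:db_minx} itself, we have $\DB^{P[s,t]\times P[s,t]}(\alpha,\beta)=\max_{0\le i\le j\le k+1}\DB^{P_i\times P_j}(\alpha,\beta)$, where pairs with $i>j$ are dropped because $P_i\times P_j\not\subseteq\Skew$. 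The $O(\log n)$ diagonal terms $\DB^{P_i\times P_i}(\alpha,\beta)$ are obtained by querying the Lemma~\ref{lem:backward_pair_arb} structure stored at $P_i$ (or computed trivially for $i\in\{0,k+1\}$), each in $O(\log n)$ time. For each of the $O(\log^2 n)$ off-diagonal pairs $(P_i,P_j)$ with $i<j$ — where, by construction, every vertex of $P_i$ precedes every vertex of $P_j$ along $P$ — I apply Lemma~\ref{lem:query_intersect}: since both $P_i$ and $P_j$ carry the Lemma~\ref{lem:querying_h_arbitrary_orient_large_space} structure, computing $\DB^{P_i\times P_j}(\alpha,\beta)$ via parametric search takes $O(\log^2 n)$ time. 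Summing over all pairs gives $O(\log^4 n)$; taking the overall maximum is subsumed in this bound. Finally, since $s$ and $t$ come with pointers to their containing edges (or are located by ray shooting), $P_0$ and $P_{k+1}$ have $O(1)$ vertices, so allowing $s,t$ to lie in edge interiors costs nothing extra — it only affects the two trivial partial pieces, exactly as in Section~\ref{sec:Horizontal:_Subtrajectories}.

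The substantive content — already established by the lemmas I am invoking — is the cross-term characterization: $\DB^{P_i\times P_j}(\alpha,\beta)$ equals the value at the (unique) crossing of the monotone functions $x\mapsto h_{\RRa{P_i}}(x,\alpha x+\beta)$ and $x\mapsto h_{\LLa{P_j}}(x,\alpha x+\beta)$, obtained in Lemma~\ref{lem:intersect_is_opt_arbitrary_orient} by rotating the plane so $\overline{ab}$ becomes horizontal and reusing Lemma~\ref{lem:intersection}. Given that, the only point requiring care beyond the bookkeeping above is the parametric search itself: one must check that every comparison made by the query algorithm of Lemma~\ref{lem:querying_h_arbitrary_orient_large_space} is a low-degree polynomial test in the unknown $x^*$ — side-of-line tests and comparisons of Euclidean distances between pairs of points — so that resolving each of the $O(\log n)$ comparisons along the way costs only $O(\log n)$ time, keeping each off-diagonal evaluation at $O(\log^2 n)$ and the whole query at $O(\log^4 n)$. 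This verification is exactly the argument inside the proof of Lemma~\ref{lem:query_intersect}, so for this lemma it suffices to assemble the pieces and account for the $O(\log^2 n)$ pairs.
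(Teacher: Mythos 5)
Your proposal is correct and follows essentially the same construction and query decomposition as the paper's own argument: a range tree whose nodes carry both the Lemma~\ref{lem:backward_pair_arb} and Lemma~\ref{lem:querying_h_arbitrary_orient_large_space} structures, the recurrence $S(n)=2S(n/2)+O(n^{4+\eps})$ giving $O(n^{4+\eps})$ space and preprocessing, and the query that handles $O(\log n)$ diagonal terms via Lemma~\ref{lem:backward_pair_arb} and $O(\log^2 n)$ cross terms via Lemma~\ref{lem:query_intersect}, for a total of $O(\log^4 n)$.
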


Since we can compute all four terms $\|s-a\|$, $\|t-b\|$,
$\dhd(P[s,t],\overline{ab})$, and
$\DB^{P[s,t]\times P[s,t]}(\alpha,\beta)$ in $
O(\log^4 n)$ time, it follows that we can efficiently answer \frechet distance queries
against subcurves.

\subsection{Space vs Query time tradeoff}
\label{sub:Space_vs_Query_time_tradeoff}

We can use our approach for subcurve queries from
Section~\ref{sec:arb_Subcurve_Queries} to obtain a space vs query time
trade off for queries against the entire curve. Let $k \in [1..n]$ be
a parameter. We trim the recursion tree on $P$ at a node $\nu$ of size
$O(k)$. Let \T denote the resulting tree (i.e. the top $\log (n/k)$
levels of the full recursion tree), and let $L(\T)$ denote the set of
leaves of \T, each of which thus corresponds to a subcurve of length
$O(k)$. Let $\ell(\nu)$ and $r(\nu)$ be the left and right child of $\nu$,
respectively. %Let $L(\T)$ be the set of leaves in $\T$.
By repeated application of the second equality in Lemma~\ref{lem:db_minx} we have that

\[
  \DB^{P \times P}(\alpha,\beta)
  = \max \left\{ \max_{\nu \in \T} \DB^{P_{\ell(\nu)} \times P_{r(\nu)}}(\alpha,\beta)
               , \max_{\nu \in L(\T)} \DB^{P_\nu \times P_\nu}(\alpha,\beta)
             \right \}
\]

At every leaf of \T we now store the data structure of
Lemma~\ref{lem:backward_pair_arb}, and at every internal node the data
structure of
Lemma~\ref{lem:querying_h_arbitrary_orient_large_space}. The space
required by all Lemma~\ref{lem:backward_pair_arb} data structures is
$O((n/k)k^{4+\eps}) = O(nk^{3+\eps})$. The total size for all
Lemma~\ref{lem:querying_h_arbitrary_orient_large_space} data
structures follows the recurrence $S(n) = 2S(n/2) + O(n^2)$ which
solves to $O(n^2)$. Hence, the total space used is $O(nk^{3+\eps} +
n^2)$. The preprocessing time is $O(nk^{3+\eps} + n^2)$ as well.

To answer a query $(\alpha,\beta)$ we now query the
Lemma~\ref{lem:backward_pair_arb} data structures at the leaves of \T
in $O(\log k)$ time each. For every internal node $\nu$ we use
Lemma~\ref{lem:query_intersect} to compute the contribution of
$\DB^{P_{\ell(\nu)} \times P_{r(\nu)}}(\alpha,\beta)$ in $O(\log^2 n)$
time. Hence, the total query time is
$O((n/k)\log k + (n/k)\log^2 n) = O((n/k)\log^2 n)$.
% \begin{lemma}
%   \label{lem:tradeoff}
%   Let $P$ be a polygonal curve in $\R^2$ with $n$ vertices, and let
%   $k \in [1..n]$ be a parameter. In $O(nk^{3+\eps} + n^2)$ time we can
%   construct a data structure of size $O(nk^{3+\eps} + n^2)$ so that given a
%   query segment $\overline{ab}$, $\DB(\overline{ab})$ can be computed
%   in $O((n/k)\log^2 n)$ time.
% \end{lemma}
So, e.g., choosing $k=n^{1/3}$ yields an  $O(n^{2+\eps})$ size data
structure supporting $O(n^{2/3}\log^2 n)$ time queries. We can extend
this idea to support subcurve queries in $O((n/k)\log^2 n + \log^ 4n)$
time as well, giving us the following result:

\begin{lemma}
  \label{lem:tradeoff_subcurve}
  Let $P$ be a polygonal curve in $\R^2$ with $n$ vertices, and let
  $k \in [1..n]$ be a parameter. In $O(nk^{3+\eps} + n^2)$ time we can
  construct a data structure of size $O(nk^{3+\eps} + n^2)$ so that
  given a query segment $\overline{ab}$, $\DB(\overline{ab})$ can be
  computed in $O((n/k)\log^2 n)$ time. If, in addition we are also given
  two points $s$ and $t$ on $P$,
  $\DB^{P[s,t] \times P[s,t]}(\overline{ab})$ can be computed in
  $O((n/k)\log^2 n + \log^4 n)$ time.
\end{lemma}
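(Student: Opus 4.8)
The plan is to combine the space-time tradeoff idea from the full-curve case with the range-tree decomposition for subcurve queries. I will build the trimmed recursion tree $\T$ (the top $\log(n/k)$ levels of the full recursion tree on $P$), so that each leaf of $\T$ corresponds to a canonical subcurve of length $O(k)$. At every leaf $\nu \in L(\T)$ I store the data structure of Lemma~\ref{lem:backward_pair_arb} built on $P_\nu$ (space $O(k^{4+\eps})$ each, so $O((n/k)k^{4+\eps}) = O(nk^{3+\eps})$ in total), and at every internal node $\nu$ of $\T$ I store the data structure of Lemma~\ref{lem:querying_h_arbitrary_orient_large_space} built on the vertices of $P_\nu$ (space following $S(n) = 2S(n/2) + O(n^2)$, which solves to $O(n^2)$). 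In addition, to support subcurve queries, I store at \emph{every} node of $\T$ the same pair of data structures needed for the subcurve machinery of Section~\ref{sec:arb_Subcurve_Queries}. The total space and preprocessing time are both $O(nk^{3+\eps} + n^2)$, exactly as in the full-curve tradeoff.

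The query for the full curve $\overline{ab}$ works exactly as in Section~\ref{sub:Space_vs_Query_time_tradeoff}: by repeated application of the second equality in Lemma~\ref{lem:db_minx},
\[
  \DB^{P \times P}(\alpha,\beta)
  = \max \left\{ \max_{\nu \in \T} \DB^{P_{\ell(\nu)} \times P_{r(\nu)}}(\alpha,\beta),
                 \max_{\nu \in L(\T)} \DB^{P_\nu \times P_\nu}(\alpha,\beta) \right\},
\]
so I query each leaf's Lemma~\ref{lem:backward_pair_arb} structure in $O(\log k)$ time, and for each internal node apply Lemma~\ref{lem:query_intersect} in $O(\log^2 n)$ time; together with Theorem~\ref{thm:hausdorff_term_arbitrary} and the two trivial endpoint terms this yields the claimed $O((n/k)\log^2 n)$ query time for $\DB(\overline{ab})$.

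For the subcurve query $P[s,t]$, I first locate the edges containing $s$ and $t$, set $s'$, $t'$ to the first vertex succeeding $s$ and preceding $t$, and identify the $O(\log n)$ canonical subcurves $P_1, \dots, P_m$ of the \emph{full} range tree whose union is $P[s',t']$ (with $P_0 = P[s,s']$ and $P_{m+1} = P[t',t]$ handled in $O(1)$ time). By Observation~\ref{obs:intersection_finding_decomposable} and the decomposability in Lemma~\ref{lem:db_minx},
\[
  \DB^{P[s,t] \times P[s,t]}(\alpha,\beta) = \max_{0 \le i \le j \le m+1} \DB^{P_i \times P_j}(\alpha,\beta).
\]
The diagonal terms $\DB^{P_i \times P_i}$ are answered by the Lemma~\ref{lem:backward_pair_arb} data structure stored at the node corresponding to $P_i$, in $O(\log^4 n)$ time total over all $O(\log n)$ such nodes (or less, since each subcurve has size at most $|P[s,t]|$). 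For each of the $O(\log^2 n)$ cross pairs $(P_i, P_j)$ with $i < j$ we invoke Lemma~\ref{lem:query_intersect}, each taking $O(\log^2 n)$ time, for a total of $O(\log^4 n)$; this is exactly the argument already used in Lemma~\ref{lem:backward_pair}. The subtle point is that a canonical subcurve $P_i$ from the full range tree need not be a node of the trimmed tree $\T$ — but its vertices form a contiguous block along $P$, so it is covered by $O(\log n)$ nodes of $\T$ (and ultimately decomposed into subcurves stored in the tree), and the cross-term query of Lemma~\ref{lem:query_intersect} only requires the two sets $S, T$ to be separated along $P$ and stored in a copy of the Lemma~\ref{lem:querying_h_arbitrary_orient_large_space} structure, which is guaranteed by storing these structures at every node. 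Taking the maximum over all terms, together with the $O(\log^2 n)$ directed Hausdorff query (decomposable, using Theorem~\ref{thm:hausdorff_term_arbitrary} stored at every node, space $O(n^2)$ by the same recurrence), gives overall query time $O((n/k)\log^2 n + \log^4 n)$.

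The main obstacle is bookkeeping rather than a new idea: one must carefully argue that the cross terms $\DB^{P_i \times P_j}$ arising from the subcurve decomposition are correctly and completely accounted for by the data structures actually stored — i.e., that storing the Lemma~\ref{lem:backward_pair_arb} and Lemma~\ref{lem:querying_h_arbitrary_orient_large_space} structures at every node of the range tree (not just the leaves and internal nodes of the trimmed tree $\T$) suffices, and that the space recurrences still resolve to $O(nk^{3+\eps} + n^2)$ when these structures coexist. Since Lemma~\ref{lem:backward_pair_arb} structures only need to live at $\T$'s leaves (size $O(k)$) while the lighter Lemma~\ref{lem:querying_h_arbitrary_orient_large_space} structures live everywhere, the $O(n^2)$ term dominates the universal storage and the tradeoff survives intact.
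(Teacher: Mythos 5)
There is a genuine gap in how you handle the \emph{diagonal} terms $\DB^{P_i\times P_i}$ for the canonical nodes $P_i$ of the full range tree that appear in the decomposition of $P[s,t]$. You store the Lemma~\ref{lem:backward_pair_arb} structure only at $\T$'s leaves, i.e.\ at nodes of size $\Theta(k)$. But a canonical node $P_i$ in the full-tree decomposition of a query range can have size strictly less than $k$ (it lies \emph{below} $\T$), in which case there is no stored backward-pair structure for it and no $O(\log k)$-time way to compute $\DB^{P_i\times P_i}$; you cannot use the nearest $\T$-leaf's structure since that covers a strictly larger subcurve and would overcount. Decomposing such a $P_i\times P_i$ into cross terms alone and recursing would cost $\Omega(|P_i|\log^2 n)$, which is up to $\Omega(k\log^2 n)$ per node and does not fit in the claimed $O((n/k)\log^2 n + \log^4 n)$ bound (consider $k$ close to $n$). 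The ``subtle point'' paragraph does not fix this: a node $P_i$ with $|P_i|<k$ is \emph{not} covered by $O(\log n)$ nodes of $\T$, and a node with $|P_i|\ge k$ is exactly a node of $\T$, whose subtree in $\T$ has $\Theta(|P_i|/k)$ leaves, not $O(\log n)$. Your statement that the diagonal terms are ``answered by the Lemma~\ref{lem:backward_pair_arb} data structure stored at the node corresponding to $P_i$'' contradicts your own storage scheme.

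The paper's fix is precisely what you overlooked: it stores the Lemma~\ref{lem:backward_pair_arb} structure at \emph{all} nodes of the full recursion tree whose subcurve has size at most $k$ (not just $\T$'s leaves), at a cost of $\sum_{i\ge 0} 2^i (k/2^i)^{4+\eps} = O(k^{4+\eps})$ per subtree rooted at a $\T$-leaf, hence $O(nk^{3+\eps})$ overall. With this, small diagonal nodes are answered directly in $O(\log k)$ time, while large diagonal nodes $P_\nu$ are decomposed into $\Theta(|P_\nu|/k)$ nodes of size $\Theta(k)$; since the large $P_\nu$'s are disjoint, the total is $O(n/k)$ such nodes, giving the $O((n/k)\log^2 n)$ term. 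The Lemma~\ref{lem:querying_h_arbitrary_orient_large_space} structures must likewise be stored at every node of the full recursion tree (still $O(n^2)$ by the recurrence) to make the cross-term queries via Lemma~\ref{lem:query_intersect} available at arbitrary canonical nodes, including the small ones below $\T$. Your writeup is internally inconsistent on exactly these two placements, and without them the time bound cannot be achieved.
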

\begin{proof}
	We can also apply the time space
	trade off in case of subcurve queries. We use essentially the
	structure as above: for all nodes in the full recursion tree that
	represent subcurves of length at most $k$ we store both the
	Lemma~\ref{lem:backward_pair_arb} and the
	Lemma~\ref{lem:querying_h_arbitrary_orient_large_space} data
	structures, whereas for the topmost nodes in the tree (with a
	subcurve of size $>k$) we store only the
	Lemma~\ref{lem:querying_h_arbitrary_orient_large_space} data
	structure. The space usage remains $O(nk^{3+\eps}+n^2)$.
	
	We can again find a set $Q$ of $O(\log n)$ nodes $\nu$ whose subcurves
	together represent the query subcurve $P[ps,t]$.  We then have
	
	\begin{equation}
	\label{eq:subcurves}
	\DB^{P[s,t] \times P[s \times t]}(\alpha,\beta) =
	\max\left\{ \max_{\nu \in Q} \DB^{P_\nu \times P_\nu}(\alpha,\beta)
	, \max_{\mu,\nu \in Q, \mu \neq \nu} \DB^{P_\mu \times P_\nu}(\alpha,\beta)
	\right\}.
	\end{equation}
	
	\begin{figure}[tb]
		\centering
		\includegraphics{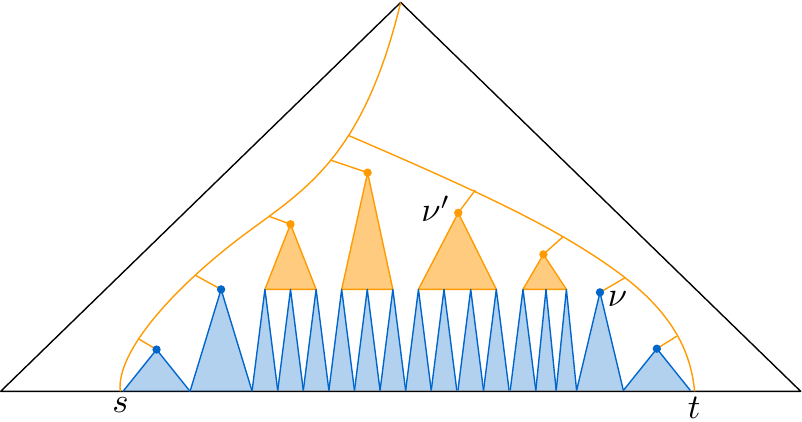}
		\caption{A query against a subcurve $P[s,t]$ selects $O(\log n)$
			nodes. When (the subcurve $P_\nu$ of) such a node $\nu$ is small
			enough (the blue nodes) we can directly query their
			Lemma~\ref{lem:backward_pair_arb} data structures. When the
			subtree is too large (e.g. $\nu'$) we visit its top part (in
			orange) computing the their contribution to
			$\DB^{P[s,t]\times P[s,t]}(\alpha,\beta)$ using
			the Lemma~\ref{lem:querying_h_arbitrary_orient_large_space} data structures until we
			reach the (blue) subtrees of size $O(k)$. The total number of
			nodes visited is $O(n/k)$.
		}
		\label{fig:tradeoff_tree}
	\end{figure}
	
	Observe that for each of distinct nodes $\mu,\nu \in Q$ we can compute
	$\DB^{P_\mu \times P_\nu}(\alpha,\beta)$ in $O(\log^2 n)$ time using
	the Lemma~\ref{lem:querying_h_arbitrary_orient_large_space} data
	structures stored at nodes $\mu$ and $\nu$
	(Lemma~\ref{lem:query_intersect}). Hence, computing the contribution
	of the second term in Equation~\ref{eq:subcurves} takes $O(\log^4 n)$
	time. To compute $\DB^{P_\nu \times P_\nu}(\alpha,\beta)$ for some
	$\nu \in Q$ there are two cases. If $P_\nu$ has length at most $k$ and
	thus we can compute the term directly by querying the
	Lemma~\ref{lem:backward_pair_arb} data structure of node $\nu$ in
	$O(\log k)$ time. If $P_\nu$ has size more than $k$, we essentially
	run the query algorithm from the beginning of this section on the
	subtree rooted at $\nu$ (See Fig.~\ref{fig:tradeoff_tree}): i.e. we
	traverse the tree starting from the root to find a minimal set of
	nodes $Q_\nu$ that store a Lemma~\ref{lem:backward_pair_arb} data
	structure, and whose associated subcurves make up $P_\nu$. We query
	their associated data structures, as well as the
	Lemma~\ref{lem:querying_h_arbitrary_orient_large_space} data
	structures of their subtree ancestors. Observe that the subcurve of
	each node of $Q_\nu$ has size $\Theta(k)$ (otherwise we would have
	picked its parent instead). Therefore, the total size of all sets
	$Q_\nu$ over all visited nodes $\nu$ is $O(n/k)$. The total number of
	nodes whose Lemma~\ref{lem:querying_h_arbitrary_orient_large_space}
	data structure we query is thus also $O(n/k)$, and hence the cost of
	querying these nodes is $O((n/k)\log^2
	n)$. %We conclude: % We conclude that we can
	% compute $\DB^{P[s,t] \times P[s \times t]}(\alpha,\beta)$ in
	% $O((n/k)\log^2 n + \log^4 n)$ time.
\end{proof}

Since computing $\dhd(P[s,t],\overline{ab})$ can be done in
$O(\log^2 n)$ time using only $O(n^2)$ space, we thus established
Theorem~\ref{thm:arbitrary_full_curve_ds}. Once again, it is possible
to make the query time proportional to the complexity of $P[s,t]$
rather than to $n$.

\section{Applications}
\label{sec:Applications}

In this section we discuss how to apply our result to improve the results of some prior publications.

\subparagraph{Curve simplification.} Our tools can be used for local curve simplification under the \frechet distance.
In curve simplification, the input is a polygonal curve $P = (p_1, p_2, \ldots p_n)$ with $n$ vertices and some parameter $\delta$ and the goal is to compute a polygonal curve $S$, whose vertices are vertices of $P$, that is within distance $\delta$ of $P$ for some distance metric and has a minimum number of vertices.
The simplification is a \emph{local simplification} if for every edge $e_{ij}$ in $S$ from $p_i$ to $p_j$, the distance between $e_{ij}$ and $P[i, j]$ is at most $\delta$. 
Curve simplification is well-studied within computational geometry. Recent examples that use the \frechet distance include algorithms for global curve simplification by Kerkhof~\etal~\cite{kerkhof19global_curve_simpl}, a polynomial-time algorithm for computing an optimal local simplification for the \frechet distance by van Kreveld~\etal%, L\"{o}ffler and Wiratma
~\cite{kreveld20hausd_frech}, and an  $O(n^3)$ lower bound on computing a local simplification in higher dimensions  using $L_p$ ($p \neq 2$) norms by Bringmann and Chaudhury~\cite{bringmann21polyl}.

The state-of-the-art  approach to compute a local $\delta$-simplification using \frechet distance and the $L_2$ norm in 2D is the Imai-Iri line simplification algorithm~\cite{imai1998computational} (see also Godau~\cite{godau91}).
This algorithm considers all $O(n^2)$ edges between vertices of $P$ and computes for every edge $e_{ij}$ from $p_i$ to $p_j$ ($i < j$), the \frechet distance between $e_{ij}$ and $P[p_i, p_j]$ in total $O(n^3)$ time. They assign the corresponding \frechet distance as a weight to $e_{ij}$. 
This results in  $O(n^2)$ weighted edges (links).
Computing a minimum
link path from $p_1$ to $p_n$ in this graph takes $O(n^2)$ time, and gives the desired simplification.
Thus, 
a local $\delta$-simplification can be computed in $O(n^3)$ time and $O(n^2)$ space.
% (if such a simplification exists).

We can improve this state-of-the-art algorithm by applying Theorem~\ref{thm:arbitrary_full_curve_ds}.
Specifically, if we choose our parameter $k =n^{1/2}$,
we can construct the corresponding data structure in $O(n^{2.5 + \eps})$ time and space (where $\eps$ is an arbitrarily small positive constant).
For each edge $e_{ij}$, we can compute the \frechet distance between $e_{ij}$ and the subcurve $P[p_i, p_j]$ in $O(\sqrt{n} \log^2 n)$ time, and we conclude:

\begin{theorem}
  \label{thm:local_simplification}
  Let $P$ be a polygonal curve in $\R^2$ with $n$ vertices, and let
   $\delta>0$. We can compute a local $\delta$-simplification
  of $P$ with respect to the \frechet distance using $O(n^{5/2+\eps})$
  time and space.
\end{theorem}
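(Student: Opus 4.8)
The plan is to follow the classical Imai--Iri framework for optimal local curve simplification, replacing its expensive per-edge \frechet computations by queries to the data structure of Theorem~\ref{thm:arbitrary_full_curve_ds}.

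First I would define the \emph{shortcut graph} $G$ on the vertex set $\{p_1,\dots,p_n\}$, containing the directed edge $(p_i,p_j)$ with $i<j$ precisely when $\fd(\overline{p_ip_j},P[p_i,p_j]) \le \delta$. A local $\delta$-simplification of $P$ with the fewest vertices corresponds exactly to a minimum-link (shortest) path from $p_1$ to $p_n$ in $G$: any such path $p_1 = p_{i_0},p_{i_1},\dots,p_{i_m}=p_n$ yields a valid local simplification, and conversely every local simplification gives such a path. Once $G$ is available this path can be found by breadth-first search in $O(|V|+|E|) = O(n^2)$ time, and its vertices give the output curve $S$.

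The core of the argument is determining, for each of the $O(n^2)$ ordered pairs $(p_i,p_j)$ with $i<j$, whether $(p_i,p_j)\in G$, i.e., evaluating $\fd(\overline{p_ip_j},P[p_i,p_j])$. For this I would build the data structure of Theorem~\ref{thm:arbitrary_full_curve_ds} once, with the parameter $k = n^{1/2}$; its construction then takes $O(nk^{3+\eps}+n^2) = O(n^{5/2+\eps})$ time and space. Each candidate edge is tested by a single subcurve query: the arbitrarily oriented segment $\overline{p_ip_j}$ together with the query points $s=p_i$ and $t=p_j$. Since $s$ and $t$ are vertices of $P$, their containing edges are known in $O(1)$ time, and the orientations match, as both the segment and the subcurve are traversed from $p_i$ to $p_j$. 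By the theorem such a query takes $O((n/k)\log^2 n + \log^4 n) = O(\sqrt n\,\log^2 n)$ time. Summing over all $O(n^2)$ pairs gives $O(n^{5/2}\log^2 n) = O(n^{5/2+\eps})$, which together with the $O(n^{5/2+\eps})$ preprocessing and the $O(n^2)$-time BFS yields the claimed $O(n^{5/2+\eps})$ time and space bound.

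There is no genuine obstacle here once Theorem~\ref{thm:arbitrary_full_curve_ds} is in hand; the two points that need a little care are (i) the standard Imai--Iri correctness argument, that a minimum-link path in the shortcut graph is indeed an optimal local $\delta$-simplification, and (ii) checking that $k=\sqrt n$ is the right balance: it equates the preprocessing term $O(nk^{3+\eps})$ with the total query cost $O\!\left(n^2\cdot(n/k)\log^2 n\right)$, both of which then become $O(n^{5/2+\eps})$, whereas any other choice of $k$ is worse.
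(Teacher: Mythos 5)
Your proposal is correct and follows essentially the same route as the paper: build the Theorem~\ref{thm:arbitrary_full_curve_ds} data structure with $k=n^{1/2}$, query it once per candidate shortcut edge in the Imai--Iri graph, and then compute a minimum-link $p_1$-to-$p_n$ path. Your explicit check that $k=\sqrt{n}$ balances the $O(nk^{3+\eps})$ preprocessing against the $O(n^2\cdot(n/k)\log^2 n)$ total query cost is exactly the calculation the paper leaves implicit.
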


\subparagraph{\frechet distance queries under translation.}
There are many geometric pattern matching application where one would like to find a transformation (e.g., translation or rotation) that minimizes the Fr\'echet distance between two input curves. One typical example is handwriting recognition~\cite{SKB07}.
Our results can be used to compute a data structure for Fr\'echet distance queries under translation.
 Given are two polygonal curves $P$ and $Q$ in the plane, the goal is to find an optimal translation of $Q$ such that the Fr\'echet distance between the translated version of $P$ and $Q$ is minimized. This problem can be solved in $O((nm)^3(n+m)^2\log(n+m))$ time~\cite{alt01translationFrechet} where $n$ and $m$ are the number of vertices of $P$ and $Q$, respectively.
 Recently, Gudmundsson et al.~\cite{gudmundssonfrechet2020} (full version~\cite{gudmundsson2021translation}) studied the query version of this problem, where the goal is to preprocess $P$, such that given a query curve $Q$ and two points $s$ and $t$ on $P$, one can find the translation of $Q$ that minimizes the \frechet distance between $P[s, t]$ and $Q$ efficiently.
 They study this query version in a restricted setting, where $Q$ is a
 horizontal segment.  Their data structure uses $O(n^2\log^2 n)$ space
 and allows for $O(\log^{32} n)$ time queries.  
 %Through one additional observation and our data structure, we obtain the following result:
By applying our data structure, we obtain the following result:
%\lena{Shall we refer to the appendix?}

 \begin{theorem}\label{thm:frechet_translated}
 Let $P$ be a polygonal curve in $\R^2$ with $n$ vertices. There is an $O(n \log^2 n)$ size data structure that can be built in $O(n \log^2 n)$ time such that given any two points $s, t \in P$ and a horizontal query segment $\overline{ab}$, one can report the translation of $\overline{ab}$ that minimizes its \frechet distance to $P[s, t]$ in  $O(\log^{12} n)$ time.
 \end{theorem}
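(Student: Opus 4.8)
The plan is to express $\fd(P[s,t],\overline{ab}+v)$, as a function of the translation vector $v\in\R^2$, as a \emph{convex} function, and then to minimize this $2$-dimensional convex function by a two-level parametric search whose ``generic algorithm'' is simply the single-segment query of Theorem~\ref{thm:horizontal_subcurve_ds}. No data structure beyond the one of that theorem is needed: since a translate $\overline{ab}+v$ of a horizontal segment is again horizontal, we can evaluate $F(v):=\fd(P[s,t],\overline{ab}+v)$ for any $v\in\R^2$ in $O(\log^3 n)$ time using $O(n\log^2 n)$ space and $O(n\log^2 n)$ preprocessing. So the whole difficulty is the query.

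\textbf{Step 1: convexity.} First I would show $v\mapsto F(v)$ is convex. By Eq.~\ref{eq:decompose_FD}, $F(v)$ is the maximum of the four terms $\|s-(a+v)\|$, $\|t-(b+v)\|$, $\dhd(P[s,t],\overline{ab}+v)$, and $\DB^{P[s,t]\times P[s,t]}(y_a+v_y)$, so it suffices to prove each term is convex in $v$. The first two are norms of affine functions of $v$. For the Hausdorff term, $\dhd(P[s,t],\overline{ab}+v)=\sup_{p\in P[s,t]}\operatorname{dist}(p-v,\overline{ab})$; each map $v\mapsto\operatorname{dist}(p-v,\overline{ab})$ is the distance to a convex set precomposed with an affine map, hence convex, and a pointwise supremum of convex functions is convex (equivalently, by Lemma~\ref{lem:dHtoSegment} it equals $\max\{h_{\LL{P[s,t]}}(a+v),h_{\RR{P[s,t]}}(b+v)\}$, where each $h_{\LL{p}}$, $h_{\RR{p}}$ is the distance from a point to a halfline). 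For the backward-pair term, fix a pair $(p,q)$: the map $(x,y)\mapsto\max\{h_{\LL{q}}(x,y),h_{\RR{p}}(x,y)\}$ is a maximum of two point-to-convex-set distance functions, hence jointly convex in $(x,y)$; therefore its partial minimum over $x$, namely $\delta'_{pq}(y)$, is convex in $y$; and by Lemma~\ref{lem:rewrite} the term $\DB^{P[s,t]\times P[s,t]}$ is a maximum of finitely many such $\delta'_{pq}$, hence convex. Thus $F$ is a maximum of convex functions, hence convex, and since $F(v)\to\infty$ as $\|v\|\to\infty$ its minimum is attained.

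\textbf{Step 2: nested parametric search.} For fixed $v_y$, the restriction $v_x\mapsto F(v_x,v_y)$ is convex, so I would compute its minimizer $v_x^\ast(v_y)$ and value $\phi(v_y)$ by Megiddo's parametric search with generic algorithm the $O(\log^3 n)$-time evaluation of $F$: run that algorithm at the symbolic value $v_x^\ast(v_y)$; every comparison it makes is the sign test of a bounded-degree polynomial in $v_x$ (the query algorithm only performs binary searches, planar point locations in furthest-segment Voronoi diagrams, and bisector/segment intersections, all of bounded algebraic degree); resolve each comparison by computing its $O(1)$ real roots and, for each root $\rho$, deciding whether $v_x^\ast(v_y)\lessgtr\rho$ via $O(1)$ evaluations of $F$ at and near $\rho$ together with convexity. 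Since the generic algorithm makes $O(\log^3 n)$ comparisons and each resolution costs $O(\log^3 n)$, this computes $\phi(v_y)$ in $O(\log^6 n)$ time while making $O(\log^6 n)$ comparisons in total. The function $\phi$ is the partial minimum of the convex $F$, hence convex, so I would then run a \emph{second} parametric search, in $v_y$, whose generic algorithm is this $\phi$-evaluation procedure; its $O(\log^6 n)$ comparisons are again bounded-degree predicates in $v_y$, each resolved by $O(1)$ re-evaluations of $\phi$ (using convexity of $\phi$ to choose sides). This locates $v_y^\ast$, and one more call to the Step-2 procedure recovers $v_x^\ast(v_y^\ast)$ and the optimal translated Fr\'echet distance. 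The total query time is $O(\log^6 n)\cdot O(\log^6 n)=O(\log^{12} n)$, and the space and preprocessing are those of Theorem~\ref{thm:horizontal_subcurve_ds}, i.e.\ $O(n\log^2 n)$.

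\textbf{Main obstacle.} The conceptual heart is the convexity claim for the backward-pair term---recognizing $\delta'_{pq}$ as the partial minimization of a function that is jointly convex because it is built from point-to-convex-set distances---and the technical care required to run the two nested parametric searches \emph{exactly}: one must check that every comparison performed inside the query algorithm of Theorem~\ref{thm:horizontal_subcurve_ds} is a low-degree predicate in the translation parameters (so its roots are exactly computable), and one must handle the standard parametric-search degeneracies, such as a minimum attained on a whole interval, several active terms coinciding at the optimum, or a test polynomial vanishing identically on the currently localized interval. Once these are settled, the running-time bookkeeping is routine; if one wanted to shave logarithmic factors one could replace the parametric searches by prune-and-search or use Cole's speedup, but the straightforward nesting already yields the stated $O(\log^{12} n)$ bound.
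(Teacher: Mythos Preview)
Your proposal is correct and takes essentially the same route as the paper: two nested levels of parametric search on top of the $O(\log^3 n)$ subcurve query of Theorem~\ref{thm:horizontal_subcurve_ds}, yielding $O((\log^3 n)^4)=O(\log^{12} n)$. The paper obtains this by invoking the four-level parametric-search framework of Gudmundsson~\etal and observing that Theorem~\ref{thm:horizontal_subcurve_ds} renders the first two levels unnecessary; you instead derive the two-level search from scratch, supplying an explicit convexity argument (which the paper leaves implicit, although convexity of $\DB$ is already noted by de~Berg~\etal) and swapping the order of the horizontal and vertical parameters, which is immaterial.
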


 \begin{proof}
The approach by Gudmundsson et al.~\cite{gudmundssonfrechet2020} essentially uses four levels of parametric search~\cite{megiddo1983parametric} to turn a \frechet distance data structure into a data structure that can find the translation of $\overline{ab}$ that minimizes the distance to $P[s, t]$.

Specifically, suppose that you have access to a data structure that for a horizontal query segment $\overline{ab}$ can decide if the \frechet distance between
$P[s', t']$ and $\overline{ab}$ is at most $R$ in $Q(n)$ time (where $s'$ and $t'$ are vertices of $P$).

The authors then use the data structure for four levels of parametric search.
The first two levels are used to, given two points $s, t$ on $P$ that are not necessarily vertices, compute the \frechet distance between $\overline{ab}$ and the subcurve $P[s, t]$ (as opposed to the \frechet distance between $\overline{ab}$ and $P$).
The third level decides (given a fixed $x$-coordinate for the point $a$ of $\overline{ab}$) the vertical translation of $\overline{ab}$ that minimizes the \frechet distance to $P[s, t]$.
The fourth level decides the arbitrary translation of $\overline{ab}$ that minimizes the \frechet distance to $P[s, t]$
, by deciding, for a given $x$-coordinate, whether the endpoint $a$ of $\overline{ab}$ lies left or right of this $x$-coordinate.

Each parametric search squares the running time of the decision algorithm that it has access to, and their final solution runs in $O( Q(n)^{16} )$ time.
By replacing the
data structure of de Berg~\etal~\cite{de2017data} by
Theorem~\ref{thm:horizontal_subcurve_ds} we note that we can skip the first two levels of
parametric search (since our data structure already supports arbitrary
subcurves $P[s,t]$),
getting the total query time down to $O( Q(n)^{4} )$. 
With our data structure, we can decide if the \frechet distance between
$P[s, t]$ and $\overline{ab}$ is at most $R$ in $O(\log^3 n)$ time. Hence our total running time is $O( (\log^3 n) ^ 4) = O( \log^{12} n )$.
\end{proof}

Since the Theorem~\ref{thm:horizontal_subcurve_ds} data structure is
essentially used as a black-box, replacing it with
Theorem~\ref{thm:arbitrary_full_curve_ds} then yields a data structure
supporting arbitrarily oriented query segments.

 \begin{theorem}\label{thm:translation_general}
   Let $P$ be a polygonal curve in $\R^2$ with $n$ vertices, let
   $k \in [1..n]$, and let $\eps >0$ be an arbitrarily small
   constant. There is an $O(nk^{3+\eps}+n^2)$ size data structure that
   can be built in $O(nk^{3+\eps}+n^2)$ time such that given a
   query segment $\overline{ab}$ and two points $s$ and $t$
   on $P$, it can report the translation of $\overline{ab}$ that
   minimizes its \frechet distance to $P[s, t]$ in
   $O((n/k)^4\log^{8}n+\log^{16} n)$ time.
 \end{theorem}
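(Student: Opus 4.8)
The plan is to reuse, essentially verbatim, the parametric-search framework of Gudmundsson~\etal~\cite{gudmundssonfrechet2020,gudmundsson2021translation} that underlies Theorem~\ref{thm:frechet_translated}, but to feed it the arbitrary-orientation subcurve oracle of Theorem~\ref{thm:arbitrary_full_curve_ds} in place of the horizontal oracle of de Berg~\etal~\cite{de2017data}. Recall that their framework stacks four nested levels of parametric search~\cite{megiddo1983parametric} on top of a routine that, for \emph{vertex} endpoints $s',t'$ of $P$ and a fixed query segment, decides whether $\fd(P[s',t'],\overline{ab})\le R$: the first two levels lift this routine to arbitrary (non-vertex) endpoints $s,t$, the third fixes one translation coordinate of $\overline{ab}$ and optimizes the other, and the fourth optimizes the remaining coordinate. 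Because the data structure of Theorem~\ref{thm:arbitrary_full_curve_ds} already reports $\fd(P[s,t],\overline{ab})$ for arbitrary points $s,t\in P$ and an arbitrarily oriented $\overline{ab}$ in $Q(n)=O((n/k)\log^2 n+\log^4 n)$ time, the first two levels are unnecessary, so only the last two remain. A translation of an arbitrarily oriented segment still has two degrees of freedom, exactly as in the horizontal case, so two levels suffice; the two coordinate directions used are immaterial.

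First I would build the structure of Theorem~\ref{thm:arbitrary_full_curve_ds} on $P$, inheriting its $O(nk^{3+\eps}+n^2)$ space and preprocessing bounds --- no auxiliary structures are needed, since parametric search only wraps the query algorithm. To answer a translation query on $(\overline{ab},s,t)$ I would run the third-level parametric search: treat one translation coordinate of $\overline{ab}$ as an unknown, execute the $Q(n)$-time query algorithm of Theorem~\ref{thm:arbitrary_full_curve_ds} symbolically on it, and resolve each of its $O(Q(n))$ comparisons --- each a bounded-degree sign test, since, as already noted in the proof of Lemma~\ref{lem:query_intersect}, every comparison made by that algorithm is a left/right-of-a-line test or a comparison of two Euclidean distances --- by re-running the query algorithm at the $O(1)$ roots of the corresponding low-degree polynomial. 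This costs $O(Q(n)^2)$. The fourth level wraps one more parametric search around this $O(Q(n)^2)$-time routine, again resolving $O(Q(n)^2)$ bounded-degree comparisons at $O(Q(n)^2)$ cost each, for a total of $O(Q(n)^4)$ time. This is the same ``each level squares the cost'' phenomenon as in Theorem~\ref{thm:frechet_translated}, applied only twice here.

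Expanding, a query costs
\[
  O\!\left(\bigl((n/k)\log^2 n+\log^4 n\bigr)^{4}\right)
  = O\!\left((n/k)^4\log^{8} n+\log^{16} n\right),
\]
using $(a+b)^4=O(a^4+b^4)$ for $a,b\ge 0$, which is the claimed bound.

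The step that will need the most care is justifying the black-box substitution, i.e.\ checking that running the arbitrary-orientation subcurve query algorithm of Theorem~\ref{thm:arbitrary_full_curve_ds} on a symbolic translation vector keeps \emph{all} of its comparisons sign tests of constant-degree polynomials in the (at most two) translation unknowns, so that Megiddo's parametric search applies at both levels and, in particular, survives the nesting. This holds because that query algorithm is assembled from planar point-location/vertical-decomposition queries and one call to the parametric-search routine of Lemma~\ref{lem:query_intersect}, all of whose primitive tests are bounded-degree algebraic in the input coordinates; translating $\overline{ab}$ leaves every such test a bounded-degree polynomial in the translation parameters. The remaining ingredients --- the decomposability that lets us discard the first two parametric-search levels, and the arithmetic for the space and time bounds --- are immediate from Theorem~\ref{thm:arbitrary_full_curve_ds}.
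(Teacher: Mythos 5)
Your proposal is correct and takes essentially the same route as the paper: the paper's own ``proof'' of this theorem is a one-sentence remark that since Theorem~\ref{thm:horizontal_subcurve_ds} is used as a black box inside Theorem~\ref{thm:frechet_translated}, it may be swapped for Theorem~\ref{thm:arbitrary_full_curve_ds}, squaring the $O((n/k)\log^2 n+\log^4 n)$ query time twice. You fill in precisely the details the paper leaves implicit --- discarding the first two parametric-search levels, keeping the remaining two translation levels (one per degree of freedom), the $Q(n)^4$ arithmetic, and, usefully, the observation that the substitution needs the generic algorithm's comparisons to stay bounded-degree algebraic even though the Theorem~\ref{thm:arbitrary_full_curve_ds} query algorithm itself contains an inner parametric search (Lemma~\ref{lem:query_intersect}), a nesting concern the paper does not even mention.
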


 \subparagraph{\frechet distance queries under translation and
   scaling.}  We answer an open question by Gudmundsson et al. by
 showing how to find a horizontal segment that minimizes its \frechet
 distance to a given polygonal curve $P$. We start with the case
that the vertical position of the segment is given and fixed. That is,
given a height $y'$, we want to determine a horizontal segment
$\overline{ab}$ with $y_{a}=y'$ such that the Fr\'echet distance
between $P$ and $\overline{ab}$ is minimized. Observe that
$x_a\leq x_{p_1}$ and $x_b\geq x_{p_n}$.  Recall that the \frechet
distance $\fd(P, \overline{ab})$ is the maximum of the four terms
$\| p_1-a \|$, $\| p_n-b \|$, $\dhd(P, \overline{ab})$, and
$\DB(y_a)$.  Note that $\DB(y_a)$ is independent of the length of
$\overline{ab}$.

We divide the set of vertices of $P$ into three subsets $P^{<}$,
$P^{>}$, and $P'$ where $P^{<}$ contains all vertices that have an
$x$-coordinate smaller than $x_{p_1}$, $P^{>}$ contains all vertices
that have an $x$-coordinate greater than $x_{p_n}$, and $P'$ contains
all vertices that have an $x$-coordinate in  $[x_{p_1},x_{p_n}]$. Note
that for all $p_i\in P^{<}$, the pair $(p_1,p_i)$ is a backward pair and for
all $p_j\in P^{>}$, the pair $(p_j,p_n)$ is a backward pair. 

\begin{lemma}
  \label{lem:optimal_length}
  Let $\overline{ab}$ be a horizontal segment with height $y'$ that minimizes the \frechet distance to a curve $P$. Then,
  $\fd(P, \overline{ab}) =\max \left\{ \dhd(P, {\ell}), \DB(y')
  \right\}$ where $\ell$ is a horizontal line at height $y'$.
\end{lemma}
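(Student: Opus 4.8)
The plan is to prove $\fd(P,\overline{ab})\ge R^*$ and $\fd(P,\overline{ab})\le R^*$ separately, where $R^*:=\max\{\dhd(P,\ell),\DB(y')\}$, $\ell$ is the horizontal line at height $y'=y_a=y_b$, and $\overline{ab}$ is (as throughout) traversed with $a$ left of $b$. The lower bound is immediate and in fact holds for \emph{every} horizontal segment at height $y'$, not just the optimal one: by Eq.~\ref{eq:decompose_FD} we have $\fd(P,\overline{ab})\ge\DB(y_a)=\DB(y')$, and since $\overline{ab}\subseteq\ell$, enlarging the second argument of the directed Hausdorff distance can only decrease it, so $\fd(P,\overline{ab})\ge\dhd(P,\overline{ab})\ge\dhd(P,\ell)$. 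Hence $\fd(P,\overline{ab})\ge R^*$.

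For the converse it suffices to exhibit a \emph{single} horizontal segment at height $y'$ whose \frechet distance to $P$ is at most $R^*$; the optimality of $\overline{ab}$ then forces $\fd(P,\overline{ab})\le R^*$ (and, combined with the lower bound, equality). I would take $\overline{a^*b^*}$ with $a^*=(x_{p_1}-w_1,\,y')$ and $b^*=(x_{p_n}+w_n,\,y')$, where $w_1=\sqrt{(R^*)^2-(y_{p_1}-y')^2}$ and $w_n=\sqrt{(R^*)^2-(y_{p_n}-y')^2}$; these are well defined because $\dhd(P,\ell)=\max_i|y_{p_i}-y'|$ (the $y$-extent of $P$ is attained at vertices) is at most $R^*$. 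By construction $\|p_1-a^*\|=\|p_n-b^*\|=R^*$, and $\DB(y')\le R^*$ by definition of $R^*$, so by Eq.~\ref{eq:decompose_FD} it only remains to check that $a^*$ lies weakly left of $b^*$ and that $\dhd(P,\overline{a^*b^*})\le R^*$.

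For the Hausdorff term I would use that $\dhd(P,\overline{a^*b^*})$ is realized at a vertex $p_i$ of $P$ and split into three cases by the position of $x_{p_i}$. If $x_{a^*}\le x_{p_i}\le x_{b^*}$, the distance from $p_i$ to $\overline{a^*b^*}$ is $|y_{p_i}-y'|\le\dhd(P,\ell)\le R^*$. If $x_{p_i}<x_{a^*}$, then $x_{p_i}<x_{p_1}$ (as $w_1\ge0$), so $(p_1,p_i)\in\B(P)$ and $\delta_{p_1p_i}(y')\le\DB(y')\le R^*$; letting $x^*\in[x_{p_i},x_{p_1}]$ realize $\delta_{p_1p_i}(y')$ (the minimizer lies in this interval; see the proof of Lemma~\ref{lem:backward}), the bound $\|(x^*,y')-p_1\|\le R^*$ forces $x^*\ge x_{p_1}-w_1=x_{a^*}$, so $x_{p_i}\le x_{a^*}\le x^*$, and since $x\mapsto\|(x,y')-p_i\|$ is nondecreasing for $x\ge x_{p_i}$ we get $\|p_i-a^*\|\le\|(x^*,y')-p_i\|\le R^*$. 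The case $x_{p_i}>x_{b^*}$ is symmetric, using $(p_i,p_n)\in\B(P)$ and its realizer $x^*\in[x_{p_n},x_{p_i}]$. This gives $\dhd(P,\overline{a^*b^*})\le R^*$. Finally, $x_{a^*}\le x_{b^*}$ is obvious if $x_{p_1}\le x_{p_n}$, and otherwise follows by applying the same backward-pair argument to $(p_1,p_n)\in\B(P)$: its realizer $x^*\in[x_{p_n},x_{p_1}]$ satisfies $x_{a^*}\le x^*\le x_{b^*}$. Substituting everything into Eq.~\ref{eq:decompose_FD} yields $\fd(P,\overline{a^*b^*})=R^*$, completing the argument.

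The main obstacle is the Hausdorff bound in the previous paragraph: the point $a^*$ is placed \emph{exactly} as far left as the constraint $\|p_1-a^*\|\le R^*$ permits, and the nontrivial point is that this one placement is simultaneously far enough left to absorb every vertex lying to the left of $p_1$ in $x$-order---which is precisely the content of the backward-pair inequality $\delta_{p_1p_i}(y')\le\DB(y')$ (and symmetrically on the right). Everything else---the two endpoint terms, the $\DB$ term, and the monotonicity of $\dhd$---is routine; notably, the argument is self-contained and does not rely on the preliminary observation that an optimal segment has $x_a\le x_{p_1}$ and $x_b\ge x_{p_n}$.
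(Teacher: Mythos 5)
Your proof follows the same route as the paper's: it picks the same witness segment (endpoints at the leftmost/rightmost intersections of $\ell$ with the radius-$R^*$ circles about $p_1$ and $p_n$) and reduces the Hausdorff bound for vertices outside $[x_{a^*},x_{b^*}]$ to the backward-pair inequalities $\delta_{p_1p_i}(y')\le\DB(y')$ and $\delta_{p_ip_n}(y')\le\DB(y')$, exactly as the paper does for the sets $P^{<}$ and $P^{>}$. Your version argues directly rather than by contradiction, makes the lower bound $\fd\ge R^*$ explicit, and verifies $x_{a^*}\le x_{b^*}$ via the backward pair $(p_1,p_n)$---small refinements over the paper's write-up rather than a different approach.
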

\begin{proof}
Let $z=\max \left\{  \dhd(P, {\ell}),  \DB(y') \right\}$. Let $a'$ be the leftmost intersection point of $\ell$ with a circle with radius $z$ and center $p_1$  and let $b'$ be the rightmost intersection point of $\ell$ with a circle with radius $z$ and center $p_n$. 
We show that $\fd(P, \overline{a'b'})=z$ which proves the lemma.
Recall that $\fd(P, \overline{a'b'})=\max \left\{ \| p_1-a' \|,\| p_n-b' \|,\dhd(P, \overline{a'b'}),\DB(y')\right\}$.
Clearly, $\| p_1-a' \|=\| p_n-b' \|=z$ and $\DB(y')\leq z$. We need to show that $\dhd(P, \overline{a'b'})\leq z$.
As $x_a'\leq x_{p_1}$ and $x_b'\geq x_{p_n}$, $\dhd(P', \overline{a'b'})=\dhd(P', \ell)$ (for points in $P'$, the Hausdorff distance is the vertical distance to $\ell$). 

We show that $\dhd(P^<, \overline{a'b'}) \leq z$ by contradiction. Suppose instead that $\dhd(P^<, \overline{a'b'}) > z$. 
Because $\dhd(P, {\ell})\leq z$, the vertical distance from $P^<$ to $\ell$ is at most $z$. Because all points of $P^<$ lie to the left of $p_1$ and hence $b'$, points $p_j\in P^<$ with distance greater than $z$ to $\overline{a'b'}$ must lie left of $a'$, and their distance to $\overline{a'b'}$ is $\|p_j-a'\|$.
Because $\dhd(P^<, \overline{a'b'}) > z$, there exists some point $p_j\in P^<$ that lies left of $a'$ such that $\|p_j-a'\|>z$.
Because $p_j\in P^<$, $(p_1,p_j)$ is a backward pair.
However, $\|p_1-a'\|=z$ and $x_{p_j}\leq x_{a'}\leq x_{p_1}$,
so the
backward pair distance of $(p_1,p_j)$ is $ \|p_1-q\|$ for some
$q=(x^*,y')$ with $x^*<x_{a'}$ (recall that the backward pair distance is the minimum possible distance between a point at height $y'$ and both $p_1$ and $p_j$, and that the vertical distance from $p_1$ and $p_j$ to $\ell$ is at most $z$). 
Thus, the backward pair distance is strictly greater than $z$, which contradicts $\DB(y')\leq z$.
% \frank{why?, ah, I see. That is due to
%  the def of the backward pair dist, and HD dist $ > z$ implies that $\|p_j-a'\|>z$.}
%\lena{I tried to add more details, but they totally confused me...Now I only added the sentence in the parenthesis (but at least I do understand the proof again;-) ) }

We can prove that $\dhd(P^>, \overline{a'b'}) \leq z$ using a symmetric argument.
\end{proof}

Note that given the value $z=\max\{ \dhd(P, {\ell}), \DB(y')\}$, we
can find a segment $\overline{a'b'}$ on $\ell$ with Fr\'echet distance
$z$ to $P$ in constant time.

Next, we want to determine the height $y^*$ such that the \frechet
distance between the curve $P$ and a horizontal segment
$\overline{ab}$ with $y_a=y^*$ is minimized. We show that this height
can be computed in $O(n\log ^2n)$ time, moreover, we present a data
structure with $O(n\log ^2n)$ size (that can be built in the same
time) that reports a horizontal segment that minimizes the \frechet
distance to a subcurve of $P$ in polylogarithmic time.

First, we describe a decision algorithm that decides whether the
height of a current candidate segment is larger or smaller than the
height $y^*$ of an optimal segment.  Let the height of the current candidate
segment be $y'$ and let $\ell$ be a line with height $y'$. Recall that
by Lemma~\ref{lem:optimal_length} the \frechet distance between $P$
and this segment is either determined by $\dhd(P, {\ell})$ or
$\DB(y')$.  We have the following cases:
\begin{itemize}
    \item the \frechet distance is determined by $\dhd(P, {\ell})$:
      If the point that has the largest vertical distance to $\ell$
      lies below $\ell$, then the optimal height has to be
      smaller than $y'$. If this point lies above $\ell$, the optimal height has to be larger than $y'$. If this point lies
      on $\ell$, we stop and the  \frechet distance is 0. 
    \item the \frechet distance is determined by $\DB(y')$: If the
      midpoint of a segment connecting the two points      
      of the backward pair determining
      $\DB(y')$ lies below $\ell$, then the optimal height has
      to be smaller than $y'$. If this midpoint lies above $\ell$, the
      optimal height has to be larger than $y'$. If this
      midpoint lies on $\ell$, we found the optimal
      height.
\end{itemize}

It can be the case that more than one term determines the current \frechet distance. Then we decide for each of these terms whether the next candidate height has to be larger or smaller than the current one. If the decisions are the same for all  these terms, we move the height in the corresponding 
direction. Otherwise, we stop, as moving the height in any direction will increase the \frechet distance.
(Note that this decision algorithm is quite similar to the one of Gudmundsson et al.~\cite{gudmundssonFrechetJournal}.)

The \frechet distance between a polygonal curve and a horizontal segment at
optimal height as a function of $y'$ is convex, and has complexity
$O(n\log n)$. By Lemma~\ref{lem:optimal_length} it is the maximum of
the backward pair distance, and the Hausdorff distance from $P$ to the
line at height $y'$. De Berg et al.~\cite{de2017data} already showed
that the backward pair distance is convex, and by
Lemma~\ref{lem:backward_pair_ds} it has complexity $O(n\log n)$. The
Hausdorff distance is determined only by the top and bottommost point
in $P$, and is also easily seen to be convex and of constant
complexity. Therefore, the maximum of these two functions is
also convex and has $O(n\log n)$ breakpoints; at most a constant
number per piece of $\DB(y')$.

It then follows that given $P$ we can compute a horizontal segment
that minimizes the \frechet distance to $P$ in $O(n\log^2 n)$ time: we
use Lemma~\ref{lem:backward_pair_ds} to compute $\DB$, and construct
the function representing $y \mapsto \max\{\DB(y),\dhd(P,\ell)\}$ in
$O(n\log n)$ time, and use the above binary search procedure to
find a height $y^*$ where this function is minimized, and an optimal
segment at height $y^*$ that realizes this \frechet distance.
(Note that we can achieve the same running time without performing a binary search. As there are  
only $O(n\log n)$ break points, we could compute for each of them explicitly the corresponding height and \frechet distance, go through all of them and report the minimum one.)

We can also support queries where we find a horizontal segment
minimizing the \frechet distance to a query subcurve $P[s,t]$ of
$P$. One option is to simply use parametric search with the above
algorithm as decision algorithm. We show that we can do slightly
better by explicitly binary searching over the critical values.

\begin{theorem}\label{thm:frechet_stretched_translated}
    Let $P$ be a polygonal curve in $\R^2$ with $n$ vertices. There is
    an $O(n \log^2 n)$ size data structure that can be built in $O(n
    \log^2 n)$ time such that given any two points $s, t \in P$, one
    can report a horizontal segment %of optimal length 
    that minimizes
    its \frechet distance to $P[s, t]$ in  $O(\log^{4} n)$
    time.
\end{theorem}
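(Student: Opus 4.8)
The plan is to mirror the full‑curve construction, applied to the subcurve $P[s,t]$. By Lemma~\ref{lem:optimal_length} (with $P[s,t]$ in place of $P$), a horizontal segment of optimal length at height $y$ realizes Fréchet distance $g_{s,t}(y):=\max\{\dhd(P[s,t],\ell_y),\ \DB^{P[s,t]\times P[s,t]}(y)\}$, where $\ell_y$ is the horizontal line at height $y$; and once the minimizer $y^*$ of $g_{s,t}$ is known, the actual optimal segment is obtained in $O(1)$ extra time (intersecting $\ell_{y^*}$ with the two circles around the endpoints of $P[s,t]$, as in the remark after Lemma~\ref{lem:optimal_length}). As in the full‑curve case, $g_{s,t}$ is convex: the Hausdorff term equals $\max\{y-y^{\min}_{s,t},\,y^{\max}_{s,t}-y\}$ (a ``V''), while $\DB^{P[s,t]\times P[s,t]}$ is convex by De~Berg~\etal's argument together with Lemma~\ref{lem:bisectorSubDS}. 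So it suffices to supply an efficient decision oracle for ``is $y^*$ below, at, or above a trial height $y'$?'' and then binary search for $y^*$.

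\textbf{Data structure and decision oracle.} We build the two subcurve structures underlying Theorem~\ref{thm:horizontal_subcurve_ds}: the Hausdorff‑term structure of Lemma~\ref{lem:hausdorff_term_subtrajectories} and, crucially, the backward‑pair structure of Lemma~\ref{lem:bisectorSubDS}, whose size and preprocessing time are already $O(n\log^2 n)$, matching the claimed bounds. We augment the latter so that (i) each canonical node also stores the topmost and bottommost $y$‑coordinates of its subcurve, giving $y^{\min}_{s,t},y^{\max}_{s,t}$ (and hence $\dhd(P[s,t],\ell_{y'})$ together with the vertex realizing it) in $O(\log n)$ time, and (ii) the query of Lemma~\ref{lem:bisectorSubDS} additionally returns a backward pair $(p,q)$ realizing $\DB^{P[s,t]\times P[s,t]}(y')$ — this is free, since every intersection search in that proof already identifies the active function $\delta_{pq}$. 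Given a trial $y'$, the oracle computes both terms of $g_{s,t}(y')$ in $O(\log^3 n)$ time and then applies the case analysis of the paragraph preceding this theorem: comparing $y'$ with the $y$‑coordinate of the realizing point (Hausdorff case) or with that of the midpoint of the realizing pair (backward‑pair case) tells us whether $y^*<y'$, $y^*>y'$, or $y'=y^*$; if both terms are active and the two verdicts disagree, $y'=y^*$. (This is essentially the decision algorithm of Gudmundsson~\etal.) Each oracle call costs $O(\log^3 n)$.

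\textbf{Binary search over critical heights.} Since $g_{s,t}$ is convex, $y^*$ lies among its critical heights: the breakpoints of the $O(\log n)$ explicitly stored convex envelopes $\DB^{P_i\times P_i}$ (held in balanced search trees at the canonical nodes making up $P[s,t]$), the $O(1)$ breakpoints of the Hausdorff ``V'', the per‑component minimizers of the $O(\log^2 n)$ off‑diagonal functions $\DB^{P_i\times P_j}$, and a bounded number of crossings once the search interval is short. Each off‑diagonal minimizer is found by an internal binary search that, at a trial height, evaluates $\DB^{P_i\times P_j}$ in $O(\log n)$ time (Lemma~\ref{lem:persistent_sweep_ds}), reads off the active $\delta_{pq}$, and uses the sign of its slope to decide which way to step — $O(\log^2 n)$ per pair, $O(\log^4 n)$ total. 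We then run a weighted‑median search over all these candidates driven by the decision oracle: in each of $O(\log n)$ rounds one oracle call discards a constant fraction of the surviving candidates, after which $g_{s,t}$ on the remaining interval is the upper envelope of only $O(1)$ constant‑complexity pieces, so $y^*$ is the root of a constant‑degree equation and is computed in $O(1)$ time. The total is $O(\log n)$ oracle calls plus $O(\log^4 n)$ for the minimizers, i.e.\ $O(\log^4 n)$.

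\textbf{Main obstacle.} Everything in the first two paragraphs transfers routinely from machinery already established; the delicate point is the last step. The breakpoints of $\DB^{P[s,t]\times P[s,t]}$ are not stored explicitly in full — only the $O(\log n)$ diagonal envelopes are — so one must argue carefully that restricting the candidate set to those diagonal breakpoints, the $O(\log^2 n)$ per‑component minimizers, and a constant number of crossings genuinely traps $y^*$ in a constant‑complexity sub‑instance, and that each round's candidate reduction and each component's minimizer can be obtained within the $O(\log^4 n)$ budget (in particular, keeping the parametric‑search blow‑up to a single $O(\log n)$ factor rather than squaring the $O(\log^3 n)$ oracle, which is what ``explicitly binary searching over the critical values'' buys us).
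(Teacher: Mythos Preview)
Your overall plan and the first two paragraphs are essentially the paper's: reduce via Lemma~\ref{lem:optimal_length} to minimizing the convex function $g_{s,t}(y)=\max\{\dhd(P[s,t],\ell_y),\DB^{P[s,t]\times P[s,t]}(y)\}$, build the Lemma~\ref{lem:bisectorSubDS} structure plus min/max $y$-coordinates, and use the same $O(\log^3 n)$ decision oracle. That part is fine.

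The gap is exactly where you flag it, and it is real. Your candidate set for the binary search consists of the diagonal breakpoints, the Hausdorff kink, and the \emph{minimizers} of the $O(\log^2 n)$ off-diagonal functions $\DB^{P_i\times P_j}$. But each off-diagonal function has linear complexity (Lemma~\ref{lem:horizontal_cross_pairs_ds}), and its breakpoints are just as relevant as the diagonal ones: after your search narrows the interval to contain no diagonal breakpoints and no off-diagonal minimizers, an off-diagonal function can still have arbitrarily many pieces inside that interval. So the claim that ``$g_{s,t}$ on the remaining interval is the upper envelope of only $O(1)$ constant-complexity pieces'' is not justified, and the minimum $y^*$ may sit at (or be determined by) a crossing involving an off-diagonal piece you never isolated. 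Knowing only the minimizer of a convex piecewise function does not let you read off which piece is active at an unknown~$y^*$.

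The paper closes this gap by treating the off-diagonal functions on equal footing with the diagonal ones. It runs a single simultaneous binary search over all $O(\log^2 n)$ functions $\DB^{P_\nu\times P_\mu}$: in each round it picks, for every function that still has interior breakpoints in its current interval, a \emph{halving point} (the median breakpoint), then chooses one of these halving points so that evaluating the decision oracle there discards a constant fraction of the total remaining complexity. The key observation you are missing is that halving points are available for the off-diagonal functions too, because the Lemma~\ref{lem:persistent_sweep_ds} structures are (partially persistent) balanced search trees over exactly those breakpoints. After $O(\log N)=O(\log n)$ rounds every function is down to a single piece in its interval; only then does one build the explicit upper envelope (now of $O(\log^2 n)$ constant-complexity pieces) and locate~$y^*$. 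To fix your proof, replace ``off-diagonal minimizers'' by ``off-diagonal breakpoints accessed through the persistent trees'' and run the same weighted-halving argument uniformly across all $O(\log^2 n)$ functions.
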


\begin{proof}
  In this case we store the data structure from
  Lemma~\ref{lem:bisectorSubDS}, and a data structure that can report
  the minimum and maximum $y$-coordinate of a query curve $P[s,t]$.

  To answer a query we now wish to binary search (using the above
  decision procedure) on the breakpoints of the function
  $y \mapsto \max\{\DB^{P[s,t] \times
    P[s,t]}(y),\dhd(P[s,t],\ell)\}$. We will simply binary search on the
  breakpoints $y_1,\dots,y_m$ of $\DB^{P[s,t] \times P[s,t]}$; for each
  candidate breakpoint $y_i$ we find its successor $y_{i+1}$, and
  explicitly compute the $O(1)$ additional breakpoints in
  $[y_i,y_{i+1}]$ contributed by the Hausdorff term (by
  intersecting $\DB^{P[s,t] \times P[s,t]}$ with the function describing
  the Hausdorff distance).

  As in Lemma~\ref{lem:bisectorSubDS}, the function
  $\DB^{P[s,t] \times P[s,t]}$ is not stored explicitly, but represented
  by $O(\log^2 n)$ functions $\DB^{P_\nu \times P_\mu}$. Therefore, we
  cannot access the breakpoints $y_1,\dots,y_m$ directly; some of them are
  not even represented explicitly. Instead, we use a two phase approach
  in which we maintain an interval $I^{\nu,\mu}$ for each function
  $\DB^{P_\nu \times P_\mu}$ that is known to contain the value $y^*$
  that we are searching for. We keep shrinking these intervals until
  each function has no more breakpoints in its interval. In the second
  phase we can then explicitly compute $\DB^{P[s,t] \times P[s,t]}$
  inside $\bigcap I^{\mu,\nu}$ by constructing the upper envelope of the
  functions $\DB^{P_\nu \times P_\mu}$ (restricted to their intervals). 
  Since the total complexity of all functions (restricted to their intervals) 
  is only $O(\log^2 n)$, this takes $O(\log^2 n\log\log n)$ time. 
  We can now find $y^*$ (which is guaranteed to lie in $\bigcap I^{\mu,\nu}$) 
  by explicitly traversing (the breakpoints of) $\DB^{P[s,t] \times P[s,t]}$.

  In the first phase, we will shrink the intervals by simultaneously
  binary searching over all $O(\log^2 n)$ functions. Let $n^{\nu,\mu}$
  be the complexity of $\DB^{P_\nu \times P_\mu}$ restricted to the
  interval $I^{\nu,\mu}$, and let $N = \sum n^{\nu,\mu}$ be the total
  remaining complexity. Assume that for every function
  $\DB^{P_\nu \times P_\mu}$ we can find a \emph{halving point}
  $y^{\mu,\nu} \in I^{\mu,\nu}$ for which the complexity of
  $\DB^{P_\nu \times P_\mu}$ in $I^{\mu,\nu}$ before and after
  $y^{\mu,\nu}$ is roughly halved. That is, let $c \in (1,2]$ be a
  constant such that the complexity of $\DB^{P_\nu \times P_\mu}$ in
  both intervals is at most $n^{\mu,\nu}/c \leq N/c$.

  We can then binary search as follows: for each function (that still
  has breakpoints inside its interval), compute its halving point
  $y^{\mu,\nu}$, and its complexity $n^{\mu,\nu}$. Consider the
  halving points in increasing order, and for $y^{\mu,\nu}$ compute
  the sum $M^{\mu,\nu}$ of the
  complexities before $y^{\mu,\nu}$, i.e.,
  \[
    M^{\mu,\nu} = \sum_{y^{\mu',\nu'} \leq y^{\mu,\nu}} n^{\mu',\nu'}.
  \]

  Now there are two cases, either there is a halving value
  $y^{\mu,\nu}$ for which $M^{\mu,\nu}$ lies in the range
  $[N(\frac{1}{2}-\frac{1}{2c}), N(\frac{1}{2}+\frac{1}{2c})]$, or
  there is a single function $\DB^{P_\nu \times P_\mu}$ that has at
  least complexity $N(1/c)$. In either case, we pick $y^{\mu,\nu}$ to
  reduce the size of the intervals.  That is, we evaluate
  $\DB^{P[s,t] \times P[s,t]}$ at $y^{\mu,\nu}$ and use the decision
  procedure to test if $y^*$ occurs before or after $y^{\mu,\nu}$
  (this means we may have to construct the piece of
  $\DB^{P[s,t] \times P[s,t]}$ at $y^{\mu,\nu}$ and compute its
  intersection with the function representing the Hausdorff distance
  term). We now argue that in both cases we can discard at least a
  constant fraction of the total complexity.

  In the former case, we pick a halving point for which $M^{\mu,\nu}$
  lies in the range
  $[N(\frac{1}{2}-\frac{1}{2c}), N(\frac{1}{2}+\frac{1}{2c})]$. If we
  discard the first half of the intervals, i.e., up to $y^{\mu,\nu}$,
  we can discard the first halves for all functions for which
  $y^{\mu',\nu'} < y^{\mu,\nu}$. Since these functions have total
  complexity $M^{\mu,\nu} \geq N(\frac{1}{2}-\frac{1}{2c})$ and we
  discard (at least) a $(1-\frac{1}{c})$ fraction of each function, we
  reduce the complexity by at least a fraction
  $(\frac{1}{2}-\frac{1}{2c})(1-\frac{1}{c})=(\frac{1}{2}-\frac{1}{c}+\frac{1}{2c^2})$. Note
  that for any $c \in (1,2]$ this value is strictly positive, i.e., for
  $c=2$ we would discard a fraction of $1/8$, for something like
  $c=3/2$ we would discard a $1/18$ fraction. Similarly, if we discard the
  second half of the intervals, we can discard the second halves for
  all functions for which $y^{\mu',\nu'} \geq y^{\mu,\nu}$. These
  functions have total complexity at least
  $N-N(\frac{1}{2}+\frac{1}{2c}) =
  N(\frac{1}{2}-\frac{1}{2c})$. Since again we  discard a
  $(1-\frac{1}{c})$ fraction from each function, we reduce the
  complexity by at least a $(\frac{1}{2}-\frac{1}{c}+\frac{1}{2c^2})$
  fraction.

  In the later case, we discard at least a $(1-\frac{1}{c})$ fraction
  of $\DB^{P_\nu \times P_\mu}$. Since this function has complexity at
  least $N/c$, we reduce the total complexity by at least
  $N\frac{1}{c}(1-\frac{1}{c}) = N(\frac{1}{c}-\frac{1}{c^2})$.

  In either case we thus reduce the complexity by a constant
  fraction. Hence, after a total of $O(\log N)$ rounds, there are no
  more interior breakpoints, and we can start phase two. Apart from
  finding the halving points and complexities, each such round takes
  $O(\log^3 n)$ time, since we have to evaluate
  $\DB^{P[s,t] \times P[s,t]}$ at $y^{\mu,\nu}$ using
  Lemma~\ref{lem:bisectorSubDS}.

  All that remains is to argue that we can also find the halving
  points and complexities (in at most $O(\log n)$ time per round) for
  each function. The functions $\DB^{P_\nu \times P_\nu}$ are directly
  represented using binary search trees, and hence finding a halving
  point is trivial. The functions $\DB^{P_\nu \times P_\mu}$, with
  $\nu \neq \mu$, are represented using
  Lemma~\ref{lem:persistent_sweep_ds} data structures. These
  structures are essentially also just binary search trees, hence the
  same applies here.

  Since the total complexity $N$ of all $\DB^{P_\nu \times P_\mu}$ functions
  is at most $O(n\log^2 n)$, our binary search finishes in $O(\log n)$
  rounds. The theorem now follows.
\end{proof}

By using parametric search instead of the binary search and replacing
our data structure for horizontal query segments with the one for
arbitrarily oriented query segment, we immediately get the following
result.

 \begin{theorem}\label{thm:translation_stretched_general}
    Let $P$ be a polygonal curve in $\R^2$ with $n$ vertices, let
    $k \in [1..n]$, and let $\eps >0$ be an arbitrarily small
    constant. There is an $O(nk^{3+\eps}+n^2)$ size data structure that
    can be built in $O(nk^{3+\eps}+n^2)$ time such that given a
    query slope $\alpha$ and two points $s$ and $t$
    on $P$, one can report a segment  with slope $\alpha$  that
    minimizes its \frechet distance to $P[s, t]$ in
    $O((n/k)^2\log^{4}n+\log^{8} n)$ time.
\end{theorem}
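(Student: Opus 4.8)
The plan is to mimic the construction behind Theorem~\ref{thm:frechet_stretched_translated}, replacing the explicit breakpoint search used there by one level of Megiddo's parametric search~\cite{megiddo1983parametric} on top of the arbitrary-orientation subcurve data structure of Theorem~\ref{thm:arbitrary_full_curve_ds}. Fix the query slope $\alpha$, and for an intercept $\beta$ let $\ell_\beta$ be the line with slope $\alpha$ and intercept $\beta$; the task is to pick an intercept $\beta^*$ and then a segment with slope $\alpha$ supported on $\ell_{\beta^*}$ that minimizes $\fd(P[s,t],\overline{ab})$. Conceptually rotating the plane so that lines of slope $\alpha$ become horizontal (exactly as in the proofs of Lemma~\ref{lem:db_minx} and Lemma~\ref{lem:intersect_is_opt_arbitrary_orient}) turns the one-parameter family of slope-$\alpha$ segments, ordered by $\beta$, into the family of horizontal segments ordered by height from the horizontal case. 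Hence the analogue of Lemma~\ref{lem:optimal_length}, and of the convexity and $O(n\log n)$-complexity facts stated just before Theorem~\ref{thm:frechet_stretched_translated}, carry over: at the optimal length we have $\fd(P[s,t],\overline{ab}) = g(\beta) := \max\{\dhd(P[s,t],\ell_\beta),\,\DB^{P[s,t]\times P[s,t]}(\alpha,\beta)\}$; the function $\beta\mapsto g(\beta)$ is convex; and once its minimizer $\beta^*$ and the value $z=g(\beta^*)$ are known, an optimal segment -- whose endpoints are the extreme points of $\ell_{\beta^*}$ at distance $z$ from $s$ and from $t$ -- is recovered in $O(1)$ time.

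The engine is a direction oracle: given a candidate intercept $\beta'$, decide whether $\beta^*<\beta'$, $\beta^*=\beta'$, or $\beta^*>\beta'$. As in the discussion preceding Theorem~\ref{thm:frechet_stretched_translated}, I would compute $\dhd(P[s,t],\ell_{\beta'})$ and $\DB^{P[s,t]\times P[s,t]}(\alpha,\beta')$ together with their witnesses -- the vertex of $P[s,t]$ extreme in the direction perpendicular to slope $\alpha$, and the backward pair realizing the $\DB$ term -- and test on which side of $\ell_{\beta'}$ the relevant witness (or the midpoint of the witness pair) lies; if the tight terms agree on a direction we move that way, and if they disagree we are at the minimum, with the degenerate case $z=0$ handled as before. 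The Hausdorff term and its witness come from the subcurve version of the data structure of Theorem~\ref{thm:hausdorff_term_arbitrary} (Section~\ref{sec:arb_Subcurve_Queries}) in $O(\log^2 n)$ time and $O(n^2)$ space; the $\DB$ term together with a realizing pair comes from the data structure of Theorem~\ref{thm:arbitrary_full_curve_ds} (that is, Lemma~\ref{lem:tradeoff_subcurve}, whose query routine of Lemma~\ref{lem:query_intersect} is easily augmented to also output the maximizing pair) in $T=O((n/k)\log^2 n+\log^4 n)$ time and $O(nk^{3+\eps}+n^2)$ space. So the oracle runs in $O(T)$ time, and the data structure -- exactly the one of Theorem~\ref{thm:arbitrary_full_curve_ds} together with these auxiliary structures -- has the claimed size and preprocessing time.

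It then remains to turn the direction oracle into an algorithm that finds $\beta^*$, which I would do by parametric search with the oracle's own code as the generic algorithm, simulated on the unknown value $\beta^*$. Every comparison the generic algorithm makes is either one of the tests performed inside the query of Lemma~\ref{lem:query_intersect} (``is the query point left or right of a fixed line'', or ``compare two Euclidean distances'') or the final ``is the witness above or below $\ell_\beta$''; since the position of the query segment depends affinely on $\beta$, each such test reduces to a constant-degree polynomial inequality in $\beta$, whose $O(1)$ roots we compute and feed to the real direction oracle, both resolving the comparison and shrinking the interval known to contain $\beta^*$. The generic algorithm makes $O(T)$ comparisons, each resolved in $O(T)$ time, so the search runs in $O(T^2)=O((n/k)^2\log^4 n+(n/k)\log^6 n+\log^8 n)=O((n/k)^2\log^4 n+\log^8 n)$ time, where the cross term is absorbed since $(n/k)\log^6 n=\sqrt{(n/k)^2\log^4 n}\cdot\sqrt{\log^8 n}\le\max\{(n/k)^2\log^4 n,\log^8 n\}$. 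Recovering the optimal segment from $\beta^*$ and $z$ costs $O(1)$ more time, establishing the theorem.

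I expect the main obstacle to be the bookkeeping of the parametric search, in particular verifying that the query routine of Lemma~\ref{lem:query_intersect} -- which already runs a parametric search over its own variable $x^*$ internally -- performs only bounded-degree algebraic comparisons in $\beta$, so that nesting it inside an outer parametric search over $\beta$ is legitimate and still serial. A secondary chore is the careful treatment of the corner cases of the direction oracle: several simultaneously tight terms, the $z=0$ case, and $s$ or $t$ lying in the interior of an edge rather than at a vertex. The geometric heart -- convexity of $g$ and the ``optimal length'' identity -- is inherited from the horizontal case via rotation and needs no new ideas.
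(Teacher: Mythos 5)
Your proposal is correct and matches the paper's approach: the paper's own proof of this theorem is a one-sentence remark that the result follows by substituting parametric search for the explicit binary search of Theorem~\ref{thm:frechet_stretched_translated} and replacing the horizontal-segment data structures with the arbitrary-orientation one of Theorem~\ref{thm:arbitrary_full_curve_ds} (Lemma~\ref{lem:tradeoff_subcurve}). Your fleshed-out account --- rotation to inherit the optimal-length identity and convexity, the direction oracle costing $O((n/k)\log^2 n + \log^4 n)$ per evaluation, and the cross-term absorption yielding $O((n/k)^2\log^4 n + \log^8 n)$ --- is a faithful reconstruction of that intent, and the nested-parametric-search subtlety you flag is likewise left implicit in the paper.
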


\section{Concluding Remarks}
\label{sec:Concluding_Remarks}

We presented data structures for efficiently computing the \frechet
distance of (part of) a curve to a query segment.  Our results improve
over previous work for horizontal segments and are the first for
arbitrarily oriented segments. However, we are left with the challenge
of reducing the space used for arbitrary orientations. There are two
main issues. The first issue is that even for a small interval of
query orientations (e.g., one of the $O(n^2)$ angular intervals defined
by lines through a pair of points) it is difficult to limit the number
of relevant backward pairs to $o(n^2)$. The second issue is how to
combine the backward pair distance values contributed by various
subcurves. For (low algebraic degree) univariate functions, the upper
envelope has near linear complexity, whereas for bivariate
functions the complexity is near quadratic. The combination of these issues
makes it hard to improve over the somewhat straightforward
$O(n^{4+\eps})$ space bound we build upon.

% \section{Conclusion}
% We presented data structures for efficiently %, i.e. in polylogarithmic time, 
% computing the \frechet distance of (part of) a curve to a query segment. 
% Our results improve over previous work for horizontal  segments and are the 
%  first for arbitrarily oriented segments.
% However, we are left with the challenge of reducing the space used for arbitrary orientations. The main issues are that even for a small interval of query orientations it is difficult to limit the number of relevant backward pairs to $o(n^2)$, and the upper envelope of $n$ bivariate functions may have complexity $\Omega(n^2)$.

%that in this case an upper envelope of $n$ bivariate functions may have complexity $\Omega(n^2)$. This prevents us from compactly representing the $\DB^{S \times S}$ and even the $\DB^{S \times T}$, with all points in $S$ appearing before $T$ along $P$, backward pair distance functions. 

\paragraph*{Acknowledgements.}{\small This work started at Dagstuhl
  workshop 19352, ``Computation in Low-Dimensional Geometry and
  Topology.''  We thank Dagstuhl, the organizers, and the other
  participants for a stimulating workshop.
  T. Ophelders was supported by the Dutch Research Council (NWO) under project no.\ VI.Veni.212.260.
  Research of Schlipf was supported by the Ministry of Science, Research and the Arts
  Baden-Württemberg (Germany).
  R. Silveira was partially supported by MCINN through project PID2019-104129GB-I00/MCIN/AEI/10.13039/501100011033.
}

\bibliographystyle{abbrv}
\bibliography{frechet_dss}

\begin{thebibliography}{10}

\bibitem{agarwal_ray_1992}
P.~Agarwal.
\newblock Ray {Shooting} and {Other} {Applications} of {Spanning} {Trees} with
  {Low} {Stabbing} {Number}.
\newblock {\em SIAM Journal on Computing}, 21(3):540--570, June 1992.

\bibitem{agarwal2005near}
P.~K. Agarwal, S.~Har-Peled, N.~H. Mustafa, and Y.~Wang.
\newblock Near-linear time approximation algorithms for curve simplification.
\newblock {\em Algorithmica}, 42(3-4):203--219, 2005.

\bibitem{AggarwalGSS89}
A.~Aggarwal, L.~J. Guibas, J.~B. Saxe, and P.~W. Shor.
\newblock A linear-time algorithm for computing the {V}oronoi diagram of a
  convex polygon.
\newblock {\em Discrete \& Computational Geometry}, 4:591--604, 1989.

\bibitem{alt2003matching}
H.~Alt, A.~Efrat, G.~Rote, and C.~Wenk.
\newblock Matching planar maps.
\newblock {\em Journal of Algorithms}, 49(2):262--283, 2003.

\bibitem{ag-cfdbt-95}
H.~Alt and M.~Godau.
\newblock Computing the {Fr{\'e}chet} distance between two polygonal curves.
\newblock {\em International Journal of Computational Geometry \&
  Applications}, 5:75--91, 1995.

\bibitem{alt01translationFrechet}
H.~Alt, C.~Knauer, and C.~Wenk.
\newblock Matching polygonal curves with respect to the {F}r{\'e}chet distance.
\newblock In {\em 18th Annual Symposium on Theoretical Aspects of Computer
  Science (STACS'01)}, pages 63--74, 2001.

\bibitem{halfplane_queries}
B.~Aronov, P.~Bose, E.~D. Demaine, J.~Gudmundsson, J.~Iacono, S.~Langerman, and
  M.~Smid.
\newblock Data structures for halfplane proximity queries and incremental
  {V}oronoi diagrams.
\newblock In {\em LATIN 2006: Theoretical Informatics}, pages 80--92, 2006.

\bibitem{aronov2019:nearest_neigbhbor_queries_planar_curves}
B.~Aronov, O.~Filtser, M.~Horton, M.~J. Katz, and K.~Sheikhan.
\newblock Efficient nearest-neighbor query and clustering of planar curves.
\newblock In Z.~Friggstad, J.-R. Sack, and M.~R. Salavatipour, editors, {\em
  Algorithms and Data Structures}, pages 28--42, Cham, 2019. Springer
  International Publishing.

\bibitem{rmq}
M.~A. Bender and M.~Farach-Colton.
\newblock The {LCA} problem revisited.
\newblock In G.~H. Gonnet and A.~Viola, editors, {\em LATIN 2000: Theoretical
  Informatics}, pages 88--94, 2000.

\bibitem{bringmann21polyl}
K.~Bringmann and B.~R. Chaudhury.
\newblock Polyline simplification has cubic complexity.
\newblock {\em Journal of Computational Geometry}, 11(2):94--130, 2021.

\bibitem{buchin2011detecting}
K.~Buchin, M.~Buchin, J.~Gudmundsson, M.~L{\"o}ffler, and J.~Luo.
\newblock Detecting commuting patterns by clustering subtrajectories.
\newblock {\em International Journal of Computational Geometry \&
  Applications}, 21(03):253--282, 2011.

\bibitem{buchin17four_soviet_walk_dog}
K.~Buchin, M.~Buchin, W.~Meulemans, and W.~Mulzer.
\newblock Four soviets walk the dog: Improved bounds for computing the
  fr{\'{e}}chet distance.
\newblock {\em Discret. Comput. Geom.}, 58(1):180--216, 2017.

\bibitem{buchin2013median}
K.~Buchin, M.~Buchin, M.~Van~Kreveld, M.~L{\"o}ffler, R.~I. Silveira, C.~Wenk,
  and L.~Wiratma.
\newblock Median trajectories.
\newblock {\em Algorithmica}, 66(3):595--614, 2013.

\bibitem{eurocgversion}
M.~Buchin, I.~van~der Hoog, T.~Ophelders, R.~I. Silveira, L.~Schlipf, and
  F.~Staals.
\newblock Improved space bounds for {F}r\'echet distance queries.
\newblock In {\em 36th European Workshop on Computational Geometry
  (EuroCG'20)}, 2020.

\bibitem{de1997computational}
M.~de~Berg, O.~Cheong, M.~van Kreveld, and M.~Overmars.
\newblock {\em {Computational Geometry: Algorithms and Applications}}.
\newblock Springer, 3rd edition, 2008.

\bibitem{de2013fast}
M.~De~Berg, A.~F. Cook~IV, and J.~Gudmundsson.
\newblock Fast {F}r{\'e}chet queries.
\newblock {\em Computational Geometry}, 46(6):747--755, 2013.

\bibitem{de2017data}
M.~de~Berg, A.~D. Mehrabi, and T.~Ophelders.
\newblock Data structures for {F}r{\'e}chet queries in trajectory data.
\newblock In {\em 29th Canadian Conference on Computational Geometry
  (CCCG'17)}, pages 214--219, 2017.

\bibitem{driemel13:jaywalk}
A.~Driemel and S.~Har-Peled.
\newblock Jaywalking your dog: Computing the {F}r{\'{e}}chet distance with
  shortcuts.
\newblock {\em SIAM Journal on Computing}, 42(5):1830--1866, 2013.

\bibitem{driemelpsarros2020}
A.~Driemel and I.~Psarros.
\newblock $(2+\epsilon)$-{ANN} for time series under the {F}r\'echet distance.
\newblock {\em arXiv preprint arXiv:2008.09406}, 2020.

\bibitem{edelsbrunner_optimal_1986}
H.~Edelsbrunner, L.~J. Guibas, and J.~Stolfi.
\newblock Optimal {Point} {Location} in a {Monotone} {Subdivision}.
\newblock {\em SIAM Journal on Computing}, 15(2):317--340, May 1986.

\bibitem{filtser2021static}
A.~Filtser and O.~Filtser.
\newblock Static and streaming data structures for fr{\'e}chet distance
  queries.
\newblock In {\em Proceedings of the 2021 ACM-SIAM Symposium on Discrete
  Algorithms (SODA)}, pages 1150--1170. SIAM, 2021.

\bibitem{filtser2020approximateICALP}
A.~Filtser, O.~Filtser, and M.~J. Katz.
\newblock Approximate nearest neighbor for curves -- simple, efficient, and
  deterministic.
\newblock In {\em 47th International Colloquium on Automata, Languages, and
  Programming (ICALP 2020)}, volume 168 of {\em Leibniz International
  Proceedings in Informatics (LIPIcs)}, pages 48:1--48:19, 2020.

\bibitem{godau91}
M.~Godau.
\newblock A natural metric for curves --- computing the distance for polygonal
  chains and approximation algorithms.
\newblock In {\em 8th Annual Symposium on Theoretical Aspects of Computer
  Science (STACS’91)}, pages 127--136, 1991.

\bibitem{Gudmundsson:longJournal}
J.~Gudmundsson, M.~Mirzanezhad, A.~Mohades, and C.~Wenk.
\newblock Fast {F}r{\'{e}}chet distance between curves with long edges.
\newblock {\em International Journal of Computational Geometry \&
  Applications}, 29(2):161--187, 2019.

\bibitem{gudmundssonfrechet2020}
J.~Gudmundsson, A.~van Renssen, Z.~Saeidi, and S.~Wong.
\newblock {Fr{\'e}chet} distance queries in trajectory data.
\newblock In {\em The Third Iranian Conference on Computational Geometry (ICCG
  2020)}, pages 29--32, 2020.

\bibitem{gudmundssonFrechetJournal}
J.~Gudmundsson, A.~van Renssen, Z.~Saeidi, and S.~Wong.
\newblock Translation invariant {F}r{\'{e}}chet distance queries.
\newblock {\em Algorithmica}, 83(11):3514--3533, 2021.

\bibitem{gudmundsson2021translation}
J.~Gudmundsson, A.~van Renssen, Z.~Saeidi, and S.~Wong.
\newblock Translation invariant {F}r\'echet distance queries.
\newblock {\em arXiv preprint arXiv:2102.05844}, 2021.

\bibitem{imai1998computational}
H.~Imai and M.~Iri.
\newblock Polygonal approximations of a curve – formulations and algorithms.
\newblock In {\em Computational Morphology}, pages 71--86. Elsevier Science,
  1988.

\bibitem{jiang2008protein}
M.~Jiang, Y.~Xu, and B.~Zhu.
\newblock Protein structure--structure alignment with discrete {F}r{\'e}chet
  distance.
\newblock {\em Journal of Bioinformatics and Computational Biology},
  6(01):51--64, 2008.

\bibitem{kwong1998parallel}
S.~Kwong, Q.~He, K.~Man, K.~Tang, and C.~Chau.
\newblock Parallel genetic-based hybrid pattern matching algorithm for isolated
  word recognition.
\newblock {\em International Journal of Pattern Recognition and Artificial
  Intelligence}, 12(05):573--594, 1998.

\bibitem{megiddo1983parametric}
N.~Megiddo.
\newblock Applying parallel computation algorithms in the design of serial
  algorithms.
\newblock {\em Journal of the ACM}, 30(4):852--865, 1983.

\bibitem{PapadopoulouD13}
E.~Papadopoulou and S.~K. Dey.
\newblock On the farthest line-segment {V}oronoi diagram.
\newblock {\em International Journal of Computational Geometry \&
  Applications}, 23(6):443--460, 2013.

\bibitem{sarnak86planar_point_locat_using_persis_searc_trees}
N.~Sarnak and R.~E. Tarjan.
\newblock Planar point location using persistent search trees.
\newblock {\em Communications of the ACM}, 29(7):669–679, July 1986.

\bibitem{sharir1994almost}
M.~Sharir.
\newblock Almost tight upper bounds for lower envelopes in higher dimensions.
\newblock {\em Discrete \& Computational Geometry}, 12(3):327--345, 1994.

\bibitem{SKB07}
E.~Sriraghavendra, K.~Karthik, and C.~Bhattacharyya.
\newblock Fr\'{e}chet distance based approach for searching online handwritten
  documents.
\newblock In {\em Proceedings of the Ninth International Conference on Document
  Analysis and Recognition(ICDAR '07)}, volume~01, pages 461--465, 2007.

\bibitem{kerkhof19global_curve_simpl}
M.~van~de Kerkhof, I.~Kostitsyna, M.~L{\"o}ffler, M.~Mirzanezhad, and C.~Wenk.
\newblock {Global Curve Simplification}.
\newblock In {\em 27th Annual European Symposium on Algorithms (ESA 2019)},
  volume 144 of {\em Leibniz International Proceedings in Informatics
  (LIPIcs)}, pages 67:1--67:14, 2019.

\bibitem{kreveld20hausd_frech}
M.~J. van Kreveld, M.~L{\"{o}}ffler, and L.~Wiratma.
\newblock On optimal polyline simplification using the {H}ausdorff and
  {F}r{\'{e}}chet distance.
\newblock {\em Journal of Computational Geometry}, 11(1):1--25, 2020.

\bibitem{gis-cup-18}
M.~Werner and D.~Oliver.
\newblock {ACM SIGSPATIAL GIS} cup 2017: Range queries under {Fr\'{e}chet}
  distance.
\newblock {\em SIGSPATIAL Special}, 10(1):24–27, June 2018.

\end{thebibliography}

\end{document}